\def\passOptions#1#2{\PassOptionsToPackage{#2}{#1}}
  \providecommand\BibTeX{{%
      \normalfont B\kern-0.5em{\scshape i\kern-0.25em b}\kern-0.8em\TeX}}}
\begin{document}
\title[Ties between Type Systems and Automata]{Ties between Parametrically Polymorphic Type Systems and Finite Control Automata}
\subtitle{Extended Abstract}
\author{Joseph (Yossi) Gil}
\email{yogi@cs.Technion.ac.IL}
\author{Ori Roth}
\email{soriroth@cs.Technion.ac.IL}
\affiliation{%
	\institution{The Technion}
	\city{Haifa 32000} 
	\state{Israel}
}

\begin{abstract}
  We present a correspondence and bisimulation between variants of
parametrically polymorphic type systems and variants of
finite control automata, such as FSA, PDA, tree automata and Turing
machine. Within this correspondence we show that two recent
celebrated results on automatic generation of fluent API are optimal
in certain senses, present new results on the studied type systems, formulate
open problems, and present potential software engineering applications, other
than fluent API generation, which may benefit from judicious use of type
theory.

\end{abstract}

\begin{CCSXML}
  <ccs2012>
    <concept>
      <conceptᵢd>10011007.10011006.10011008</conceptᵢd>
      <concept_desc>Software and its engineering~General programming languages</concept_desc>
      <conceptₛignificance>500</conceptₛignificance>
    </concept>
    <concept>
      <conceptᵢd>10011007.10011006.10011050.10011051</conceptᵢd>
      <concept_desc>Software and its engineering~API languages</concept_desc>
      <conceptₛignificance>500</conceptₛignificance>
    </concept>
    <concept>
      <conceptᵢd>10011007.10011006.10011008.10011024.10011025</conceptᵢd>
      <concept_desc>Software and its engineering~Polymorphism</concept_desc>
      <conceptₛignificance>300</conceptₛignificance>
    </concept>
  </ccs2012>
\end{CCSXML}

\ccsdesc[500]{Software and its engineering~General programming languages}
\ccsdesc[500]{Software and its engineering~API languages}
\ccsdesc[300]{Software and its engineering~Polymorphism}

\keywords{type systems, automata, computational complexity, fluent API}

\maketitle

\section{Introduction}
\label{section:aa}
Computational complexity of type checking is a key aspect of any type system.
Several classical results characterize this complexity in type systems where
the main type constructor is function application: Type checking in the
\emph{\textbf Simply \textbf Typed \textbf Lambda \textbf Calculus} (STLC), in
which function application is the sole type constructor, is carried out in linear
time. In the Hindley-Milner (HM) type
system~\cite{Hindley:69,Milner:78,Damas:82}, obtained by augmenting the STLC
with parametric polymorphism with \emph{unconstrained} type parameters,
type checking is harder, and was found to be deterministic exponential (DEXP) time
complete~\cite{Kfoury:1990}. However, the Girard–Reynolds type system
\cite{Girard:71,Girard:72,Reynolds:74} (System-F) which generalizes HM is
undecidable~\cite{Wells:99}.

\nomenclature[A]{HM}{Hindley-Milner (type system)}
\nomenclature[A]{STLC}{simply typed lambda calculus}
\nomenclature[A]{DEXP}{deterministic exponential}

In contrast, our work focuses in type systems where the main type constructor
is pair (or tuple), i.e., no higher order functions. This type constructor
models \emph{object based programming}, including concepts such as records,
classes and methods, but not inheritance. In particular, we investigate the
computational complexity of such systems in the presence of parametric
polymorphism, also called genericity, allowing generic classes and generic
functions.

We acknowledge significant past work on more general systems modeling the
combination of genericity with the \emph{object oriented programming} paradigm,
i.e., classes with single and even multiple inheritance. Type checking in these
is particularly challenging, since inheritance may be used to
place sub-type and super-type constraints on the parameters to generics. In
fact, Kennedy and Pierce~\citeyear{Kennedy:Pierce:07} showed that, in general,
such type systems are undecidable. Their work carefully analyzed the factors
that may lead to undecidability, and identified three decidable fragments, but
without analyzing their complexity. In fact, the presumed decidability of C＃'s
type system is a result of adopting one particular such fragment. Later,
Grigore \citeyear{Grigore:2017} proved that the combination of inheritance and
genericity in the Java type system makes its type system undecidable (after
Amin and Tate \citeyear{Amin:16} showed that is unsound).

Recent work on the practical fluent API problem, drew our attention to these
kind of systems. However, this work is mostly of theoretical nature. We present
a correspondence and bisimulation between variants of these type systems,
organized in a conceptual lattice~$𝔗$ of type systems, and variants of finite
control automata, such as FSA, PDA, tree automata and Turing machine, organized
in another conceptual lattice~$𝔄$.

\nomenclature[A]{API}{application programming interface}
\nomenclature[A]{FSA}{finite state automaton}
\nomenclature[A]{PDA}{pushdown automaton}
\nomenclature[B]{$𝔗$}{lattice of parametrically polymorphic type systems, see~\cref{table:lattice}}
\nomenclature[B]{$𝔄$}{lattice of finite control automata, see~\cref{table:automata}}

With this correspondence we determine the exact \emph{computational complexity
class} of type checking of many, but not all, type systems in~$𝔗$;
for other type systems, we provide upper and lower bounds, leaving the precise
characterizations as open problems. We also show that two celebrated results on
the fluent API problem, are optimal in certain senses.  The research also has
practical applications for language design, e.g.,
\cref{theorem:TA:capturing:TM} below shows that introducing functions whose
return type is declared by keyword \kk{auto} to C＃, would make its type system
undecidable.
\vspace{-2ex}

\subsection{Background}

Recall†{\Cref{section:fluent} gives more precise definitions and motivation for
the fluent API problem} that a \emph{fluent API generator} transforms~$ℓ$, the
\emph{formal language} that specifies the API, into~$L=L(ℓ)$, a \emph{library}
of type definitions in some \emph{target programming language}, e.g., Java,
Haskell, or {C++}. Library~$L(ℓ)$ is the \emph{fluent API library} of~$ℓ$ if an
expression~$e$ type checks (in the target language) against~$L(ℓ)$ \emph{if an
  only if} word~$w=w(e)$ is in the API language~$ℓ$,~$w(e)∈ℓ⇔\text{$e$ type
checks against~$L(ℓ)$}$. It is required that expression~$e$ is in the form of
a chain of method invocations. The word~$w=w(e)$ is obtained by enumerating, in
order, the names of methods in the chain of~$e$, e.g., a fluent API generator
for {C++} receives a language~$ℓ$ over alphabet~$Σ=❴a,b❵$, i.e.,~$ℓ⊆❴a,b❵^*$
and generates as output a set of {C++} definitions in which an expression such
as
\vspace{-1ex}
\begin{equation} \label{eq:first}
  \cc{new Begin().a().b().a().a().end()}
\vspace{-1ex}
\end{equation}
type checks if, and only if word~$abaa$ belongs in language~$ℓ$.

\nomenclature[B]{$L$}{library of type definitions in a programming language}
\nomenclature[B]{$ℓ$}{formal language}
\nomenclature[C]{$a$}{an example letter in alphabet}
\nomenclature[C]{$b$}{an example letter in alphabet}

The most recent such generator is \TypelevelLR due to Yamazaki, Nakamaru,
Ichikawa and Chiba~\citeyear{Yamazaki:2019}. \TypelevelLR compiles an \emph{LR
  language} †{\textbf Left-to-right, \textbf Rightmost derivation~\cite{Knuth:1965}}~$ℓ$ into a fluent API library~$L(ℓ)$ in either
Scala, {C++}, or, Haskell (augmented with ‟\textit{four GHC extensions:
  \texttt{MultiParamTypeClasses}, \texttt{FunctionalDep\-endencies},
  \texttt{Flex\-ible\-Instances}, and \texttt{Undecidable\-Instances}}”), but neither
  Java nor C＃.

The architecture of \TypelevelLR makes it possible to switch between different
front-ends to translate a context free grammar specification of~$ℓ$ into an
intermediate representation. Different such front-ends are SLR, LALR, LR(1)
grammar processors. Similarly to traditional multi-language compilers,
the front-ends compile the input specification into a library in \Fluent, an
intermediate language invented for this purpose; the back-ends of \TypelevelLR
translates~$L_\Fluent$ into an equivalent library~$L=L(L_\Fluent)$ in the
target languages.

\ifjournal
As shown in \cref{figure:multi} \TypelevelLR's architecture is comparable with
the architecture of traditional multi-language compilers: A multi-language
compiler revolves around a specification of an intermediate language, e.g.,
\emph{three address code}. Different front-ends compile different programming
languages into the intermediate language, and different back-ends translate the
intermediate language to different architectures.

This intermediate representation is then compiled by
different back-ends into the three target languages as per the clever
explanations of Yamazaki et al.

\begin{figure}
  \begin{tikzpicture}
	\node[draw,fill=black!10] at (0,0) (mediator) {\shortstack{Program in \\ \Fluent}};
	\node at ($(mediator)+(0,1)$) {\itshape TypelevelLR};
	
	\node (i1) [above left=of mediator] {SLR};
	\node (i2) [left=of mediator] {LALR};
	\node (i3) [below left=of mediator] {LR(1)};
	\node (o1) [above right=of mediator] {C++};
	\node (o2) [right=of mediator] {Haskell};
	\node (o3) [below right=of mediator] {Scala};
	
	\path[->] (i1) edge (mediator.north west);
	\path[->] (i2) edge (mediator.west);
	\path[->] (i3) edge (mediator.south west);
	\path[->] (mediator.north east) edge (o1);
	\path[->] (mediator.east) edge (o2);
	\path[->] (mediator.south east) edge (o3);
	
	\node[draw,fill=black!10] (tac) at (-7,0) {\shortstack{Three Address \\ Code}};
	\node (gcc) at ($(tac)+(0,1)$) {\itshape \shortstack{Multi Language \\ Compiler}};
	
	\node (c) [above left=of tac] {C};
	\node (pascal) [left=of tac] {Pascal};
	\node (ada) [below left=of tac] {Ada};
	\node (x86) [above right=of tac] {x86};
	\node (mips) [right=of tac] {\shortstack{mips \\ asm}};
	\node (arms) [below right=of tac] {\shortstack{arms \\ asm}};
	
	\path[->] (c) edge (tac.north west);
	\path[->] (pascal) edge (tac.west);
	\path[->] (ada) edge (tac.south west);
	\path[->] (tac.north east) edge (x86);
	\path[->] (tac.east) edge (mips);
	\path[->] (tac.south east) edge (arms);
\end{tikzpicture}
  \caption{Architecture of compiler from multiple programming languages to
    multiple architecture (left) and the architecture of \TypelevelLR
    (right).
}
  \label{figure:multi}
\end{figure}

Similarly, central to \TypelevelLR is a definition of the \Fluent intermediate
language: All front-ends of \TypelevelLR compile an input LR grammar~$G$ into a
library in \Fluent,~$L_{\Fluent}=L_{\Fluent}(G)$; the back-ends of \TypelevelLR
translates~$L_\Fluent$ into an equivalent library~$L=L(L_\Fluent)$ in the
target languages.
\fi
\nomenclature[B]{$G$}{context free grammar}
\nomenclature[O]{\Fluent}{intermediate language used in the implementation of \TypelevelLR}
\nomenclature[O]{\TypelevelLR}{a fluent API generator due to
Yamazaki, Nakamaru, Ichikawa and Chiba~\citeyear{Yamazaki:2019}}
\nomenclature[A]{LR}{left-to-right, right-most derivation}

\TypelevelLR strikes a sweet spot in terms of front-ends: It is a common
belief that most \emph{programming languages} are LR, so
there is no reason for a fluent API generator to support any wider class of
formal languages for the purpose of the \emph{mini programming language} of an
API. On the other hand, \TypelevelLR's client may tune down the generality of
\TypelevelLR, by selecting the most efficient front-end for the grammar of the
particular language of the fluent API.

We show that in terms of computational complexity, \TypelevelLR strikes another
sweet spot in selecting \Fluent, specifically, that~$\Fluent=\|LR|$
(\cref{theorem:fluent}). Equality here is understood in terms of classes of
computational complexity, i.e., for every set of definitions~$L$ in \Fluent,
there exists an equivalent formal language~$ℓ=ℓ(L)∈\|LR|$, and for
every~$ℓ∈\|LR|$ there exists equivalent library~$L=L(ℓ)$. Also term \Fluent in
the equality refers to the computational complexity class defined by the
\Fluent language. This abuse of notation is freely used henceforth.

Is there a similar sweet spot in the back-ends of \TypelevelLR? Why didn't
\TypelevelLR include neither \Fluent-into-Java nor \Fluent
into-C＃ back-ends? And, why were these GHC extensions selected and
not others? These and similar questions made the practical motivation for this
work.
\vspace{-2ex}
\nomenclature[A]{GHC}{Glasgow Haskell Compiler}

\subsection{A Taxonomy of Parametric Polymorphism}

\nomenclature[B]{$T$}{a type system in lattice~$𝔗$}
It follows from \cref{theorem:fluent} that \Fluent can be compiled into a type
system~$T$ only if~$T$ is (computationally wise) sufficiently
\emph{expressive}, i.e.,~$\|LR|⊆T$. But which are the \emph{features} of~$T$
that make it this expressive?
Motivated by questions such as theses, we offer in \cref{section:lattice} a
taxonomy, reminiscent of~$λ$-cube~\cite{Barendregt:91}, for the classification
of parametrically polymorphic type systems. The difference is that~$λ$-cube is
concerned with parametric polymorphism where the main type constructor is
function application; our taxonomy classifies type system built around the
pairing type constructor, as found in traditional imperative and object
oriented languages.

The taxonomy is a partially ordered set, specifically a lattice,~$𝔗$ of points
spanned by six, mostly orthogonal \emph{characteristics}. (See
\cref{table:lattice} below.) A point~$T∈𝔗$ is a combination of features (values
of a characteristic) that specify a type system, e.g., \Fluent is defined by
combination of three non-default features, \textsl{monadic-polymorphism},
\textsl{deep-argument-type}, and \textsl{rudi\-mentary-typeof} of three
characteristics; features of the three other characteristics take their default
(lowest) value, \textsl{linear-patterns}, \textsl{unary-functions}, and,
\textsl{one-type}.

We say that~$T₁$ is less \emph{potent} than~$T₂$, if~$T₁$ is strictly
smaller in the lattice order than~$T₂$, i.e., any program (including type
definitions and optionally an expression to type check) of~$T₁$ is also a
program of~$T₂$. In writing~$T₁=T₂$ ($T₁⊊T₂$) we mean that the computational
complexity class of~$T₁$ is the same as (strictly contained in) that of~$T₂$.
\vspace{0ex}

\paragraph{The~\PP Type System.}
\newcommand\GR{G＆R\xspace}
\nomenclature[A]{\GR}{Gil and Roth \citeyear{Gil:2019}}
\nomenclature[O]{\Fling}{a fluent API generator contributed by \GR}
We employ~$𝔗$ to analyze \Fling, yet another API
generator~\cite{Gil:2019} (henceforth \GR), capable of producing output for
Java and C＃. Although \Fling does not explicitly define an intermediate
language equivalent to \Fluent, type definitions produced by \Fling belong to a
very distinct fragment of type systems of Java and C＃, which \GR call
‟\emph{unbounded unspecialized parametric polymorphism}”, and we call
\PP†{read ‟plain parametric polymorphism”, or, ‟polyadic parametric
polymorphism” here} here.
\nomenclature[A]{\PP}{plain parametric polymorphism, or,
polyadic parametric polymorphism,~\eq{PP} and \cref{figure:PP}}

In plain words,~$\PP$ refers to a type system in which genericity occurs only in
no-parameters methods occurring in generic classes (or interfaces) that take
one or more unconstrained type arguments, as in, e.g., \cref{example:pp}%
†{Although not intended to be executable, Java (and {C++}) code in this
  examples can be copied and pasted as is (including Unicode characters such
  as~$γ$) into-, and then compiled on- contemporary IDEs. 
  Exceptions are expressions included for demonstrating
  type checking failure.%
  \nomenclature[A]{IDE}{interactive development environment}%
}.
In terms of lattice~$𝔗$, type system \PP is defined by feature
\textsl{polyadic-parametric-polymorphism} of the ‟\emph{number of type
arguments"} characteristics (and default, least-potent feature value of all
other characteristics). 

\begin{code}[caption={An example non-sense program in type system \PP},
  label={example:pp}]
class Program {¢¢
  // Type definitions
  interface¢¢ γ1 {} interface¢ ¢ γ2 {}
  interface¢¢ γ3<x1,x2> {¢¢ x1 a(); x2 b();γ4<γ2,γ3<x1,x2>> c(); }
  interface¢¢ γ4<x1,x2> {¢¢γ4<x2,x1> a();γ4<γ3<x1,x2>,γ3<x2,x1>> b();γ3<x1,x2> c(); }
  {¢¢// Initializer with expression(s) to check.
    ((γ3<γ1,γ2>) null).c().a().b().b().a(); // Type checks
    ((γ3<γ1,γ2>) null).c().a().b().b().a(); // Type check error
  }
}
\end{code}
We prove that~$\PP=\|DCFL|$ (\cref{theorem:pp:dcfl}),
i.e., computational complexity of~\PP is the same as \Fluent. Further, we will
see that type systems less potent than \PP reduce its computational complexity.
Other results (e.g., \cref{theorem:PP:deep} and \cref{theorem:PP:nonlinear})
show that making it more potent would have increased
its computational complexity.

\paragraph{Combining Theory and Practice: \Fling+\TypelevelLR architecture.}

As Yamazaki et al.~noticed, translating of \Fluent into mainstream programming
language is not immediate. Curiously, the type systems of all target languages
of \TypelevelLR are undecidable. However, it follows from \cref{theorem:fluent}
that the target language, from the theoretical complexity perspective, is only
required to be at least as expressive as DCFL, as is the case in language
such as Java, ML, (vanilla) Haskell, and C＃. 

\nomenclature[A]{ML}{the ML (‟meta-language”) programming language}

To bring theory into practice, notice that all these languages contain the \PP
type system. We envision a software artifact, the whose architecture combines
\TypelevelLR and \Fling, making it possible to compile of the variety of LR
grammars processors into any programming language which supports code such as
\cref{example:pp}. Front ends of this ‟ultimate fluent API generator” are the
same as \TypelevelLR. However, instead of directly translating \Fluent
introduce a (rather straightforward) implementation, e.g., in Java, of the
algorithm behind the proof of \cref{theorem:fluent}, plugging it as a back end
of \TypelevelLR. Concretely, the artifact compiles \Fluent into a
specification of a DPDA (deterministic pushdown automaton) as in the said
proof. We then invoke (a component of) \Fling to translate the DPDA
specification into a set of \PP type definitions. The back-ends of \Fling are
then employed to translate these type definitions to the desired target
programming language.

{

\textbf{Outline.} \slshape
\Cref{section:automata} presents the lattice~$𝔄$ of finite control automata on
strings and trees, ranging from FSAs to Turing machines, and reminds the reader
of the computational complexity classes of automata in this lattice, e.g., in
terms of families of formal languages in the Chomsky hierarchy. The lattice of
parametrically polymorphic type systems~$𝔗$ is then presented in
\cref{section:lattice}. The presentation makes clear bisimulation between the
runs of certain automata and type checking in certain type systems, whereby
obtaining complexity results of these type systems. 

\Cref{section:polymorphism}, concentrating on parallels between real-time
automata and type systems, derives further complexity results.  In particular,
this section shows that \Fling is optimal in the sense that no wider class of
formal languages is used by \PP, the type system it uses. Non real-time
automata, and their relations to type systems which admit the \kk{typeof}
keywords are the subject of \cref{section:capture}. In particular, this section
sets the computational complexity of \Fluent and several variants. 
\Cref{section:overloading} then turns to discussing the ties between
non-deterministic automata and type systems that allow an expression
to have multiple types.

\Cref{section:zz} concludes in a discussion, open problems and directions for
future research. 

While reading this paper, readers should notice extensive overloading of notation,
made in attempt to highlight the ties between automata and type
systems.  The list of symbols in \cref{section:symbols} should help in
obtaining a full grasp of this overloading.
Appendices also include some of the more technical proofs and other
supplementary material.
}

\section{Finite Control Automata}
\label{section:automata}
This section presents a unifying framework of finite control automata and
formal languages, intended to establish common terminology and foundation for
the description in the forthcoming \cref{section:lattice} of parametrically
polymorphic type systems and their correspondence to automata.

Definitions here are largely self contained, but the discussion is brief; it is
tacitly assumed that the reader is familiar with fundamental concepts of
automata and formal languages, which we only re-present here.

\subsection{Characteristics of Finite Control Automata}

We think of automata of finite control as organized in a conceptual
lattice~$𝔄$. The lattice (strictly speaking, a \emph{Boolean algebrea}) is
spanned by seven (mostly) orthogonal \emph{characteristics}, such as the kind
of input that an automaton expects, the kind of auxiliary storage it may use,
etc. Overall, lattice~$𝔄$ includes automata ranging from finite state automata
to Turing machines, going through most automata studied in the classics of
automata and formal languages (see, e.g., Hopcroft, Motwani and Ullman
\citeyear{Hopcroft:Motwani:Ullman:07}).

Concretely, \cref{table:automata} defines lattice~$𝔄$, by row-wise enumeration
of the said characteristics and the values that each may take. We call these
values \emph{properties} of the lattice.
\ifjournal
In total, the seven characteristics
offer~$2+6+2+2+3+2+2=19$ different automata properties.
\fi

\begin{wraptable}{r}{38ex}
  \def\arraystretch{1.01}
  \scriptsize
  \rowcolors{2}{gray!25}{yellow!25}
  \begin{tabular}{>{\raggedright}p{19ex}>{\slshape}p{30ex}}
    \toprule
     \bfseries Characteristic & \bfseries Values in increasing potence ⏎
    \midrule
     No\@. states \newline (\cref{stateless}) & 
          \begin{celllist}
            \item stateless
            \item stateful
          \end{celllist} ⏎
     Aux\@. storage \newline (\cref{section:auxiliary:storage})
       & 
          \begin{celllist}
            \item no-store
            \item pushdown-store
            \item tree-store
            \item linearly-bounded-tape-store
            \item unbounded-tape-store
          \end{celllist} ⏎
           Recognizer kind (\cref{language:recognizer}, \cref{forest:recognizer}).
       & 
          \begin{celllist}
            \item language-recognizer
            \item forest-recognizer
          \end{celllist} ⏎
     $𝜀$-transitions \newline (\cref{real-time})
       & 
          \begin{celllist}
            \item real-time
            \item~$𝜀$-transitions
          \end{celllist} ⏎
     Determinism \newline (\cref{deterministic})
       & 
          \begin{celllist}
            \item deterministic
            \item deterministic-at-end
            \item non-deterministic
          \end{celllist} ⏎
     Rewrite multiplicity (\cref{linear-rewrite})
       & 
          \begin{celllist}
            \item linear-rewrite
            \item non-linear-rewrite
          \end{celllist} ⏎
      Rewrite depth \newline (\cref{shallow-rewrite})
       & 
          \begin{celllist}
            \item shallow-rewrite
            \item deep-rewrite
          \end{celllist} ⏎
    \bottomrule
  \end{tabular}
  \caption{Seven characteristics and 18 properties spanning lattice~$𝔄$ of finite control automata}
    \label{table:automata}
    \vspace{-4em}
\end{wraptable}

Values of a certain characteristics are mutually exclusive: For example, the
first row in the table states that the first characteristic, \emph{number of
states}, can be either \emph{stateless} (the finite control of the automata does
not include any internal states) or \emph{stateful} (finite control may depend
on and update an internal state). An automaton cannot be both stateful and
stateless.

An automaton in~$𝔄$ is specified by selecting a value for each of the
characteristics.
\ifjournal
Overall, there are~$2·6·2·2·3·2·2=576$ different combinations of properties.
However, not all combinations make lattice points. For example, rewrites only
make sense in the presence of auxiliary-storage, so, in the Boolean algebra~$𝔄$,
\begin{equation}\label{rewrites}
  \textsl{no-store}→\textsl{linear-rewrite} ∧ \textsl{shallow-rewrite}.
\end{equation}
We use prefixes of property names when space is scarce and no ambiguity can arise,
e.g., \[
  \textsl{no-store}→\textsl{linear}∧\textsl{shallow}
\] is a succinct version of \cref{rewrites}.
\else
\fi

\ifjournal Characteristics are listed in the table in somewhat arbitrary
order. However, the properties in each characteristic are enumerated in
increasing potence order.\else
The table enumerates properties in each
characteristic in increasing potence order.
\fi
For example, in ``number of states'' characteristic, \textsl{stateful} automata
are more \emph{potent} than stateless automata, in the sense that any
computation carried out by~$A$, a certain stateless
automaton 
in~$𝔄$, can be carried out by automaton~$A'∈𝔄$, where the only
difference between~$A$ and~$A'$ is that~$A'$ is \textsl{stateful}.

Each automaton in the lattice might be fully specified as a set of seven
properties \A{$p₁,…,p₇$}. In the abbreviated notation we use, a property of a
certain characteristic is mentioned only if it is not the weakest (least
potent) in this characteristic. For example, the notation ‟\A{}” is short for
the automaton with least-potent property of all characteristics,
\nomenclature[C]{$pᵢ$}{value of the lattice property~$i$}
\begin{equation}
  \begin{aligned}
    &𝔄_⊥=\A{}=⟨\textsl{stateless},
  \textsl{no-store},
  \textsl{language},
  \textsl{real-time},
  \textsl{determ},
  \textsl{shallow},
  \textsl{linear}⟩,
\end{aligned}
\nomenclature[B]{$𝔄_⊥$}{bottom of lattice~$𝔄$\nomrefeq}
\end{equation}
the \emph{bottom} of lattice~$𝔄$. \Cref{table:common:automata} offers
additional shorthand using acronyms of familiar kinds of automata and their
mapping to lattice points. For example, the second table row maps FSAs to
lattice point~$⟨\textsl{stateful}, \textsl{non-deter}⟩$.

\begin{table}[ht]
\def\B{\textbf}
\def\arraystretch{1.5}
\rowcolors{2}{gray!25}{yellow!25}
\scriptsize
\begin{threeparttable}
\begin{tabularx}\textwidth{p{32ex}lXl}
  \toprule
  \hiderowcolors
  \bfseries Acronym &
  \bfseries Common name &
  \bfseries Lattice point &
  \bfseries Complexity\tnote{1} ⏎
  \midrule
  \showrowcolors

  \B Deterministic \B Finite \B State \newline \B Automaton &
  DFSA &
  \A{stateful} &
  REG ⏎
   \B Finite \B State \B Automaton &
  FSA &
  $\A[DFSA]{non-deterministic}=\A{stateful, non-deterministic}$ &
  REG\tnote{2} ⏎

  \B Stateless \B Real-time \B Deterministic \newline \B Push\B Down \B Automaton &
  SRDPDA &
  $⟨\textsl{pushdown, stateless,~$𝜀$-transitions, non-deterministic,}$
  \newline\mbox\qquad~$\textsl{shallow, linear}⟩$ &
  $⊊$ DCFL ⏎

  \B Real-time \B Deterministic \newline \B Push\B Down \B Automaton &
  RDPDA &
  \A[SRDPDA]{stateful}=$⟨\textsl{pushdown, stateful,~$𝜀$-transitions,}$
  \newline\mbox\qquad~$\textsl{non-deterministic, shallow, linear}⟩$ &
  $⊊$ DCFL\tnote{3} ⏎

  \B Deterministic \B Push\B Down \newline \B Automaton &
  DPDA &
  $\A[RDPDA]{$𝜀$-transitions}=⟨\textsl{pushdown, stateful,~$𝜀$-transitions,}$
  \newline\mbox\qquad~$\textsl{deterministic, shallow, linear}⟩$ &
  DCFL\tnote{4} ⏎

  \B Tree \B Automaton &
  TA &
  \A{tree-store, stateless, real-time, shallow, linear} &
  DCFL ⏎

  \B Push\B Down \B Automaton & PDA &
  \A[DPDA]{non-deterministic}=$⟨\textsl{pushdown, stateful,~$𝜀$-transitions,}$
  \newline\mbox\qquad~$\textsl{non-deterministic, shallow, linear}⟩$ &
  CFL\tnote{5} ⏎

  \B Real-time \B Turing \B Machine &
  RTM &
  $⟨\textsl{linearly-bounded-tape, stateful, real-time, deterministic,}$
  \newline\mbox\qquad~$\textsl{shallow, linear}⟩$ &
  $⊊$ CSL ⏎

  \B Linear \B Bounded \B Automaton &
  LBA &
  \A[FSA]{linearly-bounded-tape, shallow, linear}=
  \newline\mbox\qquad~$⟨\textsl{linearly-bounded-tape, deterministic, shallow, linear}⟩$ &
  CSL ⏎


  \B Turing \B Machine &
  TM &
  \A[LBA]{unbounded-tape}=$⟨\textsl{unbounded-tape, stateful,}$
  \newline\mbox\qquad~$\textsl{deterministic, shallow, linear}⟩$ &
  RE\tnote{6} ⏎
  \bottomrule
\end{tabularx}
\begin{tablenotes}
\item[1]~$\|REG|⊊\|DCFL|⊊\|CFL|⊊\|CSL|⊊\|PR|⊊\|R|⊊\|RE|$
\item[2]~$\A[FSA]{$𝜀$-transitions, non-deterministic}=\|FSA|$
\item[3]~\cite[Example 5.3]{Autebert:1997}
\item[4]~$\A[DPDA]{deep}=\|DPDA|$
\item[5]~$\A[PDA]{deep}=\|PDA|$
\item[6]~$\A[TM]{non-deterministic}=\|TM|$
\end{tablenotes}
\end{threeparttable}
\caption{Selected automata in the lattice~$𝔄$ and their computational complexity classes}
\label{table:common:automata}
  \nomenclature[A]{DFSA}{deterministic finite state automaton}
  \nomenclature[A]{REG}{the set of regular languages}
  \nomenclature[A]{SRDPDA}{stateless real-time deterministic pushdown automaton}
  \nomenclature[A]{DCFL}{deterministic context free language}
  \nomenclature[A]{RDPDA}{real-time deterministic pushdown automaton}
  \nomenclature[A]{TA}{tree automaton, i.e., an automaton employing a tree store}
  \nomenclature[A]{CFL}{context free language}
  \nomenclature[A]{CSL}{context sensitive language}
  \nomenclature[A]{RTM}{real-time Turing machine}
  \nomenclature[A]{LBA}{linear bounded automaton}
  \vspace{-4ex}
\end{table}

Observe that just as the term \emph{pushdown automaton} refers to an automaton that
employs a pushdown auxiliary storage, we use the term
\emph{tree automaton} for an automaton that employs a tree auxiliary storage.
Some authors use the term for automata that receive hierarchical tree rather
than string as input. In our vocabulary, the distinction is found in
the \textsl{language-recognizer} vs\@. \textsl{forest-recognizer} properties of
the ‟recognizer kind” characteristic.

The final column of \cref{table:common:automata} also specifies the
computational complexity class of the automaton defined in the respective row.
In certain cases, this class is a set of formal languages found in the Chomsky
hierarchy. From the first two rows of the table we learn that even though DFSAs
are less potent than FSAs, they are able to recognize exactly the same set of
formal languages, namely the set of regular languages denoted REG. By writing,
e.g.,~$\|DFSA|=\|FSA|=\|REG|$, we employ the convention of identifying
an automaton in the lattice by its computational complexity class. We notice
that a less potent automaton is not necessarily computationally weaker.

\subsection{Finite Control Automata for Language Recognition}
\label{section:language:recognizer}
As usual, let~$Σ$ be a finite \emph{alphabet}, and let~$Σ^*$ denote the set of
all strings (also called words) over~$Σ$, including~$𝜀$, the empty string. A (formal)
\emph{language}~$ℓ$ is a (typically infinite) set of such strings, i.e.,~$ℓ⊆Σ^*$.
\nomenclature[G]{$Σ$}{finite alphabet of symbols}
\nomenclature[G]{$Σ^*$}{set of all strings (words) over~$Σ$, including the empty string}
\nomenclature[H]{$𝜀$}{the empty string}
\nomenclature[c]{$ℓ$}{formal language of strings,~$ℓ⊆Σ^*$}

\begin{definition}\label{language:recognizer}
A \emph{recognizer of language~$ℓ⊆Σ^*$} is a device that takes as
input a word~$w∈Σ^*$ and determines whether~$w∈ℓ$.
\end{definition}
\nomenclature[c]{$w$}{the input word to language recognizer}

Let~$A$ be a finite control automata for language recognition. (Automata for
recognizing forests are discussed below in \cref{section:forest}.) Then,~$A$ is
specified by four finitely described components: states, storage,
consuming transition function, and~$𝜀$-transition function:
\nomenclature[B]{$A$}{a finite control automaton}

\begin{enumerate}
  \item \emph{States.} The specification of these includes \1~a finite set~$Q$
    of \emph{internal states} (or \emph{states}), \2~a designated \emph{initial
      state}~$q₀∈Q$, and, \3~a set~$F⊆Q$ of \emph{accepting states}.
      \nomenclature[B]{$Q$}{the set of internal states of a finite control automaton}
      \nomenclature[c]{$q₀$}{the initial internal state of a finite control automaton}
      \nomenclature[B]{$F$}{the set of accepting states in a finite control automaton}

    \begin{definition}\label{stateless}
      $A$ is \textsl{stateful} if~$|Q|>1$; it is \textsl{stateless}
      if~$|Q|=1$, in which case~$F=Q=❴q₀❵$.
    \end{definition}

  \item \emph{Storage.} Unlike internal state, the amount of data in auxiliary
    storage is input dependent, hence unbounded. The pieces of information
    that can be stored is specified as a finite alphabet~$Γ$ of \emph{storage
    symbols}, which is not necessarily disjoint from~$Σ$.
    \nomenclature[G]{$Γ$}{alphabet of symbols used in auxiliary storage}

    The organization of these symbols depends on the \textsl{auxiliary-storage}
    characteristic of~$A$: In \textsl{pushdown-store} automata,~$Γ$ is known as
    the set of \emph{stack symbols}, and the storage layout is sequential. In
    \textsl{tree-store} automata, the organization is hierarchical and~$Γ$ is a
    ranked-alphabet. In tape automata,~$Γ$ is called the set of \emph{tape symbols},
    and they are laid out sequentially in a uni-directional tape.

    Let~$𝚪$ denote the set of possible contents of the auxiliary storage. In
    pushdown automata~$𝚪=Γ^*$; in tape automata, the storage contents includes
    the position of the head: Specifically, in \textsl{unbounded-tape-store}
    (employed by Turing machines),~$𝚪=ℕ×Γ^*$. We set~$𝚪=ℕ×Γ^*$ also
    for the case of linearly bounded automata.
\ifjournal
, and for Turing machine
    restricted to bounded a tape whose size does not increase faster than a
    primitive recursive function (\textsl{linearly-bounded-tape-store}
    automaton), even though a more precise definition is possible.
\fi
    For \textsl{tree-store} automata,~$𝚪=Γ^▵$, where~$Γ^▵$ is defined below as
    the set of trees whose internal nodes are drawn from~$Γ$.
    
    \nomenclature[G]{$𝚪$}{set of possible contents of auxiliary storage}
    \nomenclature[B]{$ℕ$}{the set of non-negative integers~$❴0,1,2,…❵$}

  \begin{definition}\label{ID} An \emph{\textbf Instantaneous \textbf
    Description} (ID, often denoted~$ι$) of~$A$ running on input word~$w∈Σ^*$
    includes three components: \1~a string~$u∈Σ^*$,~where~$u$ is a suffix
    of~$w$, specifying the \emph{remainder of input} to read; \2~the
    \emph{current state}~$q∈Q$, and, \3~$𝛄∈𝚪$, the current contents of the
    auxiliary storage.
    \nomenclature[c]{$q$}{a state of a finite control automaton}
  \end{definition}
  \nomenclature[c]{$u$}{the word denoting the remainder of input to a language recognizer}
  \nomenclature[H]{$ι$}{an instantaneous description of a finite control automaton, see~\cref{ID}}
  \nomenclature[H]{$𝛄$}{entire contents of auxiliary storage}

  The auxiliary storage is initialized by a designated value~$𝛄₀∈𝚪$. Any
  \emph{run} of~$A$ on input~$w∈Σ^*$ begins with ID~$ι₀=⟨w,q₀,𝛄₀⟩$, and then
  proceeds as dictated by the transitions functions.
  \nomenclature[H]{$ι₀$}{initial instantaneous description of a finite control automaton}
  \nomenclature[H]{$𝛄₀$}{initial contents of auxiliary storage}

    \begin{definition}\label{no-store}
      $A$ is \textsl{no-store} if~$|Γ|=0$, in which case~$𝚪$ is
      degenerate,~$𝚪=❴𝛄₀❵$.
    \end{definition}

    \item \emph{Consuming transition function.}
      Denoted by~$δ$, this \emph{partial}, possibly \emph{multi-valued} function,
      defines how~$A$ proceeds from a certain ID to a subsequent ID in response
      to a consumption of a single input letter.
    \begin{itemize}
        \item Function~$δ$ depends on \1~$σ∈Σ$, the \emph{current input
          symbol}, being the first letter of~$u$, i.e.,~$u=σu'$,~$u'∈Σ^*$
          \2~$q∈Q$ the current state, and, \3~$𝛄∈𝚪$,
          the current contents of the auxiliary storage.
          \nomenclature[H]{$σ$}{a letter in alphabet~$Σ$}
          \nomenclature[H]{$δ$}{the consuming transition function of a finite control automaton}
        \item Given these,~$δ$ returns a new internal state~$q'∈Q$ and the new
          storage contents~$𝛄'$ for the subsequent ID\@. The ‟remaining input”
          component of the subsequent ID is set to~$u'$.
    \end{itemize}

   \item \emph{$𝜀$-transition function.}
    A partial, multi-valued function~$ξ$ specifies how~$A$ moves from a certain
    ID to a subsequent ID, \emph{without} consuming any input. Function~$ξ$
    depends on the current state~$q∈Q$ and~$𝛄$, the storage's contents,
    \emph{but not} on the current input symbol. Just like~$δ$, function~$ξ$
    returns a new internal state~$q'∈Q$ and storage contents~$𝛄'$ for the
    subsequent ID\@. However, the remaining input component of IDs is unchanged
    by~$ξ$.
    \nomenclature[H]{$ξ$}{the~$𝜀$-transition function of a finite control automaton}

 \end{enumerate}

Automaton~$A$ \emph{accepts}~$w$ if there exists a run~$ι₀,ι₁,…,ιₘ$, that
begins with the initial ID~$ι₀=⟨w,q₀,𝛄₀⟩$ and ends with an ID~$ιₘ=⟨𝜀,q,α⟩$ in 
which all the input was consumed, the internal state~$q$ is accepting, i.e.,~$q∈F$,
and no further~$𝜀$-transitions are possible, i.e.,~$ξ(ιₘ)$ is not defined.

On each input letter, automaton~$A$ carries one transition defined by~$δ$,
followed by any number of~$𝜀$-transitions defined by~$ξ$, including none at
all. A real-time automaton is one which carries precisely one transition for
each input symbol.

\begin{definition}\label{real-time}
$A$ is \textsl{real-time} if there is no id~$ι$ for which~$ξ(ι)$ is defined.
\end{definition}

Real-time and non-real-time automata are, by the above definitions,
non-deterministic. Since both~$δ$ and~$ξ$ are multi-valued, an ID does not
uniquely determine the subsequent ID\@.

\begin{definition}\label{deterministic}
$A$ is \textsl{deterministic} if \1~partial functions~$ξ$ and~$δ$ are single
valued, and, \2~there is no ID~$ι$ for which both~$ξ(ι)$ and~$δ(ι)$ are
defined.
\end{definition}

Both deterministic and non-deterministic automata may \emph{hang}, i.e., they
might reach an ID~$ι$ for which neither~$ξ(ι)$ nor~$δ(ι)$ are defined. If
\emph{all} runs of a non-deterministic automaton~$A$ on a given input~$w$
either hang or reach a non-accepting state,~$A$ rejects~$w$. Alternatively, if
the \emph{the only run} of a deterministic automaton~$A$ on~$w$ hangs,
automaton~$A$ rejects~$w$. Hanging is the only way a \textsl{stateless}
automaton can reject. A stateful automaton rejects also in the case it reaches
a non-accepting state~$q∈Q∖F$ after consuming all input.

\subsection{Rewrites of Auxiliary Storage}
\label{section:auxiliary:storage}

Since functions~$δ$ and~$ξ$ are finitely described, they are specified as two
finite sets,~$Δ$ and~$Ξ$ of input-to-output items, e.g., the requirement in
\cref{real-time} can be written as~$|Ξ|=0$. Since the transformation of
auxiliary storage~$𝛄$ to~$𝛄'$ by these functions must be finitely described,
only a bounded portion of~$𝛄$ can be examined by~$A$.
The transformation~$𝛄$ to~$𝛄'$, what we call \emph{rewrite of auxiliary
storage}, must be finitely described in terms of this portion.
\nomenclature[G]{$Δ$}{set of input-output items of consuming transition function~$δ$}
\nomenclature[G]{$Ξ$}{set of input-output items of~$𝜀$-transition function~$ξ$}

\paragraph{Tape initialization, rewrite, and head overflow.}
The literature often defines tape automata with no consuming transitions,
by making the assumption that they receive their input on the tape store which
allows bidirectional movements. Our lattice~$𝔄$ specifies that the input
word~$w$ is consumed one letter at a time. No generality is lost, since with
the following definitions tape automaton~$A∈𝔘$ may begin its run by consuming
the input while copying it to the tape, and only then process it with as
many~$𝜀$-transitions are necessary.

The contents~$𝛄$ of tape auxiliary storage is a pair~$(h,γ₀γ₁⋯γ_{m-1})$, where
integer~$h≥0$ is the head's position and~$γ₀γ₁⋯γ_{m-1}∈Γ^*$ is the tape's
content. Let~$𝛄₀=(𝜀,0)$, i.e., the tape is initially empty and the head is at
location~$0$. Rewrites of tape are the standard read and replace of 
symbol-under-head, along with the move-left and move-right instructions to the
head: Tape rewrite~$γ→γ'₊$ (respectively, tape rewrite~$γ→γ₋'$) means that
if~$γₕ=γ$ then replace it with, not necessarily distinct, symbol~$γ'∈Γ$ and
increment (respectively, decrement)~$h$. A third kind of rewrite is~$⊥→γ$, which
means that if the current cell is undefined, i.e.,~$h∉❴0,…,m-1❵$, replace it
with~$γ∈Γ$.
\nomenclature[c]{$h$}{position of the read/write head on tape auxiliary storage}

The automaton hangs if~$h$ becomes negative, or if~$h$ exceeds~$n$, the input's
length, in the case of a linear bounded automaton.
\ifjournal
†{We omit the very elaborate
definition of head overflow in the case that~$A$ is
\textsl{primitive-recursive}.}
\fi
\nomenclature[c]{$n$}{length of word input to finite control automaton}

\paragraph{Rewrites of a pushdown.}
Rewrites of a pushdown auxiliary storage are the usual push and pop
operations; we will see that these can be regarded as tree rewrites.

\paragraph{Trees.}
A finite alphabet~$Γ$ is a \emph{signature} if each~$γ∈Γ$ is associated with a
integer~$r=r(γ)≥1$ (also called
\emph{arity})%
\ifjournal
†{To make the material more accessible, the definitions here are
  repeated, expanded and
exemplified in \cref{definitions:trees}.}. 
\fi.
A \emph{tree} over~$Γ$ is either a
\emph{leaf}, denoted by the symbol~$𝛜$, or a (finite) structure in the
form~$γ(𝐭)$, where~$γ∈Γ$ of arity~$r$ is the \emph{root} and~$𝐭=t₁,…,tᵣ$ is a
\emph{multi-tree}, i.e., a sequence of~$r$ (inductively constructed) trees
over~$Γ$. Let~$Γ^▵$ denote the set of trees over~$Γ$.
\nomenclature[G]{$Γ^▵$}{set of all trees over signature~$Γ$}
\nomenclature[H]{$𝛜$}{degenerate tree, also denoting a leaf in any
tree in~$Γ^▵$}
\nomenclature[c]{$r$}{rank/number of children in a node of a tree in~$Γ^▵$}
\nomenclature[c]{$r(γ)$}{rank of symbol~$γ$ drawn from a signature}

\ifjournal
For tree~$t∈Γ^▵$ let
\begin{equation}\label{tree:depth}
  \Depth(t)=\begin{cases}
    t=𝛜& 0⏎
    t=γ(t₁,…,tᵣ) & \displaystyle 1+\max_{1≤i≤r}❨\Depth(tᵢ)❩.
  \end{cases}
\end{equation}
\else
Let~$\Depth(t)$ be the depth of tree~$t$ (for leaves let,~$\Depth(𝛜)=0$).
\fi
We shall use freely a \emph{monadic tree abbreviation} by which
tree~$γ₂(γ₁(𝛜),γ₁(γ₁(𝛜)))$ is written as~$γ₂(γ₁,γ₁γ₁)$, and
tree~$γ₁(γ₂(⋯(γₙ(𝛜))))$ is written as~$γ₁γ₂⋯γₙ$. If the rank of all~$γ∈Γ$ is 1,
then~$Γ^▵$ is essentially the set~$Γ^*$, and every tree~$t∈Γ^▵$ can be viewed
as a stack whose top is the root of~$t$ and depth is~$\Depth(t)$.
\nomenclature[c]{$t$}{a tree in~$Γ^▵$}
\nomenclature[O]{$\Depth(t)$}{depth of tree~$t∈Γ^▵$,~$\Depth(𝛜)=0$}

In this perspective, a pushdown automaton is a tree automaton in which the
auxiliary tree is monadic. We set~$𝛄₀$ in tree automata to the leaf~$𝛜∈Γ^▵$,
i.e., the special case pushdown automaton starts with an empty stack.

\paragraph{Terms.}
Let~$X=❴x₁,x₂,…❵$ be an unbounded set of \emph{variables} disjoint to all
alphabets. Then, a \emph{pattern} (also called \emph{term})
over~$Γ$ is either some variable~$x∈X$ or a structure in the form~$γ(τ₁,…,τᵣ)$,
where the arity of~$γ∈Γ$ is~$r$ and each of~$τ₁,…,τᵣ$ is, recursively, a
term over~$Γ$. Let~$Γ⧋$ denote the set of terms over~$Γ$. Thus,~$Γ^▵⊊Γ⧋$,
i.e., all trees are terms. Trees are also called \emph{grounded terms};
\emph{ungrounded terms} are members of~$Γ⧋∖Γ^▵$. A term is \emph{linear} if
no~$x∈X$ occurs in it more than once, e.g., $γ(x,x)$ is not linear
while~$γ(x₁,γ(x₂,x₃))$ is linear,

\nomenclature[B]{$X$}{unbounded set of variables disjoint to all alphabets}
\nomenclature[c]{$x$}{variable used in a term}
\nomenclature[H]{$τ$}{term in set~$Γ⧋$}
\nomenclature[G]{$Γ⧋$}{set of all terms over signature~$Γ$}

\enlargethispage*{2\baselineskip}
\paragraph{Terms match trees.}
Atomic term~$x$ \emph{matches} all trees in~$Γ^▵$; a compound linear
term~$τ=γ(τ₁,…,τᵣ)$ matches tree~$t=γ(t₁,…,tᵣ)$ if for all~$i=1,…,r$,~$τᵢ$
recursively matches~$tᵢ$, e.g.,~$γ(x₁,γ(x₂,x₃))$ matches~$γ(𝛜,γ(𝛜,γ(𝛜,𝛜)))$. To
define matching of non-linear terms define \emph{tree substitution}~$s$
(substitution for short) as a mapping of variables to
terms,~$s=❴x₁→τ₁,…,xᵣ→τᵣ❵$. Substitution~$s$ is \emph{grounded} if all
terms~$τ₁,…,τᵣ$ are grounded. An \emph{application} of substitution~$s$ to
term~$τ$, denoted~$τ/s$, replaces each variable~$xᵢ$ with term~$τᵢ$ if and only
if~$xᵢ→τᵢ∈s$. The notation~$τ'⊑τ$ is to say that term~$τ$ \emph{matches} a
term~$τ'$, which happens if there exists a substitution~$s$ such that~$τ'=τ/s$.
\nomenclature[c]{$s$}{tree substitution~$❴x₁→τ₁,…,xᵣ→τᵣ❵$}

\paragraph{Tree rewrites.}
A \emph{tree rewrite rule}~$ρ$ (\emph{rewrite} for short) is a pair of two
terms written as~$ρ=τ₁→τ₂$. Rewrite~$ρ$ is \emph{applicable} to (typically
grounded) term~$τ₁'$ if~$τ₁'=τ₁/s$ for some substitution~$s$. If rewrite~$ρ$
matches term~$τ₁'$ then~$τ₁'/ρ$, the \emph{application} of~$ρ$ to~$τ₁'$ (also
written~$τ₁'/τ₁→τ₂$) yields the term~$τ₂'=τ₂/s$.
\nomenclature[H]{$ρ$}{tree rewrite rule}

The definition of rewrites does not exclude a rewrite~$γ₁(x₁)→γ₂(x₁,γ₁(x₂))$,
whose right-hand-side term introduces variables that do not occur in the
left-hand-side term. Applying such a rewrite to a tree will always convert it
to a term. Since the primary intention of rewrites is the manipulation of
trees, we tacitly assume here and henceforth that it is never the case; a
rewrite~$τ₁→τ₂$ is valid only if~$\Vars(τ₂)⊆\Vars(τ₁)$.
\nomenclature[O]{$\Vars(τ)$}{set of variables in term~$τ$}
\nomenclature[O]{$\Vars(ρ)$}{set of variables in rewrite~$ρ$}

Manipulation of tree and pushdown auxiliary storage is defined with rewrites.
For example, the rewrite~$γ₁(γ₂(x))→γ₃(x)$, or in abbreviated
form~$γ₁γ₂x→γ₃x$, is, in terms of stack operations: \emph{if the top of the
  stack is symbol~$γ₁$ followed by symbol~$γ₂$, then pop these two symbols and
  then push symbol~$γ₃$ onto the stack.}

With these definitions:
\begin{itemize}
  \item Each member of set~$Δ$ is in the form~$⟨σ,q,ρ,q'⟩$ meaning:
    if the current input symbol is~$σ$, the current state is~$q$ and
    auxiliary storage~$t$ matches~$ρ$, then, consume~$σ$, move to state~$q'$
    and set the storage to~$t/ρ$.

  \item Each member of set~$Ξ$ is in the form~$⟨q,ρ,q'⟩$ meaning: if the
    current state is~$q$ and auxiliary storage~$t$ matches~$ρ$, then, move to
    state~$q'$ and set the storage to~$t/ρ$.
\end{itemize}

A tree rewrite~$ρ=τ₁→τ₂$ is linear if~$τ₁$ is linear, e.g.,
rewrites~$γ(x)→γ'(x,x,x)$ and~$γ(x₁,x₂)→γ(γ(x₂,x₁),𝛜)$ are linear, but~$γ(x,x)→𝛜$
is not. Notice that rewrites of tape and pushdown auxiliary storage are linear:
the transition functions of these do never depend on the equality of two
tape or pushdown symbols.

\begin{definition}\label{linear-rewrite}
$A$ is \textsl{linear-rewrite} if all rewrites in~$Ξ$ and~$Δ$ are linear.
\end{definition}

\ifjournal
Let the depth of rewrite~$ρ=τ₁→τ₂$ be the depth of its left hand
side,~$\Depth(ρ)=\Depth(τ₁)$, where the depth of terms is defined by
generalizing \cref{tree:depth}.
\begin{equation}\label{term:depth}
  \Depth(t)=\begin{cases}
    τ=𝛜& 0⏎
    τ=x& 0⏎
    τ=γ(τ₁,…,τᵣ) & \displaystyle 1+\max_{1≤i≤r}❨\Depth(tᵢ)❩.
  \end{cases}
\end{equation}
\else
Let~$\Depth(ρ)$,~$ρ=τ₁→τ₂$, be~$\Depth(τ₁)$, and where the depth of terms is
defined like tree depth, a variable~$x∈X$ considered a leaf.
\fi
A term (rewrite) is \emph{shallow} if its depth is at most one,
e.g.,~$x$,~$γ(x)$, and~$γ(x,x)$ are shallow, while~$γ(γ(x))$ is not. Rewrite of
tape storage are shallow by definition, since only the symbol under the head is
inspected.
\nomenclature[O]{$\Depth(τ)$}{depth of term~$t∈Γ^▵$,~$\Depth(x)=0$}
\nomenclature[O]{$\Depth(ρ)$}{depth of pattern~$ρ∈Γ⧋$}
\begin{definition}\label{shallow-rewrite}
  $A$ is \textsl{shallow-rewrite} if all rewrites in~$Ξ$ and~$Δ$ are shallow.
\end{definition}

\subsection{Finite Control Automata for Forest Recognition}
\label{section:forest}

In the case that the set of input symbols~$Σ$ is a signature rather than a
plain alphabet, the input to a finite control automata is then a tree~$t∈Σ^▵$
rather than a plain word. We use the term \emph{forest} for what some call
\emph{tree language}, i.e., a (typically infinite) set of trees. Generalizing
\cref{language:recognizer} we define:

\begin{definition}\label{forest:recognizer}
A \emph{recognizer of forest}~$\text{\normalfont\pounds}⊆Σ^▵$ is a device that takes as
input a tree~$t∈Σ^▵$ and determines whether~$t∈\text{\normalfont\pounds}$.
\end{definition}

\nomenclature[B]{$\text{\normalfont\pounds}$}{forest, or language of trees,~$\text{\normalfont\pounds}⊆Σ^▵$}

As explained in \cref{section:language:recognizer} a
\textsl{language-recognizer} automaton scans the input left-to-right.
However, this order is not mandatory, and there is no essential difference
between left-to-right and right-to-left automata. This symmetry does not
necessarily apply to a \textsl{forest-recognizer} automaton---there is much
research work on comparing and differentiating bottom-up and top-down traversal
strategies of finite control automata (e.g., Coquidé et
al\@.~\citeyear{coquide1994bottom} focus on bottom-up automata, Guessarian
\citeyear{Guessarian:83} on top-down, while Comon et
al\@.~\citeyear{Comon:07} presents several cases in which the two traversal
strategies are equivalent.)

Our interest in parametrically polymorphic type systems sets the focus here on
the bottom-up traversal strategy only. Most of the description of
\textsl{language-recognizer} automata above in
\cref{section:language:recognizer} remains unchanged. The state and storage
specification are the same in the two kinds of recognizers, just as the
definitions of deterministic and real-time automata. Even the specification
of~$ξ$, the~$𝜀$-transition function is the same, since the automaton does not
change its position on the input tree during an~$𝜀$-transition.

However, input consumption in forest recognizers is different than in language
recognizers, and can be thought of as visitation. A bottom-up forest-recognizer
consumes an input tree node labeled~$σ$ of rank~$r$ by visiting it
after its~$r$ children were visited. Let~$q₁,q₂,…,qᵣ$ be the states of the automaton
in the visit to these children, and let~$𝐪$ be the \emph{multi-state} of
the~$r$ children, i.e.,~$𝐪=q₁,q₂,…,qᵣ$. Then, the definition of~$δ$ is modified
by letting it depend on multi-state~$𝐪∈Qᵏ$ rather than on a single
state~$q∈Q$. More precisely, each input-to-output item in~$Δ$ takes the
form~$⟨σ,𝐪,ρ,q'⟩$, meaning, \emph{if \1~the automaton is in a node
  labeled~$σ$, and \2~it has reached states~$q₁,q₂,…,qᵣ$ in the~$r$ children of
  this node, and if storage rewrite rule~$ρ$ is applicable, then select
  state~$q'$ for the current node and apply rewrite~$ρ$}.
  \nomenclature[c]{$𝐪$}{multi-state~$q₁,q₂,…,qᵣ$,~$r$ determined by context}

Consider~$ρ$, the rewrite component of an input-output item. As it turns out,
only tree auxiliary storage makes sense for bottom up forest
recognizers†{In top-down forest recognizers pushdown
auxiliary storage is also admissible.}. Let~$t₁,…,tᵣ$ be the trees representing
the contents of auxiliary storage in~$r$ children of the current node.
Rewrite rule~$ρ$ should produce a new tree~$t$ of unbounded size from
a finite inspection of the~$r$ trees, whose size is also unbounded.

\enlargethispage*{\baselineskip}
We say that~$ρ$ is a \emph{many-input tree rewrite rule} (for short,
\emph{rewrite} when the context is clear) if it is in the form~$ρ=τ₁,…,τᵣ→τ'$.
Rule~$ρ=τ₁,…,τᵣ→τ'$ is applied to all children, with the straightforward
generalization of the notions
of matching and applicability of a single-input-rewrite:
\begingroup
\addtolength\leftmargini{-.25ex}
\begin{quote}
  A \emph{multi-term}~$𝛕$ is a sequence of terms~$𝛕=τ₁,…,τᵣ$, and a
  \emph{multi-tree}~$𝐭$ is a sequence of trees,~$𝐭=t₁,…,tᵣ$. Then,
  rule~$ρ=𝛕→τ'$ \emph{applies to} (also, \emph{matches})~$𝐭$ if there is a
  single substitution~$s$ such that~$τᵢ/s=tᵢ$ for all~$i=1,…,r$. The
  \emph{application} of~$ρ$ to~$𝐭$ is~$τ/s$.
\end{quote}
\endgroup
\nomenclature[c]{$𝐭$}{multi-tree~$t₁,…,tᵣ$,~$r$ determined by context}
\nomenclature[H]{$𝛕$}{multi-term,~$τ₁,…,τᵣ$,~$r$ determined by context}

\section{Parametrically Polymorphic Type Systems}
\label{section:lattice}
This section offers a unifying framework for parametrically polymorphic type
systems. Definitions reuse notations and symbols introduced in
\cref{section:automata} in the definition of automata, but with different
meaning. For example, the Greek letter~$σ$ above denoted an input letter, but
will be used here to denote the name of a function defined in a certain type
system. This, and all other cases of overloading of notation are
\emph{intentional}, with the purpose of highlighting the correspondence between
the two unifying frameworks.

\subsection{The Lattice of Parametrically Polymorphic Type Systems}

\begin{wraptable}{r}{52ex}%
  \vspace{-1em}
  \scriptsize
  \rowcolors{2}{gray!25}{yellow!25}%
  \begin{tabular}{>{\raggedright}p{30ex}>{\slshape}p{37ex}}
    \toprule
    \bfseries Characteristic & \bfseries Values in increasing order ⏎
    \midrule
    $C₁$ Number of type arguments \newline
      (\cref{section:PP}) &
    \begin{celllist}
      \item nyladic-parametric-polymorphism
      \item monadic-parametric-polymorphism
      \item dyadic-parametric-polymorphism
      \item polyadic-parametric-polymorphism
    \end{celllist} ⏎
    $C₂$ Type pattern depth \newline
      (\cref{section:pattern:depth}) &
      \begin{celllist}
        \item shallow-type-pattern
        \item almost-shallow-type-pattern
        \item deep-type-pattern
      \end{celllist} ⏎
    $C₃$ Type pattern multiplicity \newline
      (\cref{section:pattern:multiplicity}) &
        \begin{celllist}
          \item linear-patterns
          \item non-linear
        \end{celllist} ⏎
  $C₄$ Arity of functions \newline
      (\cref{section:function:arity}) &
        \begin{celllist}
          \item unary-functions
          \item n-ary-functions
        \end{celllist} ⏎
 $C₅$ Type capturing \newline
      (\cref{section:type:capturing}) &
    \begin{celllist}
      \item no-typeof
        \item rudimentary-typeof
        \item full-typeof
    \end{celllist} ⏎
   $C₆$ Overloading \newline (\cref{section:function:overloading})
       & \begin{celllist}
            \item one-type
            \item eventually-one-type
            \item multiple-types
         \end{celllist} ⏎
    \bottomrule
  \end{tabular}
  \caption{Six characteristics and 17 properties spanning lattice~$𝔗$ of
     parametrically polymorphic type systems.}
  \label{table:lattice}
\end{wraptable}

\nomenclature[B]{$Cᵢ$}{a characteristic of lattice~$𝔗$, see~\cref{table:lattice}}
\nomenclature[B]{$C₁$}{number of type arguments (characteristic of lattice~$𝔗$), see~\cref{table:lattice}}
\nomenclature[B]{$C₂$}{type pattern depth (characteristic of lattice~$𝔗$), see~\cref{table:lattice}}
\nomenclature[B]{$C₃$}{type pattern multiplicity (characteristic of lattice~$𝔗$), see~\cref{table:lattice}}
\nomenclature[B]{$C₄$}{arity of functions (characteristic of lattice~$𝔗$), see~\cref{table:lattice}}
\nomenclature[B]{$C₅$}{type capturing (characteristic of lattice~$𝔗$), see~\cref{table:lattice}}
\nomenclature[B]{$C₆$}{overloading (characteristic of lattice~$𝔗$), see~\cref{table:lattice}}

Examine \Cref{table:lattice} describing~$𝔗$, the lattice (Boolean algebra) of
parametrically polymorphic type systems. This table is the equivalent of
\cref{table:automata} depicting~$𝔄$, the lattice of finitely controlled
automata. We use the terms \emph{potence}, \emph{characteristics}, and
\emph{properties} as before, just as the conventions of writing lattice points
and use of abbreviations.

\Cref{table:lattice} give means for economic specification of different
variations of parametrically polymorphic types systems. For example,
inspecting Yamazaki et al.'s work we see that the type system of the \Fluent
intermediate language is
\begin{equation}
  \label{eq:fluent}
  \Fluent=\A{monadic-parametric-polymorphism,deep-type-pattern,rudimentary},
\end{equation}
\enlargethispage*{2\baselineskip}
i.e., \1~it allows only one parameter generics, e.g.,%
†{For concreteness we exemplify abstract syntax with the concrete syntax of Java or {C++}.}
\begin{JAVA}
interface¢¢ γ1<x> {} interface¢¢ γ2<x> {} interface¢¢ γ3<x> {}
\end{JAVA}

\noindent\2 it allows generic functions to be defined for deeply nested
generic parameter type, such as
\begin{JAVA}
static <x>¢¢ γ1<γ2<γ3<x>>> f(γ3<γ2<x>> e) {return null;}
\end{JAVA}
and, \3~it allows in the definition of function return type, a \kk{typeof}
clause, but restricted to use only one function invocation, e.g.,%
†{Code rendered in distinctive color as in abuses the host language syntax
for the purpose of illustration.}
\begin{JAVA}[morekeywords=typeof,style=pseudo]
static <x>¢¢ typeof(f(e)) g(γ3<x> e) {return null;}
\end{JAVA}

In contrast, the type system used by, e.g., \GR, is simply
\begin{equation}
  \label{eq:PP}
  \PP=\A{polyadic-parametric-polymorphism}.
\end{equation}

The remainder of this section describes in detail the characteristics in
\cref{table:lattice}.\vspace{-1ex}

\subsubsection{Object Based Type System}
Type system \A{}, the bottom of~$𝔗$, also denoted~$𝔗_⊥$ models \emph{object
based} programming paradigm, i.e., a paradigm with objects and classes, but
without inheritance nor parametric polymorphism. A good approximation of the
paradigm is found in the Go programming language~\cite{Donovan:15}. The essence
of~$𝔗_⊥$ is demonstrated in this (pseudo Java syntax) example:
\begin{JAVA}[style=pseudo]
interface A {¢¢ B a(); void b(); ¢¢} interface B {¢¢ B b(); A a(); ¢¢} new A().a().b().b().a().b();
\end{JAVA}
The example shows \1~definitions of two \emph{classes}†{ignore the somewhat
idiosyncratic distinction between classes and interfaces},~\cc{A} and~\cc{B},
\2~\emph{methods} in different classes have the same name,  but different return type,
\3~an expression whose type correctness depends on these definitions.

\Cref{figure:bottom} presents the abstract syntax, notational conventions and
typing rules of~$𝔗_⊥$. The subsequent
description of type systems in~$𝔗$ is by additions and modifications to the
figure.

\begin{figure}[htb]
  \caption{The bottom of lattice~$𝔗$: the type system \A{} modeling the object-based paradigm}
  \label{figure:bottom}
  \scriptsize
  \begin{adjustbox}{minipage=\linewidth,bgcolor={RoyalBlue!20}}
  \begin{tabularx}\textwidth{>{\hsize=.45\hsize}X>{\hsize=.45\hsize}X>{\hsize=.75\hsize}X}
    $\qquad\quad\begin{aligned}
      P &::=Δ~e ⏎
      Δ &::=δ^* ⏎
      δ &::=σ:γ→γ' ⏎
        &::=σ:𝛜→γ' ⏎
      e &::=𝜀~|~e.σ |~σ(e)⏎
    \end{aligned}$
&
    $\begin{array}{c}
      \typing{Function ⏎ Application}
      {\infer{e.σ:t'}{\begin{aligned}&e:t ⏎[-3pt] &σ:t→t'\end{aligned}}} ⏎⏎
      \typing{One Type ⏎ Only}
      {\infer{e:⊥}{\begin{aligned}&e:t₁ ⏎[-3pt] &e:t₂ ⏎[-3pt] &t₁≠t₂\end{aligned}}}
    \end{array}$
    &
    \begin{tabular}{rm{30ex}} ⏎
      $P$ & Program ⏎
      \nomenclature[B]{$P$}{program (abstract syntax start symbol), see~\cref{figure:bottom}}
      $e$ & Expression ⏎
      \nomenclature[c]{$e$}{expression (abstract syntax category), see~\cref{figure:bottom}}
      $Δ$ & Set of function definitions ⏎
      \nomenclature[G]{$Δ$}{set of primary function definitions (abstract syntax category), see~\cref{figure:bottom}}
      $δ$ & A function definition ⏎
      \nomenclature[H]{$δ$}{a definition of primary function (abstract syntax category), see~\cref{figure:bottom}}
      $σ$ & Function name, drawn from alphabet~$Σ$⏎
      \nomenclature[H]{$σ$}{name of primary function (abstract syntax category), see~\cref{figure:bottom}}
      $γ$ & Class names, drawn from alphabet~$Γ$ disjoint to~$Σ$ ⏎
      \nomenclature[H]{$σ$}{class name (abstract syntax category), see~\cref{figure:bottom}}
      $t,t',t₁,t₂$ & Grounded (non-generic) types ⏎
      \nomenclature[c]{$t$}{grounded type (abstract syntax category), see~\cref{figure:bottom}}
      $𝛜$ & The unit type ⏎
      $𝜀$ & The single value of the unit type⏎
      \nomenclature[H]{$𝛜$}{the unit type (terminal of abstract syntax), see~\cref{figure:bottom}}
      \nomenclature[H]{$𝜀$}{the single value of the unit type (terminal of abstract syntax), see~\cref{figure:bottom}}
      $⊥$ & The error type ⏎
      \nomenclature[O]{$⊥$}{the error type (terminal of abstract syntax), see~\cref{figure:bottom}}
    \end{tabular}
 ⏎⏎[-1.5ex]
    \multicolumn1c{\bfseries(a) Abstract syntax} &
    \multicolumn1c{\bfseries(b) Typing rules} &
    \multicolumn1c{\bfseries(c) Variables and notations}
  \end{tabularx}
  \end{adjustbox}
\end{figure}

A type in~$𝔗_⊥$ is either drawn from~$Γ$, or is the designated bottom type~$𝛜$.
The atomic expression, bootstrapping expression~$e$, is denoted by~$𝜀$, and
its type is~$𝛜$.

The figure defines program~$P$ in \A{} as a set~$Δ$ of function definitions~$δ$
followed by an expression~$e$ to type check. For~$σ$ drawn from set~$Σ$ of
function names, and types names~$γ₁,γ₂$ drawn from set~$Γ$ of class names, we
can think of a function definition of the form~$σ:γ₁→γ₂$ as \emph{either}
\begin{itemize}
  \item a \emph{method} named~$σ$ in class~$γ₁$ taking no parameters and
    returning class~$γ₂$, \emph{or},
  \item an external \emph{function} taking a single parameter of type~$γ₁$,
    and returning a value of type~$γ₂$.
\end{itemize}
With the first perspective, the recursive description of expressions is the
Polish convention,~$e::=e.σ$, best suited for making APIs fluent. With the
latter perspective, this recursive definition should be made in prefix
notation, i.e.,~$e::=σ(e)$. \Cref{figure:bottom} uses both variants, and we
will use these interchangeably. Indeed, the distinction between methods and
functions is in our perspective only a syntactical matter.

The special case of a function taking the unit type as argument,~$σ:𝛜→γ$, can be
thought of as an instantiation of the return type, \cc{\kk{new}~$γ$}. The
function name,~$σ$, is not essential in this case, but is kept for consistency.
Also in the figure is the standard \textsc{Function Application} typing
rule. Overloading on the parameter type is intentionally allowed, i.e., methods
defined in different classes may use the same name. The \textsc{One Type Only}
rule excludes overloading based on the return type.
\vspace{-1ex}

\subsubsection{Plain Parametric polymorphism}\label{section:PP}

Let \PP be short for lattice point
\textsl{$⟨$polyadic-parametric-polymorphism$⟩$}, as demonstrated in
\cref{example:pp} above.
\PP is the type system behind LINQ%
†{\url{https://docs.microsoft.com/en-us/dotnet/api/system.linq}},
the first theoretical treatise of fluent API~\cite{Gil:Levy:2016},
\Fling and other fluent API generators, e.g., of~\cite{Xu:2010} and~\cite{Nakamaru:17}.

The definition of \PP relies on the definitions of trees, terms and rewrites in
\cref{section:auxiliary:storage}. Notice that in~$𝔗_⊥$, types were drawn from
set~$Γ$. In allowing generic types the type repertoire is extended to~$Γ^▵$,
the set of trees over signature~$Γ$. A type~$γ∈Γ$ of rank~$r≥1$ is a generic
with~$r$ type parameters; the only leaf, of rank~$0$, is the unit type~$𝛜$.
\PP also admits ‟terms”, i.e., trees including formal variables drawn from the
set~$Γ⧋$. We refer to terms of \PP as ‟\emph{ungrounded types}”; an ungrounded
type is also viewed in \PP as a \emph{type pattern} that typically match
‟grounded types” (trees in~$Γ^▵$), but can also be used for matching over
ungrounded types.

\Cref{figure:PP} summarizes the changes in \PP's definitions with respect to
those of~$𝔗_⊥$ in \cref{figure:bottom}.

\begin{figure}[ht]
  \caption{The type system \protect{\PP}}
  \label{figure:PP}
  \scriptsize
  \begin{adjustbox}{minipage=\linewidth,bgcolor={RoyalBlue!20}}
  \begin{tabularx}\textwidth{>{\hsize=.55\hsize}X>{\hsize=.55\hsize}X>{\hsize=.75\hsize}X}
    \multicolumn1c{\emph{(same as \cref{figure:bottom} (a) and…)}} &
    \multicolumn1c{\emph{(same as \cref{figure:bottom} (b) and…)}} &
    \multicolumn1c{\emph{(same as \cref{figure:bottom} (c) and…)}} ⏎⏎[-1.5ex]
    $\qquad\qquad
  \begin{aligned}
    δ &::=σ:γ(𝐱)→τ \gcomment{\text{term~$γ(𝐱)$ is linear}} ⏎
      &::=σ:𝛜→t ⏎
    𝐱 &::=x₁,…,xᵣ ⏎
    τ &::=γ(𝛕)~|~x~|~t ⏎
    𝛕 &::=τ₁,…,τᵣ ⏎
    t &::=γ(𝐭)~|~𝛜⏎
    𝐭 &::=t₁,…,tᵣ
  \end{aligned}
  $
&
  $\typing{Generic ⏎ Function ⏎ Application}
  {\infer{e.σ:τ'/s}{%
      \begin{aligned}
              &e:t ⏎[-3pt]
              &σ:τ→τ' ⏎[-3pt]
              &t=τ/s
      \end{aligned}}}$
  &
      \begin{tabular}{rm{30ex}}
        $τ,τ'$ &Type patterns, drawn from~$Γ⧋$ ⏎
        \nomenclature[H]{$τ$}{type pattern, i.e., ungrounded type (abstract syntax category), see~\cref{figure:PP}}
  $𝛕$ & Multi-pattern, i.e., a sequence of type patterns~$τ$ ⏎
        \nomenclature[H]{$𝛕$}{multi-type pattern, i.e., multi ungrounded type (abstract syntax category), see~\cref{figure:PP}}
    $x$ & Type variables, drawn from set~$X$ disjoint to all alphabets ⏎
        \nomenclature[c]{$x$}{type variable (abstract syntax category), see~\cref{figure:PP}}
  $𝐱$ & Multi-variable, i.e., a sequence of type variables ⏎
        \nomenclature[c]{$𝐱$}{multi-variable (abstract syntax category), see~\cref{figure:PP}}
 $s$ & Tree substitution ⏎
  \end{tabular}
 ⏎⏎[-1.5ex]
    \multicolumn1c{\bfseries(a) Abstract syntax} &
    \multicolumn1c{\bfseries(b) Typing rules} &
    \multicolumn1c{\bfseries(c) Variables and notations}
  \end{tabularx}
  \end{adjustbox}
\end{figure}

The main addition of \PP to~$𝔗_⊥$ is allowing function definition~$δ$ to take
also the form~$σ:γ(𝐱)→τ$, where~$𝐱=x₁,…,xᵣ$ here is a sequence of~$r$
distinct type variables:
\begin{itemize}
  \item The single parameter to functions is a multi-variable, yet shallow and
    linear, type pattern~$γ(𝐱)$. This requirement models the definition of
    methods in \cref{example:pp}, i.e., in generic classes with~$r$ independent
    type
    variables. The structure of this pattern implicitly models the Java/C＃
    decree (which is absent from {C++}) against specialization of generics for
    specific values of the parameters.

  \item Also, as demonstrated by \cref{example:pp},~$τ$, the return type of a
    function in this form, is a type pattern of any depth constructed from the
    variables that occur in~$𝐱$ but also from any other types in~$Γ$.
\end{itemize}

The figure also shows how the \textsc{Function Application} typing rule is
generalized by employing the notions of matching and tree substitution
from \cref{section:auxiliary:storage}.

The definition of a \textsl{dyadic-parametric-polymorphism}
type system adds to \cref{figure:PP} the requirement that~$r(γ)≤2$.
In \textsl{monadic-parametric-polymorphism}, used for fluent API generation by
Nakamaru et al\@.~\citeyear{Nakamaru:17} and Yamazaki et
al\@.~\citeyear{Yamazaki:2019}, the requirement becomes~$r(γ)=1$
which means abstract syntax rule~$t::=γ(t)$ instead of~$t::=γ(𝐭)$,~$τ::=γ(τ)$
instead of~$τ::=γ(𝛕)$, and~$δ::=σ:γ(x)→τ$ instead of~$δ::=σ:γ(𝐱)→τ$.\vspace{-1ex}

\subsubsection{Type Pattern depth}
\label{section:pattern:depth}
Java, C＃, {C++} and other languages allow definitions of generic functions
which are not methods. For example, static Java function~\cc{f} defined by
\begin{JAVA}
static <x1,x2,x3>¢ ¢γ1<γ2<x3,x2>,x1> f(γ1<x1,γ2<x2,x3>> e) {return null;}
\end{JAVA}
is applicable only if the type of its single argument matches the
deep type pattern~$γ₁(x₁,γ₂(x₂,x₃))$. The corresponding lattice
property is obtained by adding derivation rule
\begin{equation}\label{eq:deep}
   δ::=σ:τ→τ' \gcomment{\text{term~$τ$ is linear}}.
\end{equation}
along with the requirement that~$τ$ is linear to \cref{figure:PP}.

As we shall see, the \textsl{deep-type-pattern} property increases the
expressive power of \PP\@. However, the syntax of invoking generic, non-method
functions in contemporary languages breaks the elegance of fluent
API: Using functions instead of methods,~\eq{first} takes the more
awkward form
\begin{equation} \label{eq:function:style}
  \cc{end(a(a(b(a(new Begin())))))}.
\end{equation}
The syntactic overhead of the above ‟reverse fluent API” can be lessened with a 
change to the host language; the case for making the change can be made by
sorting out the expressive power added by the \textsl{deep} property.\vspace{-1ex}

\subsubsection{Type Pattern Multiplicity}
\label{section:pattern:multiplicity}
Recall the abstract syntax rule of~$δ$ in type system
\PP (\cref{figure:PP}),
\begin{equation}\label{eq:method:style}
  δ::=σ:γ(𝐱)→τ \gcomment{\text{term~$γ(𝐱)$ is linear}} ⏎
\end{equation}
The \textsl{deep-type-pattern} property generalized this abstract syntax rule by
allowing functions whose argument type is not restricted to the flat
form~$γ(𝐱)$. Another orthogonal dimension in which \eq{method:style} can be
generalized is by removing the constraint that ‟term~$γ(𝐱)$ is linear”, i.e.,
allowing non-linear type patterns. Such patterns make it possible to define
function~$σ:γ(x,~x)→x$ that type checks with expression parameter~$e:γ(t₁,t₂)$
if and only if~$t₁=t₂$. Noticing that~$t₁$ and~$t₂$ are trees whose size is
unbounded, and may even be exponential in the size of the program, we
understand why the term non-linear was coined. Non-linear type patterns may
coerce the type-checker into doing non-linear amount of work, e.g., the little
Java program in \cref{listing:s2} brings the Eclipse IDE and its command line
compiler \texttt{ecj} to their knees.

\begin{code}[
    caption={Java proram in type system~$S₂=\A{n-ary,deep,non-linear}$ requiring over five minutes of compilation time by \texttt{ecj}
      executing on contemporary hardware},
    label=listing:s2,
  ]
class S2 {¢¢
  interface ¢$ε$¢{}
  interface C<x1, x2>{C<C<x2, x1>, C<x1, x2>> f();}
  C<¢$ε$¢,¢$ε$¢> f() {return null;}
  <x> void ¢$γ$¢(x e1, x e2) {}
  {¢¢ ¢$γ$¢(f().f().f().f().f().f().f().f().f().f().f().f().f().f().f().f().f().f().f().f().f().f().f().f().f().f()
    .f().f().f().f().f().f(), f().f().f().f().f().f().f().f().f().f().f().f().f().f().f().f().f().f().f().f()
    .f().f().f().f().f().f().f().f().f().f().f().f());¢¢ }
}
\end{code}

\ifjournal
\begin{code}[caption={Non-linear typing example in Java},label={listing:non-linear}]
interface Cons<Car, Cdr> {¢¢ // Cons type with two type parameters
  Cons<Cons<Car, Cdr>, Cons<Car, Cdr>> f(); } // Multiply current type size
static Cons<__,__> f() {¢¢ return null; } // Initialize Cons type
static <X> void equals(X x1, X x2) {} // Ensure parameter types are equals
equals(f().f().¢…¢.f(), f().f().¢…¢.f()); // Exponential compilation time
\end{code}
\fi

\enlargethispage*{2\baselineskip}
Type system \A[\PP]{non-linear} is defined by replacing \eq{method:style} by
its relaxed version,
\begin{equation}\label{eq:non-linear:style}
  δ::=σ:γ(𝐱)→τ \gcomment{\text{term~$γ(𝐱)$ may be non-linear}}.
\end{equation}
Likewise, type system \A[\PP]{deep,non-linear} is obtained by replacing
\eq{deep} by the relaxed version,
\begin{equation}\label{eq:deep:non-linear}
   δ::=σ:τ→τ' \gcomment{\text{term~$τ$ may be non-linear}}.
   \vspace{-1ex}
\end{equation}

\subsubsection{Arity of functions}
\label{section:function:arity}
Yet a third orthogonal dimension of generalizing \eq{method:style} is the
number of arguments; so far,~$σ$ was thought of as unary function, i.e., either
as a nullary method that takes no explicit parameters, or a generic unary,
non-method function. The \textsl{n-ary-functions} property of polymorphic type
systems allows~binary, ternary, and in general~$n$-ary functions,~$n≥1$. The
details are in \cref{figure:function:arity}.

\begin{figure}[ht]
  \caption{The type system \A{n-ary-functions,deep}}
  \label{figure:function:arity}
  \scriptsize
  \begin{adjustbox}{minipage=\linewidth,bgcolor={RoyalBlue!20}}
    \hspace{-3ex}
    \begin{tabularx}\textwidth{>{\hsize=.5\hsize}X>{\hsize=.9\hsize}X>{\hsize=.5\hsize}X}
    \multicolumn1c{\emph{(same as \cref{figure:PP} (a) and…)}} &
    \multicolumn1c{\emph{(same as \cref{figure:PP} (b) and…)}} &
    \multicolumn1c{\emph{(same as \cref{figure:PP} (c) and…)}} ⏎⏎[-1.5ex]
        $\qquad\quad\begin{aligned}
          δ &::=σ:𝛕→τ \quad \gcomment{\Vars(τ)⊆\Vars(𝛕)} ⏎
          𝛕 &::=τ₁⨉τ₂⨉⋯⨉τᵣ ⏎
          e &::=𝜀~|~𝐞.σ~|~σ(𝐞) ⏎
          𝐞 &::=e₁,e₂,…,eᵣ ⏎
        \end{aligned}$
        &
    $\typing{Multiple ⏎ Arguments}{\infer{e₁,e₂,…,eᵣ.σ:τ/s}{%
        \begin{aligned}
          &e₁:t₁,e₂:t₂,…,eᵣ:tᵣ ⏎[-3pt]
          &σ:τ₁⨉τ₂⨉⋯⨉τᵣ→τ ⏎[-3pt]
          &t₁=τ₁/s \quad t₂=τ₂/s \quad…\quad tᵣ=τᵣ/s
        \end{aligned}}}$
        &\hspace{-3.5ex}
    \begin{tabular}{rm{20ex}}
      $e₁,e₂,…,eᵣ$ & Expressions ⏎
      $t₁,t₂,…,tᵣ$ & Grounded types ⏎
      $τ₁,τ₂…,τᵣ$ & Generic types ⏎
      $𝐞$ & Multi-expression, i.e., a sequence of expressions~$e₁,…,eᵣ$ ⏎
      \nomenclature[c]{$𝐞$}{multi-expression (abstract syntax category), see~\cref{figure:function:arity}}
    \end{tabular}
 ⏎⏎[-1.5ex]
    \multicolumn1c{\bfseries(a) Abstract syntax} &
    \multicolumn1c{\bfseries(b) Typing rules} &
    \multicolumn1c{\bfseries(c) Variables and notations}
  \end{tabularx}
  \end{adjustbox}
\end{figure}

Comparing the figure to \cref{figure:PP} above we notice the introducing of
notation~$𝐞$ for a sequence of expressions. With this notation, a call to
an~$n$-ary function can be written~$𝐞.σ$ (Polish, fluent API like, convention)
or as~$σ(𝐞)$ (traditional convention). As might be expected, the figure also
extends the function application typing rule to non-unary functions.

Note that languages embedded in \A[\PP]{n-ary} are no longer
languages of words, but rather \emph{forests}---languages of trees. Indeed, an
expression in \A[\PP]{n-ary} is a tree of method calls, and the
set~$Δ$ in an \A[\PP]{n-ary} program defines the set of tree-like
expressions that type-check against it.

\subsubsection{Type capturing}
\label{section:type:capturing}
A primary motivation for introducing keyword \kk{decltype} to {C++}, was
streamlining the definition of wrapper functions---functions whose return
type is the same as the wrapped function, e.g.,
\begin{JAVA}[language=c++,morekeywords={decltype}]
template<typename x>auto wrap(x e)->decltype(wrapee(e)) {/*$\color{comment}⋯$*/auto \$=wrapee(e);/*$\color{comment}⋯$*/return \$;}
\end{JAVA}
As it turns out, keyword \kk{decltype} dramatically changes the type system, by
bringing about the undesired effect that type checking is undecidable. The
predicament is due to the idiom of using the type of one function to declare
the return type of another. Alternative, seemingly weaker techniques for
piecemeal definition of the return type, e.g., allowing \kk{typedef}s
in classes do not alleviate the problem. Likewise, the idiom is possible even
with the seemingly weaker feature, of allowing functions whose return type 
is \kk{auto}, as in
\begin{JAVA}[language=c++,morekeywords={decltype}]
template<typename x> auto wrap(x e){return wrappee(e);}
\end{JAVA}
Note that neither Java nor C＃ permit \kk{auto} functions; it appears
that the designers of the languages made a specific effort to block
loopholes that permit piecemeal definition of functions' return type.

\Cref{figure:type:capturing} presents abstract modeling of
C++'s \kk{decltype}; for readability we use the more familiar
\kk{typeof} keyword. The figure describes
\textsl{n-ary-functions}; for \textsl{unary-functions}
let~$n=1$.

\begin{figure}[ht]
  \caption{Type system \A{full-typeof,deep,n-ary-functions}}
  \label{figure:type:capturing}
  \scriptsize
  \begin{adjustbox}{minipage=\linewidth,bgcolor={RoyalBlue!20}}
  \begin{tabularx}\textwidth{>{\hsize=.4\hsize}X>{\hsize=.75\hsize}X>{\hsize=.5\hsize}X}
    \multicolumn1c{\emph{(same as \cref{figure:function:arity} (a) and…)}} &
    \multicolumn1c{\emph{(same as \cref{figure:function:arity} (b) and…)}} &
    \multicolumn1c{\emph{(same as \cref{figure:function:arity} (c) and…)}} ⏎⏎[-1.5ex]
    $\quad\begin{aligned}
      P &::=Δ~Ξ~e ⏎
      Ξ &::=ξ^* ⏎
      ξ &::=φ:𝛕→\kk{\scriptsize typeof}~ϑ ⏎[-0.3\baselineskip]
        &::=φ:𝛕→τ \gcomment{\Vars(ϑ)⊆\Vars(𝛕)} ⏎
      δ &::=σ:𝛕→\kk{\scriptsize typeof}~ϑ \gcomment{\Vars(ϑ)⊆\Vars(𝛕)} ⏎
      ϑ &::=𝛝.φ~|~𝛝.δ~|~τ ⏎
      𝛝 &::=ϑ₁,…,ϑᵣ ⏎
    \end{aligned}$
    &
   \begin{tabular}{l}
      $\typing{Typeof ⏎ Expression}
      {\infer{e₁,…,eᵣ.f:t}{%
      \begin{aligned}\renewcommand\baselinestretch{0.3}
        & f=σ \textit{\quad or\quad} f=φ⏎[-3pt]
        & f:τ₁⨉⋯⨉τᵣ→\text{\kk{\scriptsize typeof}}~ϑ ⏎[-3pt]
        & e₁:t₁\quad⋯\quad eᵣ:tᵣ ⏎[-3pt]
        & t₁=τ₁/s \quad⋯\quad tᵣ=τᵣ/s ⏎[-3pt]
        & ϑ/s:t ⏎
      \end{aligned}}}$ ⏎[10ex]
      \begin{minipage}{50ex}
        The \textsc{Multiple Arguments} typing rule of
      \cref{figure:function:arity} is also generalized for auxiliary functions ($φ$).
    \end{minipage}
    \end{tabular}
    $\begin{aligned}
    \end{aligned}$
    &
    \begin{tabular}{rm{30ex}}
      $Ξ$ & Set of auxiliary function definitions, used only in \kk{\scriptsize typeof} clause ⏎
      \nomenclature[G]{$Ξ$}{set of auxiliary function definitions (abstract syntax category), see~\cref{figure:type:capturing}}
      $ξ$ & An auxiliary function definition ⏎
      \nomenclature[H]{$ξ$}{auxiliary function definition (abstract syntax category), see~\cref{figure:type:capturing}}
      $φ$ & Auxiliary function names, drawn from alphabet~$Φ$ disjoint to~$Σ$ ⏎
      \nomenclature[G]{$Φ$}{set of auxiliary function names, disjoint to~$Σ$, see~\cref{figure:type:capturing}}
      \nomenclature[H]{$ϕ$}{auxiliary function name, drawn from set~$Φ$ (abstract syntax category), see~\cref{figure:type:capturing}}
      $ϑ$ & Pseudo expression, an expression whose type is not grounded ⏎
      \nomenclature[H]{$ϑ$}{pseudo expression, an expression whose type is ungrounded
        (abstract syntax category), see~\cref{figure:type:capturing}}
      $𝛝$ & Sequence of pseudo-expressions ⏎
      \nomenclature[H]{$𝛝$}{multi-pseudo expression (abstract syntax category), see~\cref{figure:type:capturing}}
    \end{tabular}
 ⏎⏎[-1.5ex]
    \multicolumn1c{\bfseries(a) Abstract syntax} &
    \multicolumn1c{\bfseries(b) Typing rules} &
    \multicolumn1c{\bfseries(c) Variables and notations}
  \end{tabularx}
  \end{adjustbox}
\end{figure}

The figure uses two syntactical categories for defining functions:~$δ∈Δ$,
which as before, defines a function named~$σ∈Σ$ that may occur in
expression~$e$ (more generally~$𝐞$); the similarly structured~$ξ∈Ξ$ uses 
distinct namespace~$φ∈Φ$ is for functions that may occur in a \kk{typeof}
clause.
\vspace{-1ex}

\enlargethispage*{2\baselineskip}
\paragraph{Pseudo-expressions.} Compare~$𝛕→\kk{typeof}~ϑ$ (the format of a
definition of function named~$σ$ in the figure) with~$𝛕→τ$ (the format of this
definition in \textsl{n-ary-function} type system
(\cref{figure:function:arity})). Without type capturing,~$σ$'s return type is
determined by a tree rewrite of the argument type(s). With type capturing, the
return type is determined by subjecting type~$τ$ to other function(s). To see
this, expand the recursive abstract syntax definition of~$ϑ$, assuming for
simplicity that~$n=1$,
\vspace{-1ex}
\begin{equation}\label{eq:typeof}
  δ::=σ:𝛕→\kk{typeof}~τ.φ₁.⋯.φᵣ,
  \vspace{-1ex}
\end{equation}
i.e., the pseudo-expression~$ϑ$ in this case is~$ϑ=τ.φ₁.⋯.φᵣ$. If~$n>1$ the
return type of a function defined with \kk{typeof} is specified by hierarchical
structure~$ϑ$, for which the figure coins the term \emph{pseudo-expression}.
Notice that a plain expression is a tree whose leaves (type instantiations)
are drawn from~$Γ$ and internal nodes (function calls) are drawn from~$Σ$.
Pseudo expressions are more general in allowing type variables in their leaves.
As emphasized in the figure, these variables must be drawn from~$𝛕$, the
multi-pattern defining the types of arguments to~$σ$.

A \textsl{full-typeof} type system allows any number of function calls in
pseudo-expression~$ϑ$, as in \eq{typeof}. In contrast, a
\textsl{rudimentary-typeof} type system allows at most one function symbol in
pseudo-expressions. This restriction is obtained by replacing the abstract
syntax rule for~$ϑ$ in \cref{figure:type:capturing} with a simpler,
non-recursive variant,~$ϑ::=𝛕.σ~|~τ$.

To describe the semantics of \kk{typeof}, we need to extend the notion of tree
substitution to pseudo-expressions as well. The application of function~$σ$ of
\eq{typeof} to a multi-expression~$𝐞$ with multi-type~$𝐭$ requires first
that~$𝐭⊑𝛕$, where the matching uses a grounded substitution~$s$. Then,~$ϑ/s$,
the application of~$s$ to pseudo-expression~$ϑ$ is the plain-expression
obtained by replacing the type variables in~$ϑ$ with the ground types defined
by~$s$.

\textsc{Typeof Expression} typing rule employs this notion as follows: typing
expression~$𝐞.σ$ with function~$σ:𝛕→\kk{typeof}~ϑ$ and arguments~$𝐞:𝐭$, we
\1~match the argument types with the parameter types,~$𝐭=𝛕/s$, deducing
substitution~$s$, \2~type~$ϑ/s:t$ (using an appropriate typing rule), and
finally \3~type~$𝐞.σ:t$. As an application of the \textsc{Type
of Expression} rule requires an additional typing, of~$ϑ$, its definition is
recursive.
\vspace{-0.5\baselineskip}

\subsubsection{Overloading}
\label{section:function:overloading}
The \emph{one-type} property means that expressions must have exactly one type
(as defined in \cref{figure:bottom}). With the more potent, \emph{multi-type}
property, expressions are allowed multiple types, by disposing the \textsc{One 
Type Only} type inference rule of \cref{figure:bottom}. With
\emph{multi-type-overloading}, expressions are allowed multiple types. With
\emph{eventually-one-type}, the semantics of
the Ada programming language~\cite{Persch:1980} apply:
Sub-expressions are allowed to have multiple types. However, upper level
expressions are still required to be singly typed. For example, while the upper
level expression~$e=\cc{$σ₃$($σ₂$($σ₁$()))}$ can be assigned at most one type,
both \cc{$σ₁$()} and \cc{$σ₂$($σ₁$())} may have many types.
\enlargethispage*{1.5\baselineskip}
\vspace{-0.5\baselineskip}

\subsection{Bisimulation of Automata and Type Systems}
The notation used in this section highlight ties between tree automata and type
systems, e.g., a tree~$t=γ₁(γ₂(γ₃),γ₄)$ can be understood as an instantiated
generic type, \lstinline/γ1<γ2<γ3>,γ4>/, to use Java syntax. Likewise the tree
rewrite~$ρ=γ₁(γ₂(x₁), x₂)→γ₂(x₂)$ can be interpreted as a Java function
\lstinline/static<x1,x2>γ2<x2>foo(γ1<γ2<x1>,x2>e){}/. Applying~$ρ$ to~$t$
yields the tree~$γ₂(γ₄)$, while the return type of the invocation
\lstinline/foo(new γ1<γ2<γ3>,γ4>())/ is \lstinline/γ2<γ4>/.

In fact, with the above definitions of type systems and finite control
automata, we can now easily pair certain automata with type systems.

\vspace{1ex}
\begin{mdframed}[align=center,userdefinedwidth=.87\textwidth,innertopmargin=0ex,
innerbottommargin=0ex,innerleftmargin=1ex,backgroundcolor=green!20,linecolor=lighter-grey,nobreak=true]
\begin{observation}\label{observation:correspondence}
  \leavevmode\vspace{1ex}
  \color{black}
  \begin{enumerate}
    \begin{minipage}{16ex}
    \item~$𝔗_⊥=\|FSA|$
    \item~$\PP=\|TA|$
    \end{minipage}
    \begin{minipage}{36ex}
    \item~$\A[\PP]{deep}=\A[TA]{deep}$
    \item~$\A[\PP]{non-linear}=\A[TA]{non-linear}$
    \end{minipage}
    \begin{minipage}{26ex}
    \item~$\A{monadic}=\|SRDPDA|$ \newline
    \end{minipage}
  \end{enumerate}
\end{observation}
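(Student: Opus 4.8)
The plan is to build one bisimulation dictionary that reads every type-checking derivation of a method chain as a run of a finite control automaton, and to obtain all five equalities --- understood, per the paper's convention, as equalities of computational complexity classes --- from it by moving in lock-step along the matching dimensions of the lattices $𝔄$ and $𝔗$. Equality~(2), $\PP=\|TA\|$, is the anchor; (1) is its storage-free degenerate case, (3) and (4) are its relaxations along the rewrite-depth and rewrite-multiplicity dimensions, and (5) is its monadic restriction.

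For the anchor I would fix the dictionary of \cref{figure:PP}: a grounded type $t∈Γ^▵$ is the content $𝛄∈Γ^▵$ of the tree store of the automaton $\A{tree-store, stateless, real-time, shallow, linear}$; the unit type $𝛜$ is the initial store $𝛄₀=𝛜$; a function name $σ∈Σ$ is an input letter; and a function definition $σ:γ(𝐱)→τ$ is the single consuming item $⟨σ,q₀,γ(𝐱)→τ,q₀⟩∈Δ$, whose left term $γ(𝐱)$ is shallow and linear exactly as \cref{shallow-rewrite} and \cref{linear-rewrite} demand, while $τ$ is the (possibly large) right-hand side. The invariant to prove, by induction on $i$, is that after reading $σ₁⋯σᵢ$ the store holds precisely the tree that the \textsc{Generic Function Application} rule assigns to $𝜀.σ₁.⋯.σᵢ$; the step is immediate because both the typing rule and the transition perform the same operation --- match the current tree against $γ(𝐱)$ to obtain the substitution $s$, then replace it by $τ/s$. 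The automaton is stateless because $\PP$ stores no control beyond the type itself ($|Q|=1$, \cref{stateless}), real-time because each method consumes one name with no silent move (\cref{real-time}), and deterministic because the \textsc{One Type Only} rule of \cref{figure:bottom} forbids a chain from carrying two distinct types. A chain type-checks iff no prefix hangs, which is exactly the acceptance condition of a stateless recognizer; a designated end-marker method, defined only on the accepting store shapes, encodes the accepting configuration so that the program recognizes $w$ iff the automaton accepts $w$. Reading the same dictionary backwards compiles any such tree automaton into a $\PP$ program, yielding both inclusions and hence $\PP=\|TA\|$.

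The remaining equalities follow by sliding this construction along the lattices. For~(1), the nyladic bottom $𝔗_⊥$ (\cref{figure:bottom}) forces every type to be a rank-$0$ symbol of $Γ∪\{𝛜\}$, so the tree store carries no unbounded data (\textsl{no-store}, \cref{no-store}) and the finite current type becomes the automaton's internal state; thus $𝔗_⊥$ is the point $\A{stateful}$, a DFSA, and $\|DFSA\|=\|FSA\|$. For~(3), admitting the derivation rule \eq{deep} lets the left term of a definition be deep while still linear, which is precisely the \textsl{deep-rewrite} relaxation of the same items, so the identical induction gives $\A[\PP]{deep}=\A[TA]{deep}$. For~(4), a non-linear argument pattern such as $σ:γ(x,x)→x$ makes matching test the equality of two unbounded subtrees, exactly as a \textsl{non-linear-rewrite} does, giving $\A[\PP]{non-linear}=\A[TA]{non-linear}$. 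For~(5), the monadic restriction $r(γ)=1$ collapses every tree into a monadic tree, which under the monadic abbreviation of \cref{section:auxiliary:storage} is a stack; the tree store becomes a pushdown store and the recognizer becomes a stateless real-time deterministic pushdown automaton, whence $\A{monadic}=\|SRDPDA\|$.

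The main difficulty is not the inductive invariant, which is routine once the dictionary is fixed, but reconciling acceptance with type-checking exactly and matching the determinism data at each lattice point. Concretely I must (i) show that the stateless acceptance convention (all input consumed, no enabled move), together with the end-marker idiom used to make an API fluent, recognizes the same language as the program, so that the two define the same class; (ii) confirm that \textsc{One Type Only} really renders the induced transition relation single-valued, tracking its interaction with overloading a name across distinct source classes, so that (1) lands on the deterministic DFSA and (2),(4) on the deterministic tree automaton of \cref{table:common:automata}; and, most delicately, (iii) square the deterministic real-time automaton produced by the monadic construction with the \textsl{non-deterministic}, \textsl{$𝜀$-transition} labels attached to the SRDPDA point in \cref{table:common:automata}, which I expect to require the known collapse of stateless pushdown recognizers (reflected in the annotation $⊊\|DCFL\|$) rather than a fresh argument. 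A final bookkeeping check is that a deep \emph{return} type $τ$ never inflates the rewrite depth --- measured on the left-hand side only --- so that the \textsl{shallow} label survives in (2),(4),(5).
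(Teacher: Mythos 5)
Your proposal is correct and takes essentially the same route as the paper: the paper justifies this observation precisely by your bisimulation dictionary (runs~$↔$~type-checking derivations, hanging~$↔$~type-check failure, input word or tree~$↔$~expression, items in~$Δ_A$~$↔$~definitions in~$Δ_T$, storage contents~$↔$~the current type), noting as you do that automaton states have no typing parallel except at~$𝔗_⊥=\|FSA|$, where classes correspond to states. Your extra bookkeeping---the inductive store/type invariant, the end-marker convention for stateless acceptance, determinism via \textsc{One Type Only}, and depth being measured on left-hand sides only (the same device the paper itself deploys later, in the proof of \cref{theorem:fluent}, to handle grounded return types)---merely makes explicit what the paper compresses into ``to be convinced, notice the natural bisimulation.''
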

\end{mdframed}

To be convinced, notice the natural bisimulation of automata and type system,
obtained by a one-to-one correspondence between, e.g.,
\begin{itemize}
  \item a \emph{run} of an automaton and the \emph{type checking process} as
    dictated by the type checking rules,
  \item the \emph{hanging} of an automaton, and \emph{failure of type checking},
  \item the \emph{input} word or tree, and the type-checked \emph{expression},
  \item \emph{input-output items} in~$Δ_A$ and \emph{function definitions} in~$Δ_T$, 
  \item the \emph{contents} of auxiliary storage, and the \emph{type} of checked
        expression.
\end{itemize}

Observe however that states of an automaton do not easily find their parallel
in the typing world (except for~$𝔗_⊥=\|FSA|$, in which classes correspond to
states). Luckily, the expressive power of many of the automata we deal with
does not depend on the presence of states, e.g., it is easy to see
that~$\A[TA]{deep}=\A[TA]{deep,stateful}$.

\section{Parametric Polymorphism and Real-time Automata}
\label{section:polymorphism}
The following result employs the type-automata correspondence to characterize
the complexity class of type system \PP.

\begin{theorem}\label{theorem:pp:dcfl}
$\PP=\|TA|=\|DCFL|$
\end{theorem}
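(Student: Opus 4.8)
The plan is to prove the two equalities in turn: first $\PP=\|TA|$, which formalizes the bisimulation already announced in \cref{observation:correspondence}(2), and then $\|TA|=\|DCFL|$ by simulations in both directions. For $\PP=\|TA|$, I would read a $\PP$ program—a set $\Delta$ of definitions $\sigma\colon\gamma(\mathbf{x})\to\tau$ followed by an expression to check—as a deterministic tree automaton, turning each definition into the input-output item whose single transition on input letter $\sigma$ matches the root of the tree store against the shallow linear pattern $\gamma(\mathbf{x})$ and rewrites it to $\tau$. The invariant is that the type of the prefix expression equals the current tree-store contents $t\in\Gamma^{\triangle}$, so that type checking succeeds exactly when the automaton accepts and type-check failure corresponds to hanging. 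The automaton is \emph{stateless} since $\PP$ has no state beyond the type itself, \emph{real-time} since each method application is exactly one transition with no $\epsilon$-moves, \emph{shallow} since the argument pattern $\gamma(\mathbf{x})$ has depth one, and \emph{linear} since the $\mathbf{x}$ are distinct; determinism is forced by the \textsc{One Type Only} rule. The \emph{polyadic} feature is precisely what makes the store a genuinely branching tree rather than a stack, in agreement with the monadic case $\A{monadic}=\|SRDPDA|$ of \cref{observation:correspondence}(5). Reading the construction backwards turns any tree automaton into an equivalent $\PP$ program.

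For the inclusion $\|TA|\subseteq\|DCFL|$, I would simulate a deterministic real-time tree automaton $A$ by a DPDA storing a pre-order linearization of $A$'s tree on its stack and using $\epsilon$-moves to unpack a deep right-hand side over several steps. The one delicate point is that a linear rewrite may still \emph{duplicate} a variable on its right-hand side and so blow up the tree; however, linearity of the left-hand sides means $A$ can never test two subtrees for equality, and since $A$ always inspects only the root, every computation descends along a single active path while off-path siblings are either promoted or discarded. Hence the DPDA need never materialize duplicated copies: it maintains the active spine on its stack and treats duplicated branches lazily, faithfully tracking both acceptance and hanging.

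The reverse inclusion $\|DCFL|\subseteq\|TA|$ is the crux, and I expect it to be the main obstacle. Starting from a DPDA $M$ for an arbitrary deterministic context-free language, I would build a \emph{stateless, real-time} tree automaton $A$ recognizing $L(M)$, which forces the removal of two resources of $M$ that $A$ lacks: states and $\epsilon$-transitions. Folding $M$'s finite control into the label of the tree's root disposes of states. The real work is eliminating $\epsilon$-moves while staying real-time, and this is exactly where the branching of the tree store is indispensable: a real-time deterministic \emph{pushdown} automaton is strictly weaker than DCFL \cite[Example 5.3]{Autebert:1997}, so a flat stack provably cannot suffice. The idea is to precompute the $\epsilon$-closure of $M$ and encode the resulting cascade of deferred reductions into the branching structure of the tree, so that the cumulative effect of all pending $\epsilon$-reductions is always available at the root and a single shallow rewrite per input letter reproduces one consuming step of $M$ together with its entire $\epsilon$-closure. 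Establishing that this encoding is maintainable by shallow, linear rewrites alone—and is finitely describable—is the technical heart of the proof; combined with the two preceding paragraphs it yields $\PP=\|TA|=\|DCFL|$.
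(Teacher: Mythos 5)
Your decomposition coincides with the paper's: $\PP=\|TA|$ is taken as the bisimulation of \cref{observation:correspondence}, and the substance is the two inclusions between $\|TA|$ and $\|DCFL|$. The difference is in how those inclusions are discharged. The paper constructs nothing: $\|DCFL|=\|DPDA|\subseteq\|TA|$ is obtained by citing \GR's tree encoding \cite{Gil:2019} (generalized by \cref{Guessarian:trees}), and $\|TA|\subseteq\|DPDA|$ is \cref{lower:bound}, proved by invoking Theorem~3 of \cite{Guessarian:83} restricted to monadic inputs. You instead sketch direct simulations in both directions, which buys self-containedness at the price of facing the two genuine technical obstacles yourself; you identify both correctly, but two of your phrasings would need repair in a full write-up. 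First, a literal pre-order linearization of the tree store cannot work: a shallow linear rewrite such as $\gamma(x_1,x_2)\to\gamma'(x_2,x_1)$ permutes unbounded stack segments, and a duplicating right-hand side would require copying one, neither of which a DPDA can do. What does work is the lazy representation you gesture at afterwards---stack frames drawn from the finite set of rule right-hand sides, each with a marked position, resolved by $\epsilon$-pops upon promotion---and its correctness rests on an observation worth isolating as a lemma: since shallow left-hand sides read only the root symbol, a subtree's content matters only when it is promoted to the root by a bare-variable right-hand side, which discards all siblings; as duplicate occurrences are incomparable leaf positions of the right-hand side, at most one copy of any duplicated region is ever entered, so your "single active path" claim is in fact true, while "off-path siblings are either promoted or discarded" should read simply "discarded." Second, "precomputing the $\epsilon$-closure" of the DPDA cannot mean a finite table per (state, top symbol), because a cascade of $\epsilon$-pops after one consuming move is unbounded; the encoding that works---and is what \GR's construction does---stores with each stack cell one branch per state $q$, representing the continuation should that cell ever be popped into $q$, so the unbounded cascade is pre-branched into the tree and resolved by one shallow rewrite per letter. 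Your appeal to $\|RDPDA|\subsetneq\|DCFL|$ \cite[Example 5.3]{Autebert:1997} to argue that branching is indispensable is exactly the right sanity check. In short: same skeleton as the paper, with the paper's citations replaced by construction sketches that are headed the right way, but whose two "technical hearts" are precisely the content of the results the paper cites.
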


Recalling the equivalence~$\PP=\|TA|$ (\cref{observation:correspondence}), the
gist of the theorem is the claim~$\|TA|=\|DCFL|$.
Towards the proof we draw attention to \GR's ‟\emph{tree encoding}”, which is
essentially a reduction by which every DPDA is converted to an equivalent tree automaton.
Their work then proceeds to show how this tree automaton is emulated in the \PP type system
they use (and that the emulation does not incur exponential space (and time)
overhead).
Hence, by \GR~\citeyear{Gil:2019},
\vspace{-0.3\baselineskip}
\begin{equation}\label{ECOOP}
 \|DCFL|=\|DPDA|⊆\|TA|=\PP.
\vspace{-0.3\baselineskip}
\end{equation}
A similar result is described by Guessarian~\citeyear{Guessarian:83}.
In fact, we note that Guessarian's contribution is more general, specifically she
achieves the result that augmenting tree automata with~$𝜀$-transitions and multiple
states does not increase their computational class.

\begin{fact}[\protect{\cite[Corollary 1.\1]{Guessarian:83}}]\label{Guessarian:trees}
  $\A[TA]{$𝜀$-transitions, stateful}=\|TA|$
\end{fact}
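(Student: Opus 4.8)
The plan is to prove the two containments separately. The inclusion $\|TA|\subseteq\A[TA]{$𝜀$-transitions, stateful}$ is immediate, since a plain tree automaton is the special case of a stateful $𝜀$-automaton that simply never uses its extra states or $𝜀$-rules; hence all the content lies in the reverse direction, namely compiling an arbitrary stateful, $𝜀$-enabled tree automaton $A$ into an equivalent plain $A'\in\|TA|$ recognizing the same forest. I would build $A'$ in two stages, first discharging the states and then the $𝜀$-transitions, reusing the observation already noted above that $\A[TA]{deep}=\A[TA]{deep,stateful}$.

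First I would eliminate the states. Since the only interface between a node and its parent in a bottom-up forest recognizer is the tree it leaves in the auxiliary store, I would fold the state into the store by passing to the signature $\Gamma'=\Gamma\times Q$ and stipulating that the root symbol of the tree produced at each node carries the state that $A$ assumes there. Because every rewrite is \textsl{shallow}, a single transition already inspects the root of each child's stored tree, so it can read off the children's states $q_1,\dots,q_r$ from those roots; the multi-state $\mathbf q$ on which $\delta$ depended is therefore recoverable, and the new state $q'$ can be written into the root of the output tree. This yields a stateless automaton in $\A[TA]{$𝜀$-transitions}$ equivalent to $A$, at the cost of a constant-factor blow-up of the signature.

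The substantial step is removing the $𝜀$-transitions. After the consuming transition at a node has combined its children into a tree $t$, the automaton performs a maximal run of $𝜀$-rewrites on $t$ before presenting it to the parent, and acceptance requires this run to terminate (no $\xi$ applicable). I would define, for each consuming rule, a composite rule that additionally carries out the ensuing $𝜀$-closure, and argue that only finitely many such composites arise, so that the result is again a finite specification of a real-time automaton. Concretely, I would saturate the rule set, repeatedly pre-composing each $𝜀$-rule with the consuming rules and with previously formed composites, retaining a composite only up to the point at which its left-hand side stabilizes. The correctness statement to establish is that the accepted forest is unchanged and that every composite can be written with a \textsl{shallow}, \textsl{linear} left-hand side, so that $A'\in\|TA|$ genuinely.

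The main obstacle is exactly the termination and boundedness of this $𝜀$-closure. Nothing a priori bounds the length of an accepting $𝜀$-run, and an $𝜀$-rewrite may discard the old root and expose a child, so the closure can in principle consult material arbitrarily deep in $t$; this is precisely the phenomenon that, for \textsl{language}-recognizers over a monadic (pushdown) store, makes $𝜀$-moves genuinely more powerful, the real-time deterministic pushdown automaton being strictly weaker than $\|DCFL|=\|DPDA|$ in \cref{table:common:automata}. The crux of the proof, and the place where the forest-recognition setting is used, is therefore to show that for a \textsl{deterministic} bottom-up tree automaton the surface configurations visited along an accepting $𝜀$-run cannot recur without forcing a non-terminating loop, so that the run has length bounded by a constant depending only on $A$; the net effect then depends on a bounded region of $t$ and is expressible by a single shallow rewrite. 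Once the closure is bounded, the remaining checks---preservation of \textsl{linearity} under composition, and matching $A$'s acceptance condition against the stabilized root symbol of $A'$---are routine. As a consistency check, the equality is compatible with $\PP=\|TA|=\|DCFL|$ of \cref{theorem:pp:dcfl} and with \cref{ECOOP}, since it merely rearranges features within the class already pinned to $\|DCFL|$.
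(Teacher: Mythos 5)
There is a genuine gap, and it sits exactly where you located the ``crux.'' Note first that the paper does not prove this statement at all: it is imported verbatim as Corollary~1.1 of Guessarian~\citeyear{Guessarian:83}, whose construction works by \emph{deferring} the pending~$𝜀$-work into extra structure of the stored tree, to be discharged piecemeal on later consuming steps --- not by pre-computing an~$𝜀$-closure into the consuming rules. (Also, the setting here is a \textsl{language-recognizer} with a tree store, not a \textsl{forest-recognizer}; your appeal to the forest setting is a misreading, though a minor one.)

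The fatal step is your claim that, for a deterministic automaton, an accepting~$𝜀$-run has length bounded by a constant depending only on~$A$. This is false, and the paper's own \cref{table:common:automata} certifies it: a DPDA is an instance of $\A[TA]{$𝜀$-transitions, stateful}$ with a monadic store, and since~$\|RDPDA|⊊\|DCFL|=\|DPDA|$, there are deterministic machines whose~$𝜀$-runs are \emph{necessarily} unbounded --- were they uniformly bounded, you could fold each closure into its consuming transition and conclude~$\|RDPDA|=\|DCFL|$, contradicting the table. Your non-recurrence argument shows only that configurations along a run are pairwise distinct, which bounds nothing, because the store is unbounded: during a long popping run the store shrinks at every step, so no configuration recurs, yet the run length grows with the input consumed so far. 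Consequently your saturation either generates infinitely many composites or composites whose left-hand sides inspect unboundedly deep portions of~$t$ --- in neither case do you land in $\|TA|$ with finitely many \textsl{shallow} rules. The reason the equality nonetheless holds is precisely that a tree store, unlike a pushdown, lets the real-time automaton \emph{record} the unfinished~$𝜀$-computation as additional branches of the stored tree and resolve one bounded piece per subsequent consuming step; this lazy-evaluation idea is Guessarian's and is also the essence of \GR's ``tree encoding'' behind \cref{ECOOP}. Your state-elimination stage has a smaller instance of the same defect: folding states into the root symbol via~$Γ'=Γ⨉Q$ breaks on rewrites whose right-hand side is a bare variable (pops,~$γ(x)→x$), where the newly exposed root carries a stale state annotation and re-stamping it would require a depth-two, hence non-\textsl{shallow}, left-hand side; fixing this again forces the deferral/restructuring technique rather than a direct product construction.
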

\Cref{Guessarian:trees} generalizes \cref{ECOOP}, since DPDAs are instances of
\A[TA]{$𝜀$-transitions, stateful}, where the tree store is linear.  The proof
of \cref{theorem:pp:dcfl} is completed by showing the inverse of \cref{ECOOP}.

\begin{lemma}\label{lower:bound}
$\|TA|⊆\|DPDA|$.
\end{lemma}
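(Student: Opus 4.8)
The plan is to prove the inclusion constructively, by turning a tree automaton into an equivalent DPDA that keeps the entire tree store on its single stack. Since \cref{theorem:pp:dcfl} only needs the reverse of \cref{ECOOP}, and $\|DPDA|=\|DCFL|$, it suffices to start from the automaton that \emph{defines} $\|TA|$ (stateless, real-time, with a tree store and shallow, linear rewrites) and build a deterministic pushdown machine recognizing the same language; the DPDA's states and empty-input ($\varepsilon$) transitions, which the tree automaton lacks, are exactly the slack that lets it reorganize its stack \emph{between} input letters. The crux throughout is faithfully representing the branching store on a linear stack.

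The representation I would use serializes the current store in pre-order with the \emph{root on top}: each pushed symbol is a store symbol of $\Gamma$ tagged with its arity, and a whole child subtree is a contiguous block further down the stack. I maintain the invariant that the live tree occupies a top prefix of the stack while every subtree the automaton has already dropped sits \emph{below} it as inert garbage that is never read again. Because rewrites are shallow (\cref{shallow-rewrite}), a transition inspects only the top symbol; because they are linear (\cref{linear-rewrite}), every child subtree is carried as a single, never-duplicated block, so the stack never has to reproduce a buried block elsewhere.

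To simulate a consuming transition with rewrite $\rho=\gamma(x_1,\dots,x_r)\to\tau'$, the DPDA reads the input letter, reads and pops the top symbol $\gamma$, and then realizes $\tau'$: it pushes the bounded skeleton of $\tau'$ and attaches the child blocks $t_1,\dots,t_r$ that lie just below. Two sub-operations must be made stack-respecting. First, discarding a subtree: a dropped block that lies \emph{above} the retained material is removed by an $\varepsilon$-move loop that pops exactly one complete subtree using an arity-balance counter kept on the stack itself (maintain a ``owe $k$ subtrees'' marker block on top, and on popping a node of arity $a$ replace one marker by $a$ fresh ones, stopping at $k=0$); a dropped block that lies \emph{below} the retained material needs no work and simply joins the garbage. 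Second, the reordering of children that a deep $\tau'$ may demand: rather than physically permuting blocks on the stack, which is impossible, I fold each node's child-permutation into its symbol, enlarging $\Gamma$ by a bounded family of annotated symbols $\gamma_\pi$, so that a later transition reads the annotation to learn where each logical child physically resides.

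Since each step pushes only a bounded skeleton and pops only complete subtrees, and the source automaton is deterministic, the construction yields a genuine DPDA, giving $L(A)\in\|DPDA|=\|DCFL|$ and hence $\|TA|\subseteq\|DPDA|$. The \textbf{main obstacle} is precisely justifying that \emph{one} stack is enough despite the branching store: a deep right-hand side can scatter the retained subtrees into positions whose required future access order differs from their physical stacking, and a priori this looks like it could demand two independent stacks (and thus escape $\|DCFL|$). The heart of the argument is that linearity (no subtree is ever copied) together with shallowness (only roots are ever read) forces the access pattern to be LIFO: whenever a buried subtree must be exposed to be read, the promoting rewrite has the form $\tau'=x_i$ and therefore discards all of its siblings, so every block physically above the retained one is exactly a block this rewrite throws away. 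Establishing this LIFO property rigorously, and checking that the bounded per-symbol annotations suffice to reconcile logical with physical child order at every node, is the step I expect to require the most care.
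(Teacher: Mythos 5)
You have a genuine gap, and it sits exactly at what you call the heart of the argument. You read \cref{linear-rewrite} as ``no subtree is ever copied,'' but in this paper linearity constrains only the \emph{left-hand side} of a rewrite: the definition in \cref{section:auxiliary:storage} explicitly lists $\gamma(x)\to\gamma'(x,x,x)$ as a \emph{linear} rewrite, and right-hand-side duplication is not a corner case but a standard feature of $\PP=\|TA|$ --- see \cref{example:pp}, where method \cc{b()} of \cc{γ4<x1,x2>} returns \cc{γ4<γ3<x1,x2>,γ3<x2,x1>>} (each variable used twice), or \cref{listing:s2}, where iterated doubling makes the type exponential in the length of the call chain. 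Your two load-bearing claims fail on such transitions. The invariant that each child subtree is ``a single, never-duplicated block'' cannot be maintained: a pushdown cannot duplicate an unbounded buried block, since reading it means popping (destroying) it, so on a rewrite $\gamma(x_1,x_2)\to\gamma'(x_1,x_1,x_2)$ your machine has no legal move; and iterating $\gamma(x)\to\gamma'(x,x)$ makes the tree store grow like $2^n$ after $n$ consumed letters, which no verbatim serialization on one stack can track. Likewise your LIFO lemma, stated as a consequence of ``linearity (no subtree is ever copied),'' is proved from a premise the model does not grant.

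The inclusion is nevertheless true, but rescuing your construction needs an idea your proposal does not contain: since rewrites are shallow, only roots are ever readable, and the only way to expose a buried root is a rewrite with right-hand side exactly $x_i$, which discards all siblings at that node --- so of several identical copies created by a duplicating right-hand side, at most one can ever be inspected, and a sharing (DAG-on-stack) representation that materializes each duplicated block once, with bounded per-node bookkeeping, can stand in for the copies. That liveness/sharing argument, not linearity, is what makes a single stack suffice, and it is the part you would have to develop carefully. For comparison, the paper does not attempt this construction at all: its proof of \cref{lower:bound} is a one-step reduction to Theorem~3 of Guessarian \citeyear{Guessarian:83} on (top-down) pushdown tree automata, noting that for input trees that are strings the traversal order is immaterial. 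A smaller omission in your writeup: you never align the DPDA's acceptance (accepting state, no further $\varepsilon$-moves) with the stateless TA's accept-iff-never-hang semantics, though that part is routine next to the duplication issue.
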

\vspace{-1.5ex}
\begin{proof}
The proof is constructed by employing Theorem~3 of
Guessarian~\citeyear{Guessarian:83}. (Notice that she uses the
term ‟\emph{pushdown tree automaton}” (PDTA) for \emph{top-down} tree-automata.
However, for the purpose of the reduction, we concentrate on input trees that
are in the form of a string, i.e., the tree traversal order is
immaterial.) 
\nomenclature[A]{PDTA}{pushdown tree automaton}
\end{proof}
\vspace{-0.5\baselineskip}

Observe that \cref{lower:bound} means that \GR's result is the best
possible in the following sense: It is impossible to extend \Fling
to support any wider family of fluent API languages within the limits of the
fragment of the Java type system that \Fling uses. Moreover, as shown by
Grigore~\citeyear{Grigore:2017}, allowing the fluent API generator
a larger type system fragment, makes type-checking undecidable if the larger
fragment includes the common Java idiom of employing \kk{super} in signatures,
as in e.g., method
\lstinline/boolean removeIf(Predicate<? super E> filter)/
found in the standard \lstinline/java.util.Collection/ class. 

Combining \cref{observation:correspondence}~(5),
known results (\cref{table:common:automata}) and \cref{theorem:pp:dcfl}, we have
\begin{equation}\label{eq:monadic:polyadic}
  \A{monadic}=\|SRDPDA|⊊\|DCFL|=\PP=\A{polyadic},
\end{equation}
i.e., had \PP been weakened to allow only monadic generics, its expressive
power would have been reduced. Conversely, we would like to check the changes
to complexity when \PP is made more potent. Consider now allowing \emph{generic
functions} (on top of methods of generic classes) by adding the
\textsl{deep-type-pattern} feature to \PP.

\begin{theorem}\label{theorem:PP:deep}
$\|DCFL|⊊\A[TA]{deep}=\A[\PP]{deep}$
\end{theorem}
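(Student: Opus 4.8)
The equality $\A[TA]{deep}=\A[\PP]{deep}$ is exactly item~(3) of \cref{observation:correspondence}, so the plan is to establish only the strict inclusion $\|DCFL|⊊\A[TA]{deep}$. The inclusion $\|DCFL|⊆\A[TA]{deep}$ is immediate: a shallow-rewrite tree automaton is a special case of a deep-rewrite one (a depth-one left-hand side is a fortiori a legal deep pattern), so the base automaton $\|TA|$ embeds into $\A[TA]{deep}$, and combining this with $\|TA|=\|DCFL|$ from \cref{theorem:pp:dcfl} gives the containment. All the content is therefore in exhibiting a language recognized by a deep tree automaton but lying outside $\|DCFL|$.

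The separating idea I would use is that a deep pattern lets the automaton pop from a sub-stack that is \emph{not} at the root of the tree store, something shallow rewrites cannot do. Encode the store as a binary node $\gamma(s_1,s_2)$ whose two children $s_1,s_2$ hold two independent stacks. Pushing onto either child is a shallow rewrite, but popping the top of the left child needs the depth-two rule $\gamma(\gamma_a(x_1),x_2)\to\gamma(x_1,x_2)$, which is available precisely when deep rewrites are permitted. This suffices to maintain two stacks under deterministic, real-time, linear control, and I would use it to recognize $\ell=\{a^nb^nc^n : n\ge 0\}$: encode the reading phase in the root symbol (say $\gamma_A,\gamma_B,\gamma_C$) so that the input order $a^*b^*c^*$ is forced and any deviation hangs; on each $a$ push a marker onto both children; on each $b$ pop the left child; allow the switch to the $c$-phase only when the left child has emptied (a deep rule that inspects the left child for the leaf), which enforces $n_a=n_b$; on each $c$ pop the right child; and, following the end-of-input convention already used in the $\|TA|=\|DCFL|$ recognizers of \cref{theorem:pp:dcfl}, accept exactly when the right child is also empty, enforcing $n_b=n_c$. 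Since $\ell\notin\|CFL|$ and $\|DCFL|⊊\|CFL|$, this yields $\ell\in\A[TA]{deep}\setminus\|DCFL|$.

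I expect the main obstacle to be bookkeeping rather than any conceptual difficulty. One must verify that every transition stays deterministic (no two rules overlap on a given input symbol and store shape), real-time (one store rewrite per consumed letter, no $\varepsilon$-moves), and linear (each left-hand side, including the deep pop $\gamma(\gamma_a(x_1),x_2)\to\gamma(x_1,x_2)$, uses each variable once), and that the phase markers genuinely serialise the two count comparisons in the right order---in particular that the left stack cannot be re-pushed once the $b$-phase starts and that the $b$-to-$c$ transition fires only on an emptied left stack. The subtlest point is that $\|TA|$ is \textsl{stateless}, so the final ``both stacks empty'' test must be made through input consumption, not through an accepting state; the clean way to discharge this is to mirror verbatim whatever end-marker mechanism realises $\|TA|=\|DCFL|$, keeping the two classes under identical conventions so that the only new ingredient is the deep pop rule.
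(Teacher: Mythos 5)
Your proposal is correct and matches the paper's proof in essence: the paper also separates via $aⁿbⁿcⁿ$, realizing exactly your scheme---phases carried in the root symbol ($γ₁,γ₂,γ₃$), two unary counters as the root's type arguments (its \cc{Succ}/\cc{Zero} stacks are your marker stacks), both incremented per~$a$, the first decremented per~$b$ via a deep pattern, the phase switch gated on a deep \cc{Zero}-test enforcing~$n_a=n_b$, the second decremented per~$c$, and acceptance through a final \cc{end()} call testing \cc{Zero}---written on the \A[\PP]{deep} side of \cref{observation:correspondence} rather than your \A[TA]{deep} side. The containment $\|DCFL|⊆\A[TA]{deep}$ is likewise obtained from \cref{theorem:pp:dcfl} (the paper cites \cref{ECOOP}), so no substantive difference remains.
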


Again, recall the equivalence~$\A[\PP]{deep}=\A[\|TA|]{deep}$ from
\cref{observation:correspondence}. The set
containment,~$\|DCFL|⊆\A[TA]{deep}$ follows from \cref{ECOOP}. It
remains to show that this containment is proper.

\nomenclature[C]{$c$}{an example letter in alphabet}
\ifjournal
The proof of \Cref{theorem:PP:deep} in \cref{section:PP:deep} is
by encoding the context sensitive language~$aⁿbⁿcⁿ⊆❴a,b,c❵^*$ in type system
\A[\PP]{deep}.
\nomenclature[C]{$c$}{an example letter in alphabet}

The following definition is pertinent to the proof.
\begin{definition}
  For an integer~$k≥0$, let~$Uₖ$, let the \emph{unary type encoding of~$k$} be
  a grounded type in the type system \PP defined by
  \begin{equation*}
    Uₖ=\begin{cases}
      k=0 & \cc{Zero} ⏎
      k>0 & \cc{Succ<$u_{k-1}$>}.
    \end{cases}
  \end{equation*}
  where types \lstinline/Zero/ and \lstinline/Succ/, are, in Java syntax,
  \lstinline/interface Zero{}/ and \lstinline/interface Succ<T>{}/.
\end{definition}
Thus,~$U₀=\cc{Zero}$,~$U₁=\cc{Succ<Zero>}$,~$U₂=\cc{Succ<Succ<Zero>{}>}$, etc.
The sequel tacitly assumes these type definitions.
\else
\nomenclature[C]{$c$}{an example letter in alphabet}
The proof of \Cref{theorem:PP:deep} in \cref{section:PP:deep} is
by encoding the context sensitive language~$aⁿbⁿcⁿ⊆❴a,b,c❵^*$ in type system
\A[\PP]{deep}, and relying on the following definition: For an integer~$k≥0$,
let~$Uₖ$, the \emph{unary type encoding of~$k$}, be a grounded type in
\PP,~$U₀=\cc{Zero}$, and~$Uₖ=\cc{Succ<}u_{k-1}\cc>$,
with types (in Java syntax) \lstinline/interface Zero{}/ and \lstinline/interface Succ<T>{}/
(assumed implicitly henceforth).
Thus,~$U₀=\cc{Zero}$,~$U₁=\cc{Succ<Zero>}$,~$U₂=\cc{Succ<Succ<Zero>{}>}$, etc.
\fi
\nomenclature[B]{$Uₖ$}{unary type encoding of integer~$k∈ℕ$, defined recursively}
\nomenclature[B]{$U₀$}{unary type encoding of 0, base of the~$Uₖ$ recursion}

Note that in type system \PP it is possible to \emph{increment} and \emph{decrement} integers,
\vspace{-4ex}
\begin{JAVAc}
static Zero zero() {¢¢ return null; }
interface Zero {¢¢
  Succ<Zero> inc();
}¢\columnbreak\leavevmode\newline\vspace{-3pt}¢
interface Succ<T> {¢¢
  Succ<Succ<T>> inc();
  T dec();
}
\end{JAVAc}\vspace{-3ex}
We have, e.g., that the type of expression \lstinline/zero().inc()/\lstinline/.inc().inc().inc().dec().inc()/
is~$U₄$.

We now show that just like deep patterns, non-linear patterns, i.e., patterns
in which the same type variable occurs more than once, increase the
computational power of \PP.

This increase is attributed here to the ability of non-linear patterns to
\emph{compare} nested types, in particular types that are unary encoding of the
integers. For example, the Java generic function
\lstinline/static<x>void equal(x e1,x e2){}/
(in type system \A[\PP]{non-linear-patterns,n-ary-functions}) type-checks if
the types of its arguments are (say)~$U₉$ and~$U₉$,
and does not type check if these are (say)~$U₈$ and~$U₇$. More importantly,
type comparison is also possible if all functions are unary.

Consider, e.g., the two argument generic type~$γ$,
\lstinline/interface/ \lstinline/γ<x1,x2>{}/, and the generic unary function
\cc{equal}, \lstinline/static<x>void/ \lstinline/equal(γ<x,x>e){}/,
in type system \A[\PP]{non-linear}. Then, function \cc{equal}
\emph{type-checks} if the type of its single argument is
(say)~$γ\cc<U₉\cc,U₉\cc>$, and \emph{does not type-check} if this type is
(say)~$γ\cc<U₇\cc,U₈\cc>$.  With this observation, we can state.

\begin{theorem}\label{theorem:PP:nonlinear}
$\|DCFL|⊊\A[TA]{non-linear}=\A[\PP]{non-linear}
$\end{theorem}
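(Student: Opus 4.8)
The equality $\A[TA]{non-linear}=\A[\PP]{non-linear}$ is exactly \cref{observation:correspondence}~(4), so the work is entirely in the proper containment. The inclusion $\|DCFL|\subseteq\A[TA]{non-linear}$ comes for free: \cref{theorem:pp:dcfl} together with \cref{observation:correspondence}~(2) gives $\|DCFL|=\PP=\A[TA]{}$, and since \textsl{non-linear} is the more potent value of its characteristic, $\A[TA]{}\subseteq\A[TA]{non-linear}$. The whole content of the theorem is therefore the \emph{separation}, which I plan to obtain by exhibiting an $\A[\PP]{non-linear}$ program whose accepted language is not even context free---hence, a fortiori, outside $\|DCFL|$, because $\|DCFL|\subsetneq\|CFL|$ (\cref{table:common:automata}). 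The witness will be the textbook language $L=\{a^nb^nc^n : n\ge 1\}\in\|CSL|\setminus\|CFL|$, the same non-context-free language used in \cref{theorem:PP:deep}, but escaping $\|DCFL|$ here through non-linear comparison rather than through \textsl{deep} patterns.

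The construction reuses the unary numerals $U_k$ and the non-linear comparison idiom of the generic \cc{equal} function discussed just before the theorem. I would keep three ternary class names $\gamma_A,\gamma_B,\gamma_C$, each storing a triple $(U_i,U_j,U_k)$ of unary numerals, and let the outermost constructor (the root of the stored tree) serve as a phase marker; this keeps the automaton \textsl{stateless}, as required of a $\|TA|$ point, while still enforcing the regular shape $a^\ast b^\ast c^\ast$. From the bootstrap type $\gamma_A(U_0,U_0,U_0)$, reading a letter is a \emph{shallow, linear} rewrite that bumps one coordinate and, on a phase change, swaps the root: $\gamma_A(x_1,x_2,x_3)\to\gamma_A(\mathrm{Succ}(x_1),x_2,x_3)$ for an $a$, the phase-changing $\gamma_A(x_1,x_2,x_3)\to\gamma_B(x_1,\mathrm{Succ}(x_2),x_3)$ for the first $b$, and analogously for the remaining transitions. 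Crucially, class $\gamma_B$ declares no $a$-method and $\gamma_C$ declares neither $a$- nor $b$-method, so once a $b$ (resp.\ $c$) has been read no earlier letter may follow; a word $a^{n_a}b^{n_b}c^{n_c}$ thus yields the type $\gamma_C(U_{n_a},U_{n_b},U_{n_c})$. Acceptance is a single unary function in the \cc{equal} style, whose \emph{non-linear yet shallow} argument pattern $\gamma_C(x,x,x)$ applies to $\gamma_C(U_{n_a},U_{n_b},U_{n_c})$ precisely when one grounded substitution sends $x$ to all three coordinates, i.e.\ when $U_{n_a}=U_{n_b}=U_{n_c}$, i.e.\ when $n_a=n_b=n_c$. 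The accepted language is then exactly $L$.

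The point needing the most care---and the reason the theorem is not a triviality---is that the equality test must be performed by a \emph{shallow} pattern, since $\A[\PP]{non-linear}$ retains the default \textsl{shallow-type-pattern} value of $C_2$, whereas the numerals it compares have \emph{unbounded} depth. The resolution is the very phenomenon stressed in the run-up to the theorem: a single variable inside a shallow non-linear pattern binds an entire, arbitrarily deep subtree, so $\gamma_C(x,x,x)$ forces equality of three unbounded $\mathrm{Succ}$-towers in one legal typing step, with all unbounded depth confined to the \emph{return} types of the letter-reading functions, which \PP permits. What remains is bookkeeping: verifying that the root-as-phase discipline accepts no word outside $a^\ast b^\ast c^\ast$, and that the lone non-linear match is exactly the step sanctioned by the \textsc{Generic Function Application} rule of \cref{figure:PP} (generalized to non-linear patterns). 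Neither check is deep, but together they are where the construction could silently over- or under-accept, and so they are where I would spend the verification effort.
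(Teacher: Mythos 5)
Your proposal is correct and takes essentially the same approach as the paper's own proof in \cref{section:PP:nonlinear}: the same witness language $a^nb^nc^n$, the same reduction of the containment and the equality to \cref{theorem:pp:dcfl} and \cref{observation:correspondence}, and the same construction---three three-parameter generics acting as phase markers on a stateless automaton, shallow linear increments of unary numerals in the return types, and a single shallow non-linear pattern $\gamma(x,x,x)$ performing the equality test. Up to renaming, your construction is exactly the paper's \cref{listing:non-linear}, including the observation you rightly stress, that a variable in a shallow non-linear pattern binds an arbitrarily deep subtree.
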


The proof of \cref{theorem:PP:nonlinear} in \cref{section:PP:nonlinear} is
again by encoding the context sensitive language~$aⁿbⁿcⁿ⊆❴a,b,c❵^*$, but this
time in type system \A[\PP]{non-linear}. The ability of this type system to
compare integers encoded as types is the gist of the proof.


Recall that a type system is \emph{dyadic} if no generic takes more than two
type parameters. Considering the \textsl{shallow} case, we claim no more than
placing \textsl{dyadic} between \textsl{monadic} and \textsl{polyadic} in
\eq{monadic:polyadic},
\vspace{-0.5ex}
\begin{equation}\label{eq:dyadic}
  \A{monadic}⊆\A{dyadic}⊆\A{polyadic},
\vspace{-0.5ex}
\end{equation}
although we conjecture~$\A{monadic}⊊\A{dyadic}$ can be shown relatively easily.
In contrast, in \textsl{deep} type system, the expressive power does not
increase by allowing more than two generic parameters.

\begin{theorem}\label{theorem:dyadic}
  \A{deep,polyadic}=\A{deep,dyadic}
\end{theorem}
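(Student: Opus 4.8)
The inclusion $\A{deep,dyadic}⊆\A{deep,polyadic}$ is immediate, since a \textsl{dyadic} program is a \textsl{polyadic} one whose generics happen to have rank at most two; thus the \textsl{dyadic} point sits below the \textsl{polyadic} one and accepts a subset of the expressions. The burden of the theorem is the reverse inclusion $\A{deep,polyadic}⊆\A{deep,dyadic}$, for which the plan is to exhibit a \emph{rank-reducing encoding} of types that replaces every generic of arbitrary arity by a single binary list constructor together with unary label wrappers. Via \cref{observation:correspondence}~(3), which identifies $\A{deep,polyadic}=\A[\PP]{deep}$ with $\A[TA]{deep}$, this is the type-theoretic shadow of the classical simulation of a ranked-tree automaton by one whose tree store is binary; I would nonetheless argue directly on type definitions, because in the default \textsl{unary-functions} regime the checked expression is a plain method chain, which the construction leaves \emph{untouched}---only the class signature $Γ$ and the type patterns appearing in function definitions are rewritten.

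Concretely, fix the finite signature $Γ$ and introduce fresh symbols: a binary $\mathrm{cons}$, a nullary $\mathrm{nil}$, and, for each $γ∈Γ$, a unary wrapper $\widehat{γ}$, so that $Γ'=❴\mathrm{cons},\mathrm{nil}❵∪❴\widehat{γ}\mid γ∈Γ❵$ is \textsl{dyadic} (every rank is at most two). Define $E$ on trees, terms and variables by $E(𝛜)=𝛜$, $E(x)=x$, and, writing $C_{r}=\mathrm{cons}(E(tᵣ),\mathrm{nil})$ and $C_{i}=\mathrm{cons}(E(tᵢ),C_{i+1})$ for $1≤i<r$,
\begin{equation}\label{eq:dyadic:encoding}
  E(γ(t₁,…,tᵣ))=\widehat{γ}(C₁),
\end{equation}
i.e.\ a node is encoded by placing its label wrapper atop the right-leaning list of its encoded children. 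Each definition $δ=σ:τ→τ'$ is translated to $σ:E(τ)→E(τ')$, keeping the name $σ$ and hence the alphabet $Σ$ fixed. Two facts drive the proof. First, $E$ commutes with substitution, $E(τ/s)=E(τ)/E(s)$, where $E(s)$ replaces each entry $x→τ_x$ of $s$ by $x→E(τ_x)$; a structural induction then gives the \emph{matching equivalence}: $τ$ matches $t$ (that is, $t=τ/s$) \emph{iff} $E(τ)$ matches $E(t)$, with corresponding substitutions, and linearity is preserved because $E$ neither copies nor deletes variables. Second, \eq{dyadic:encoding} inflates depth only by a constant factor: since $Γ$ is finite, $R=\max_{γ∈Γ}r(γ)$ is a constant and $\Depth(E(τ))≤(R+1)\,\Depth(τ)$, so an encoded pattern is still of finite depth, i.e.\ a legal \textsl{deep} pattern. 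This bounded blow-up is exactly what \textsl{shallow} systems cannot afford, which is why \eq{dyadic} claims only containment whereas here equality holds.

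The bisimulation is then a lockstep induction on the checked method chain. Type-checking in the polyadic program evolves the current grounded type $t$ by a rewrite $t=τ/s\mapsto τ'/s$ drawn from some definition $σ:τ→τ'$; by the matching equivalence, running the encoded program on the \emph{same} chain evolves $E(t)$ through $E(τ)→E(τ')$, so the two runs hang on the same step and accept the same expressions. The invariant making this sound is \emph{image closure}: the initial type $𝛜=E(𝛜)$ is an encoding and every encoded rewrite maps encodings to encodings, so every type reachable under $E(Δ)$ lies in $E(Γ^▵)$; consequently a variable inside an encoded pattern is never bound to a malformed binary tree outside the image, and no spurious match can arise. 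The step I expect to be most delicate is precisely this image closure together with the injectivity of $E$ on reachable types---one must check that the $\widehat{γ}/\mathrm{cons}/\mathrm{nil}$ discipline is uniquely decodable, so that each binary tree produced by the encoded program determines a single polyadic type and the wrapper cannot be misread as a child nor the list terminator as data. Granting this, \eq{dyadic:encoding} turns every \textsl{deep,polyadic} program into an equivalent \textsl{deep,dyadic} one, which establishes $\A{deep,polyadic}⊆\A{deep,dyadic}$ and, with the trivial inclusion, the claimed equality.
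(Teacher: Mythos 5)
Your proposal is correct and takes essentially the same route as the paper's (sketched) proof: the paper replaces each rank-$k>2$ symbol $γ$ by a chain $γ₁,…,γ_{k-1}$ of rank-two nodes and a rank-one $γₖ$, rewriting every occurrence $γ(τ₁,…,τₖ)$ as $γ₁(τ₁,γ₂(τ₂,…γ_{k-1}(τ_{k-1},γₖ(τₖ))…))$ in rules and initial storage, which is exactly your comb encoding with symbol-specific spine nodes in place of your uniform $\mathrm{cons}$/$\mathrm{nil}$ plus unary wrappers, and your matching-equivalence and image-closure lemmas just spell out the bisimulation the paper leaves implicit. One cosmetic repair: the paper's signatures require $r(γ)≥1$, so your nullary $\mathrm{nil}$ is not a legal symbol --- use the designated leaf $𝛜$ (or a unary terminator applied to $𝛜$) as the list end, which leaves the rest of your argument unchanged.
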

\vspace{-1.5ex}
\begin{proof} (sketch)
Relying on the automata-type correspondence, we construct for every
\A[TA]{deep} automaton~$A$, an equivalent binary \A[TA]{deep}~$A'$. Let~$γ$ be 
a tree node in~$A$ of rank~$k>2$: Replace~$γ$ with nodes~$γ₁,γ₂,…,γ_{k-1}$ of
rank two, and~$γₖ$ of rank one. Tree nodes appear in both sides of tree rewrite
rules, and in the initial auxiliary storage tree: Replace every occurrence
of~$γ$ in~$A$,~$γ(τ₁,τ₂,…,τₖ)$, with~$γ₁(τ₁,γ₂(τ₂,…γ_{k-1}(τ_{k-1},γₖ(τₖ))…))$.
\end{proof}

\section{Type Capturing and~$𝜀$-Transitions}
\label{section:capture}
In the previous section we showed that the addition of
\textsl{deep-type-pattern} property, as found in generic, non-method
functions of (say) Java, to the \PP type system, increases its computational
complexity, but does not render it undecidable. We now prove that the addition
of even rudimentary \kk{typeof} to \PP makes it undecidable.

\begin{theorem}\label{theorem:TA:capturing:TM}
$\A[\PP]{deep,rudimentary-typeof}=\text{RE}$.
\end{theorem}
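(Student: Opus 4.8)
The plan is to prove the two inclusions separately, routing both through the automata--type correspondence of \cref{observation:correspondence}. First I would extend that bisimulation one step further, matching the \textsl{rudimentary-typeof} feature with \textsl{$𝜀$-transitions} of tree automata: a function defined by $σ:𝛕→\kk{typeof}~ϑ$ consumes no input letter---it only reshapes the type of an already-typed sub-expression---exactly as an $𝜀$-transition reshapes the store without advancing on the input. The \textsl{rudimentary} restriction to a single function symbol per clause is \emph{not} a genuine limitation, because the recursive premise $ϑ/s:t$ of the \textsc{Typeof Expression} rule (\cref{figure:type:capturing}) re-enters the typing relation and may itself fire another \kk{typeof} function; a chain of such functions threads an arbitrarily long sequence of $𝜀$-steps through a single type check. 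Combined with \cref{observation:correspondence}, this reduces the theorem to the automata statement $\A[TA]{deep,$𝜀$-transitions}=\text{RE}$.

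For the lower bound $\text{RE}⊆\A[TA]{deep,$𝜀$-transitions}$ I would simulate an arbitrary Turing machine inside a single tree store. A configuration (tape contents, head position, control state) is encoded as a tree $\cc{cfg}(L,R)$ whose two children are stacks: $L$ holds the reversed tape to the left of the head and $R$ holds the tape from the head rightwards, so the scanned cell always sits at the top of $R$. The finite control state is folded into finitely many copies of the root symbol, which is legitimate since $\A[TA]{deep}=\A[TA]{deep,stateful}$ (noted after \cref{observation:correspondence}). A single machine move then becomes a \emph{bounded-depth} rewrite: a right move, say, rewrites $\cc{cfg}_q(L,\cc{Cons}(a,R'))\to\cc{cfg}_{q'}(\cc{Cons}(a',L),R')$, which inspects only depth two and is therefore \textsl{deep} (indeed almost-shallow), while boundary cases are handled by rewrites that materialise a blank, $\cc{cfg}_q(L,\cc{Nil})\to\cc{cfg}_q(L,\cc{Cons}(\cc{blank},\cc{Nil}))$. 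The input word is read once by ordinary consuming transitions that push its letters onto $R$ to build the initial tape; \emph{all} subsequent machine steps are $𝜀$-transitions, and acceptance is declared on reaching a halting state. The key point is that the two-stacks-in-one-tree layout keeps the scanned cell at the top no matter how far the head has travelled, so unbounded tape navigation is realised by a \emph{fixed, finite} set of bounded-depth rewrites---this is precisely the role of \textsl{deep}, whereas dropping the real-time constraint (the $𝜀$-transitions) is what lets the machine run for arbitrarily many steps on a fixed-length input. Transcribing this automaton back across the correspondence yields the desired $\A[\PP]{deep,rudimentary-typeof}$ library, in which the move rewrites appear as \kk{typeof} auxiliary functions and the deep patterns as deeply nested generic parameter types.

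The upper bound $\A[\PP]{deep,rudimentary-typeof}⊆\text{RE}$ is the routine direction: the typing rules of \cref{figure:PP} and \cref{figure:type:capturing} define a recursively enumerable derivability relation, so a straightforward search over candidate derivations---equivalently, a direct simulation of the emulated automaton---semi-decides whether a given expression type-checks. An expression that type-checks is found in finite time, which is all that $\text{RE}$ demands.

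I expect the main obstacle to be the lower bound, specifically discharging the faithfulness of the encoding: proving that the generated set of \kk{typeof} functions type-checks an expression \emph{if and only if} the simulated machine accepts, with no spurious derivation arising from an alternative order of applying $𝜀$-transitions. One must verify that the bounded-depth patterns cannot ``see'' past the top of either stack and thereby take an illegitimate shortcut, and that stuck or partial configurations correspond exactly to the machine hanging, so that they do not accidentally type-check. Targeting $\text{RE}$ rather than $\text{R}$ is what keeps this tractable---nondeterminism and nontermination of the simulation are harmless, removing the usual determinism bookkeeping and reducing correctness to a statement about reachable configurations. By contrast with \cref{Guessarian:trees}, where $𝜀$-transitions add nothing to a \textsl{shallow} tree automaton, here it is precisely their interaction with \textsl{deep} rewrites---arbitrary tree restructuring iterated an unbounded number of times off the input clock---that carries the system all the way up to $\text{RE}$.
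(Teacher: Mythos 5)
Your proposal is correct and follows essentially the same route as the paper: the paper's \cref{lemma:deep} likewise simulates a Turing machine by a \textsl{deep}, \textsl{stateful} tree automaton with $𝜀$-transitions, encoding the tape as two stacks hanging off a single root (a rank-3 symbol~$∘$ with the scanned cell as the center child, versus your rank-2 $\cc{cfg}$ with the scanned cell atop the right stack), and then transcribes each automaton transition $⟨q,τ→τ',q'⟩$ into an overloaded auxiliary function $φ_q:τ→\kk{typeof}~τ'.φ_{q'}$, with direct (non-\kk{typeof}) definitions at accepting states terminating the resolution — exactly your observation that chained \textsl{rudimentary} \kk{typeof} clauses realize unbounded $𝜀$-sequences. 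The remaining differences (states folded into root symbols versus function names, input loaded by consuming transitions versus baked into the initial expression $e=∘(𝛜,σ₁,σ₂⋯σₙ).φ_{q₀}$, and your explicit RE upper bound, which appears in the journal version) are cosmetic.
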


The following reduction is pertinent to the proof of~\cref{theorem:TA:capturing:TM}.

\begin{lemma}\label{lemma:deep}
A Turing machine~$M$ can be simulated by a \textsl{deep-rewrite},
\textsl{stateful} tree automaton~$A$ which is allowed
\textsl{$𝜀$-transitions}.
\end{lemma}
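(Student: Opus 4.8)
The plan is to leverage the one capability a tree store has over a plain pushdown store---branching---to lay out the two-way tape of $M$ as a pair of stacks hanging from a single binary root. Writing the tape as $\cdots a_{-1}\,a_0\,a_1\cdots$ with the head scanning cell $0$, I represent it by the tree $\mathrm{tape}(L,R)$, where the monadic chain $L=a_{-1}(a_{-2}(\cdots))$ holds the cells left of the head (its top being the cell immediately left of the head) and the monadic chain $R=a_0(a_1(\cdots))$ holds the scanned cell together with everything to its right (its top being the scanned symbol $a_0$). The finite control of $A$ carries the current state of $M$, so $A$ is \textsl{stateful}; a step of $M$ depends on the scanned symbol $a_0$, which sits at depth two in $\mathrm{tape}(L,R)$, so the simulating rewrites must inspect depth two and are thus \textsl{deep-rewrite}; and since the number of steps of $M$ is not bounded by $|w|$, the whole computation is driven by \textsl{$\epsilon$-transitions}, leaving the consuming function $\delta$ for input loading only.

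First I would load the input. Using $\delta$, read $w=w_1\cdots w_n$ letter by letter, pushing each onto the \emph{left} child by the (shallow) rewrite $\mathrm{tape}(x_L,x_R)\to\mathrm{tape}(w_i(x_L),x_R)$; a single $\epsilon$-pass that transfers the chain to the right child (one deep rewrite $\mathrm{tape}(c(x_L),x_R)\to\mathrm{tape}(x_L,c(x_R))$ per $c\in\Gamma$, repeated until the left child is the leaf) reverses it and lands $w_1$ on top, yielding the initial configuration $\mathrm{tape}(\text{leaf},\,w_1(\cdots w_n(\text{leaf})))$ with the head on $w_1$. Next I install, for each instruction of $M$, one $\epsilon$-rewrite per relevant symbol combination; as $\Gamma$ is finite this is a finite set. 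A right-moving instruction $(q,a)\mapsto(q',a',\text{right})$ becomes, in state $q$ with target $q'$, the rewrite $\mathrm{tape}(x_L,\,a(x_R))\to\mathrm{tape}(a'(x_L),\,x_R)$: the scanned $a$ is overwritten by $a'$ and pushed onto $L$, and the former second cell surfaces as the new head cell. A left-moving instruction $(q,a)\mapsto(q',a',\text{left})$ becomes, for each possible top symbol $b\in\Gamma$ of the left stack, the rewrite $\mathrm{tape}(b(x_L),\,a(x_R))\to\mathrm{tape}(x_L,\,b(a'(x_R)))$: here $a'$ is pushed back and the popped $b$ is re-seated as the new head symbol. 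Each left-hand side has depth exactly two, so \textsl{deep-rewrite} is both used and, a fortiori, required; tape ends are handled by extra rules firing when a child is the leaf, reading it as a blank.

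Finally, I make each accepting state of $M$ an accepting state of $A$ with no outgoing $\epsilon$-transition, so that $A$ halts and accepts exactly when $M$ does. Correctness then follows from a routine induction: after $k$ simulated $\epsilon$-steps the tree $\mathrm{tape}(L,R)$ encodes precisely the $k$-th configuration of $M$ on $w$, whence $A$ accepts $w$ if and only if $M$ halts accepting on $w$ (if $M$ diverges, $A$ $\epsilon$-loops and never accepts, as desired). The loading reshuffle, being a fixed reversal of a monadic chain between the two children, is likewise a finite family of deep rewrites and I regard its verification as routine.

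\textbf{Main obstacle.} The crux---and the one place where the tree store is genuinely needed---is realizing head motion as a rewrite of \emph{bounded} depth. A single stack cannot transfer its top element to a second, independent stack, which is exactly why one pushdown store recognizes only context-free languages; the tree store dissolves this barrier by letting the two stacks share a common root, but then the scanned symbol and the stack tops live at depth two, forcing the symbol-shuttling rewrites to be deep rather than shallow. The subtler point is the left move, where the symbol that becomes the new head cell is not fixed by the instruction: I resolve this by branching on the finitely many candidate top symbols $b\in\Gamma$, replacing a single unavailable ``deep pop with an unknown label'' by a finite family of concrete deep rewrites, one per $b$.
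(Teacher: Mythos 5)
Your proposal is correct and takes essentially the same approach as the paper's proof: the two-way tape is laid out as two stacks hanging from a common root, each machine step becomes a depth-two $\epsilon$-rewrite with the unknown symbol exposed by a head move handled via a finite case split over $\Gamma$ (plus leaf-as-blank rules for the tape ends), and $M$'s states are adopted into $A$'s finite control. The only differences are cosmetic: the paper uses a rank-3 symbol $\circ$ whose \emph{center} child holds the scanned cell (you keep the scanned cell atop the right stack of a rank-2 root), and the paper hard-codes the input word into the initial auxiliary storage rather than loading it through consuming transitions followed by an $\epsilon$-reversal pass as you do---your loading phase is in fact closer to the paper's own stipulation that tape automata consume the input one letter at a time before processing it.
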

\nomenclature[B]{$M$}{a Turing machine}

\begin{proof}
As explained in \cref{section:automata}, we can assume that~$M$ accepts its
input on the tape with the head on the first letter, and then engages
in~$𝜀$-transitions only. Also, w.l.o.g.,~$M$'s tape is extended infinitely
in both directions by an infinite sequences of a designated blank symbol~$♭$.
\nomenclature[C]{$♭$}{designated blank symbol occupying uninitialized cells of tape auxiliary storage}
\newlength{\savedparindent}
\setlength{\savedparindent}{\parindent}
⏎[1.5pt]\hspace{-3ex} 
\begin{minipage}\textwidth
  \begin{wrapfigure}{r}{32ex}
      \caption{Turing machine accepting the language~$aⁿbⁿ$}
      \label{figure:Turing}
      \centering
      \scriptsize
      \begin{adjustbox}{max width=\linewidth,bgcolor={RoyalBlue!20}}
          \hspace{1ex}
          \begin{tikzpicture}[node distance=8ex,auto,shorten >=1pt]
\tikzstyle{s}=[circle,fill=SpringGreen!60,inner sep=2pt,draw]
\tikzstyle{e}=[thick]

\node[s] (q1) {$q₁$};
\node[s] (q2) [below right=of q1] {$q₂$};
\node[s] (q3) [below left=of q1] {$q₃$};
\node[s,initial] (q0) [above left=of q3] {$q₀$};
\node[s,accepting] (q4) [below left=of q0] {$q₄$};

\path[->]
  (q0) edge[color=purple,e] node[above]{$a→♭₊$} (q1)
  (q1) edge[e] node[]{$♭→♭₋$} (q2)
  (q2) edge[e] node[]{$b→♭₋$} (q3)
  (q3) edge[e] node[above right=-.8em and 0.5em]{$♭→♭₊$} (q0)
  (q0) edge[e] node[below right=0em and -0.5em]{$♭→♭₊$} (q4)
(q1) edge[e,loop above] node{$\begin{aligned}a→a₊ ⏎[-.6em] b→b₊\end{aligned}$} ()
(q3) edge[e,loop below] node{$\begin{aligned}a→a₋ ⏎[-.6em] b→b₋\end{aligned}$} ()
;
\end{tikzpicture}
        \end{adjustbox}
    \end{wrapfigure}
  \setlength{\parindent}{\savedparindent}
  \indent
\Cref{figure:Turing} is an example of such a machine~$M$ with internal
states~$q₀$ through~$q₄$, single accepting state~$q₄$, and, tape
alphabet~$Γ=❴a,b,♭❵$. The machine terminates in an accepting state if and only
if the tape is initialized with a word~$aⁿbⁿ$,~$n≥0$: To see this, notice that
the machine repeatedly replaces~$a$ from the beginning of the word and its
counterpart letter~$b$ from the word's end by~$♭$, until no more~$a$'s or~$b$'s
are left. The convention of depicting transitions over edges in the graph of
states is standard, e.g., the arrow and label rendered in purple (going from
state~$q₁$ to state~$q₂$) is the~$𝜀$-transition item
\vspace{-0.0\baselineskip}\par
\begin{minipage}{0.96\linewidth}
\begin{equation}\label{eq:Turing}
  ⟨q₀, a→♭₊,q₁⟩,
\end{equation}
\end{minipage}\vspace{0.2\baselineskip}
\par
\hspace{-\parindent}%
which states that if the Turing machine is in internal state~$q₀$, and, the
symbol under head is~$a$, then \1 replace~$a$ by~$♭$, \2 increment~$h$, and
and, \3 change internal state to~$q₁$.
\end{minipage}
\vspace{1ex}

The encoding of~$M$ in~$A$ includes the following components:
\begin{enumerate}
  \item Adopting the set of states~$Q$, set of accepting states~$F$, and
    initial state~$q₀$ of~$M$.

  \item A rank-1 tree symbol for each of the tape symbols, including~$♭$.

  \item Employing the designated leaf symbol~$𝛜∉Γ$ to encode the infinite
    sequences of~$♭$ at the ends of the tape.

  \item Introducing a rank-3 tree symbol~$∘$ for encoding the tape itself. The
    center child of a node labeled~$∘$ encodes of a~$∘$ node encodes the cell
    under the head; its left (resp.~right) child encodes the tape to the left
    (resp.~to the right) of the head. For example, the tape
    contents~$⋯♭♭♭b\underline{a}abb♭♭♭⋯$ is encoded by a certain
    tree~$t=∘(b(𝛜)),a(𝛜),a(b(b(𝛜)))$.

    For the sake of readability we write~$∘$ nodes in infix notation,
    e.g.,~$t=b(𝛜))/a(𝛜)/a(b(b(𝛜)))$, or even more concisely~$t=b/a/abb$.

  \item Setting~$𝛄₀=𝛜/σ₁/σ₂⋯σₙ$, i.e., letting the initial state of auxiliary
    storage encode the input word~$σ₁σ₂⋯σₙ$.

  \item Introducing~$|Σ|+1$ transitions in~$A$ for each of~$M$'s transitions: A
    single transition for dealing with the~$𝛜$ leaf denoting an infinite
    sequence of blanks, and a transition for each tape symbol. In
    demonstration, transition~$⟨q₀, a→♭₊,q₁⟩$~\eq{Turing} is encoded in
    four~$𝜀$-transitions of~$A$ which differ only in their tree rewrite rule.
    \begin{equation}\label{eq:Turing:ta}
      \begin{array}{*2l}
        ⟨q₀,~x₁/a/ax₂→♭x₁/a/x₂,~q₁⟩ &\qquad ⟨q₀,~x₁/a/bx₂→♭x₁/b/x₂,~q₁⟩ ⏎
        ⟨q₀,~x₁/a/♭x₂→♭x₁/♭/x₂,~q₁⟩ &\qquad ⟨q₀,~x₁/a/𝛜→♭x₁/♭/𝛜,~q₁⟩.
      \end{array}
    \end{equation}
  The rules above distinguish between the values the right child of node~$∘$,
  i.e., the symbol to the right of the head: For example, the first
  rule,~$x₁/a/ax₂→♭x₁/a/x₂$, deals with the case this child is~$a$ followed by
  some tape suffix captured in variable~$x₂$. The rule rewrites the node,
  making~$a$ the center child.
\end{enumerate}
Notice that with the encoding, the input to~$A$ is encoded in its transitions rules.
  \qedhere
\end{proof}

Relying on \cref{lemma:deep}, the proof of \cref{theorem:TA:capturing:TM} is
completed by encoding the automaton~$A$ of the lemma in the appropriate
type system.

\begin{proof}[Proof of \cref{theorem:TA:capturing:TM}]
\ifjournal
An evidence of type correctness of a program
of type system~$\A[\PP]{rudi\-mentary-typeof, deep}$ is a sequence of
application of typing rules. Since the evidence can be checked
non-deterministically by a Turing
machine, we have that~$\A[\PP]{rudimentary-typeof, deep}⊆\text{RE}$.

Conversely, encode the automaton~$A$ of \cref{lemma:deep}
\else
We encode automaton~$A=A(M)$ 
\fi
as a program~$P=ΔΞe$ in type system \A[\PP]{deep,rudi\-mentary}.
In this encoding, set~$Δ$ is empty, and there is a function~$ξ∈Ξ$ for
every~$𝜀$-transition item in set~$Ξ$ of~$A$. Expression $e$ type
checks against $Ξ$, if, and only if, machine~$M$ (automaton~$A$) halts. 

In the encoding, the tree vocabulary of~$A$ incarnates as generic types: A
three parameter generic type~$∘$, and generic one-parameter type~$γ$ for each
tape symbol, including~$♭$. Also the argument to every function~$ξ∈Ξ$
function is a deep pattern over possible instantiations of~$∘$.

Also, introduce a function symbol~$φ_q$ for every~$q∈Q$, and let every
transition~$⟨q,τ→τ',q'⟩$ of~$A$ add an overloaded
definition~$φ_q:τ→\kk{typeof}~τ'.φ_q'$ to this symbol. Thus, function~$φ_q$
emulates~$A$ in state~$q$ with tape~$τ$: It applies the rewrite~$τ→τ'$ to the
type, and employs the resolution of \kk{typeof} to continue the computation in
function~$φ_q'$ which corresponds to the destination state~$q'$.

For example, the Turing machine transition shown in \eq{Turing}, encoded by the
tree automaton transitions of \eq{Turing:ta}, is embedded in
{C++} using \kk{decltype}, as depicted in \cref{listing:Turing:example}.

\begin{code}[language=c++,style={cc14},
    caption=Definitions in type system \protect{\PP[rudimentary-typeof, deep]}
      (using {C++} syntax) encoding the tree automata transitions of~\protect\eq{Turing:ta},
    label={listing:Turing:example},morekeywords={typeof}]
#define typeof decltype
template<typename xL, typename xR> typeof(q2(O<xL, B<E>, B<xR>>())) q1(O<B<xL>, B<E>, xR>) {}
template<typename xR> typeof(q2(O<E, B<E>, B<xR>>())) q1(O<E, B<E>, xR>) {}
template<typename xL, typename xR> typeof(q2(O<xL, a<E>, B<xR>>())) q1(O<a<xL>, B<E>, xR>) {}
template<typename xL, typename xR> typeof(q2(O<xL, b<E>, B<xR>>())) q1(O<b<xL>, B<E>, xR>) {}
\end{code}

Further, to encode the input word, set~$e=∘(𝛜,σ₁,σ₂(⋯σₙ(𝛜)⋯)).φ_{q₀}$, or, in
monadic abbreviation form,~$e=∘(𝛜,σ₁,σ₂⋯σₙ).φ_{q₀}$.

To terminate the typing process, further overload~$φ_q$ with
definition~$φ_q:∘(x₁,γ(𝛜),x₂)→𝛜$ for every accepting state~$q∈F$ and cell
symbol~$γ∈Γ$, for which a Turing machine transition is not defined. These
definitions correspond to the situation of~$A$ reaching an accepting
state---type checking succeeds if and only if \kk{typeof} resolution reaches
such a definition.

The full {C++} encoding of the Turing machine of \cref{figure:Turing} is
shown in \cref{figure:Turing:cc} in the appendices.
\end{proof}

Having examined the contribution of \textsl{deep} by itself, and the
combination of \textsl{deep} and \textsl{rudimentary} to the computational
complexity of \PP, it is time to consider the contribution of
\textsl{rudimentary} \emph{by itself} to complexity. The following shows that
there is no such contribution.

\begin{theorem}\label{theorem:TA:capturing:DCFL}
$\PP=\A[\PP]{rudimentary-typeof}$
\end{theorem}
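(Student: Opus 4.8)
The plan is to sandwich $\A[\PP]{rudimentary-typeof}$ between $\PP$ and a class already known to equal $\PP$. One inclusion is free: a $\PP$-program is exactly an $\A[\PP]{rudimentary-typeof}$-program that never writes a \kk{typeof} clause (so $\Xi$ is empty and every $\delta$ has the plain shape $\sigma:\gamma(\mathbf{x})\to\tau$), giving $\PP \subseteq \A[\PP]{rudimentary-typeof}$. For the converse I would exhibit a bisimulation showing $\A[\PP]{rudimentary-typeof} \subseteq \A[TA]{$\varepsilon$-transitions, stateful}$, and then close the circle with the two results already in hand: \cref{Guessarian:trees}, $\A[TA]{$\varepsilon$-transitions, stateful}=\|TA|$, and \cref{theorem:pp:dcfl}, $\|TA|=\PP$.

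The crux is the observation that resolving a \kk{typeof} clause consumes no input: applying $\sigma:\gamma(\mathbf{x})\to\kk{typeof}~\vartheta$ does not read a new method call, it only re-types the pseudo-expression $\vartheta$ built from the argument type. Because \textsl{rudimentary-typeof} permits at most one function symbol inside $\vartheta$, each resolution step either terminates (when $\vartheta$ is a bare pattern $\tau$) or defers to exactly one further function $\phi$; iterating, a single input method call triggers a finitely-described but unbounded \emph{chain} of re-typings before the next symbol is consumed---precisely the behaviour of $\varepsilon$-transitions in a stateful tree automaton. Concretely I would build an automaton $A$ with a designated state $q_0$ plus a state $q_\phi$ for every function name $\phi$ that may head a \kk{typeof} clause, and translate the definitions thus: a plain $\delta=(\sigma:\gamma(\mathbf{x})\to\tau)$ becomes the consuming item $\langle\sigma,q_0,\gamma(\mathbf{x})\to\tau,q_0\rangle$; a capturing $\delta=(\sigma:\gamma(\mathbf{x})\to\kk{typeof}~(\tau'.\phi))$ becomes the consuming item $\langle\sigma,q_0,\gamma(\mathbf{x})\to\tau',q_\phi\rangle$, rewriting the store to $\tau'$ and entering $q_\phi$; a chaining auxiliary $\phi:\gamma(\mathbf{x})\to\kk{typeof}~(\tau'.\phi')$ becomes the $\varepsilon$-item $\langle q_\phi,\gamma(\mathbf{x})\to\tau',q_{\phi'}\rangle$; and a terminating auxiliary $\phi:\gamma(\mathbf{x})\to\tau$ becomes the $\varepsilon$-item $\langle q_\phi,\gamma(\mathbf{x})\to\tau,q_0\rangle$ returning to $q_0$. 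Every rewrite here is \textsl{shallow} in the sense of \cref{shallow-rewrite}: its left-hand side $\gamma(\mathbf{x})$ has depth one, even when the written $\tau'$ is deep. This is what saves the reduction---the seemingly ``two-level'' read hidden in a clause such as $\sigma:\gamma(\mathbf{x})\to\kk{typeof}~(x_i.\phi)$ is faithfully simulated by two successive shallow moves: the consuming move pops $\gamma$ and exposes the subtree bound to $x_i$ at the top of the store, and the following $\varepsilon$-move lets $q_\phi$ inspect that newly exposed root. Thus no non-shallow pattern is ever needed, and $A$ stays inside $\A[TA]{$\varepsilon$-transitions, stateful}$.

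What remains is the routine but delicate bisimulation, aligning type-checking with acceptance exactly as in \cref{observation:correspondence}: a method call type-checks iff its consuming item fires and the induced $\varepsilon$-chain terminates back in a $q_0$-configuration, while a \kk{typeof} that cannot be resolved---the exposed root matching no pattern---corresponds to the automaton \emph{hanging}. I expect the main obstacle to be this acceptance bookkeeping rather than the transition translation: one must verify that the recursive \textsc{Typeof Expression} rule terminates (or diverges) in lock-step with the $\varepsilon$-closure of $A$, and that the overloaded definitions at each state are selected deterministically under the \textsc{One Type Only} regime, so that $A$ meets the determinism requirement of $\A[TA]{$\varepsilon$-transitions, stateful}$ (states $q_0$ carry only $\delta$-items and states $q_\phi$ only $\xi$-items, never both) rather than degenerating into a non-deterministic device. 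Once the bisimulation is established, the chain $\PP \subseteq \A[\PP]{rudimentary-typeof} \subseteq \A[TA]{$\varepsilon$-transitions, stateful} = \|TA| = \PP$ collapses to equality, which is the assertion.
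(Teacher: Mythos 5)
Your proposal is correct and takes essentially the same route as the paper's own proof: the trivial inclusion $\PP\subseteq\A[\PP]{rudimentary-typeof}$, followed by the identical four-case compilation of $\Delta$ into consuming items and $\Xi$ into $\varepsilon$-items over states $q_0$ and $q_\varphi$ (one per auxiliary function), closed off by \cref{Guessarian:trees} and \cref{theorem:pp:dcfl}. Your extra remarks on shallowness of the rewrites and on the deterministic separation of consuming and $\varepsilon$-items at $q_0$ versus $q_\varphi$ make explicit what the paper leaves implicit, but do not alter the argument.
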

\vspace{-1.5ex}
\begin{proof}
  The first direction~$\PP⊆\A[\PP]{rudimentary-typeof}$ is immediate, as every
  \PP program is also a \A[\PP]{rudimentary-typeof} program by definition. We
  prove~$\A[\PP]{rudimentary-typeof}⊆\PP$.

  Given a program~$P=ΔΞ e$ in \A[\PP]{rudimentary} we need to convert it
  into equivalent program~$P'$ in type system \PP. By
  \cref{theorem:pp:dcfl} it is sufficient to convert~$P$ into a
  vanilla tree automaton, i.e., one with neither states nor~$𝜀$-transitions.
  Instead, we convert~$P$ into a more potent tree automaton~$A$ which is
  allowed both~$𝜀$-transitions and states, and then employ Guessarian's
  observation~$\A[TA]{$𝜀$-transitions,stateful}=\|TA|$ (see
  \cref{Guessarian:trees} above) to complete the proof.

  The set of internal states of~$A$ includes an initial and accepting
  state~$q₀$ and a state~$qᵩ$ for every auxiliary function name~$φ$
  used in~$Ξ$.

  Consider a definition in~$P=ΔΞ e$ of a (primary or auxiliary) function that
  employs a \kk{typeof} clause~$τ→\kk{typeof}~ϑ$. With rudimentary \kk{typeof},
  pseudo-expression~$ϑ$ is either~$τ'$ or~$τ'.φ$. Therefore, every function
  definition is either in the direct form~$τ→τ'$ or in the forwarding
  form~$τ→\kk{typeof} τ'.φ$. There are four cases to consider:

  \begin{enumerate}
    \item \emph{Primary function definitions}, found in~$Δ$, are encoded as
      consuming transitions of~$A$:
      \begin{enumerate}
        \item \emph{Direct definition}~$σ:τ→τ'$ is encoded as
          transition~$⟨σ,q₀,τ→τ',q₀⟩$.
        \item \emph{Forwarding definition}~$σ:τ→\kk{typeof}~τ'.φ$ is encoded as
          transition~$⟨σ,q₀,τ→τ',qᵩ⟩$.
      \end{enumerate}
    \item \emph{Auxiliary function definitions}, found in~$Ξ$, are encoded
      as~$𝜀$ transitions of~$A$:
      \begin{enumerate}
        \item \emph{Direct defintion}~$φ:τ→τ'$ is encoded as transition~$⟨qᵩ,τ→τ',q₀⟩$.
        \item \emph{Forwarding definition}~$φ:τ→\kk{typeof}~τ'.φ'$ is converted
          to~$𝜀$-transition~$⟨qᵩ,τ→τ',q_{φ'}⟩$.
      \end{enumerate} 
    \end{enumerate}
  In all four cases, the change from input type to output type by a function is
  encoded as a rewrite of the tree auxiliary storage of~$A$. Direct definitions
  are encoded by~$A$ moving into state~$q₀$ Forwarding to function~$ψ$ is
  encoded by~$A$ moving into state~$qᵩ$.

  Notice that state~$q₀$, the only accepting state, is the only state with
  outgoing consuming transitions, and it is also the only one without
  outgoing~$𝜀$-transitions. Therefore, the automaton consumes a letter in
  state~$q₀$, and finishes conducting~$𝜀$-transitions back in~$q₀$, or
  otherwise it rejects the input.

  With the above construction, expression~$e=𝜀.σ₁.⋯.σₙ$ type-checks against~$Δ$
  and~$Ξ$ if and only if~$A$ accepts word~$w=σ₁⋯σₙ$.
\end{proof}

\vspace{-1.5ex}
\begin{theorem}\label{theorem:fluent}$\Fluent=\text{DCFL}$\end{theorem}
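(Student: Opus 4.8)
The plan is to reduce the claim to a statement about automata and then read off the answer from \cref{table:common:automata}. By \eq{fluent} we have $\Fluent=\A{monadic, deep, rudimentary}$, so through the bisimulation of \cref{observation:correspondence} it suffices to identify the automaton sitting opposite this lattice point. I claim it is the deterministic pushdown automaton with deep rewrites, giving the chain
\[
  \Fluent=\A{monadic, deep, rudimentary}=\A[DPDA]{deep}=\|DPDA|=\|DCFL|,
\]
whose last step is the DPDA row of \cref{table:common:automata} and whose penultimate step is that table's note $\A[DPDA]{deep}=\|DPDA|$ (deep rewrites add no power to a DPDA). Everything therefore rests on the middle equality $\A{monadic, deep, rudimentary}=\A[DPDA]{deep}$.

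Each feature of $\Fluent$ has a transparent automaton reading. The \textsl{monadic} restriction makes every generic unary, so $\Gamma^\triangle$ collapses to $\Gamma^*$ and the auxiliary store is a one-dimensional \emph{stack} rather than a branching tree; a monadic type system is thus a \emph{pushdown} automaton. The \textsl{rudimentary-typeof} feature supplies exactly $\varepsilon$-transitions: just as in the proof of \cref{theorem:TA:capturing:DCFL}, a forwarding definition $\sigma\colon\tau\to\kk{typeof}~\tau'.\phi$ is read as the $\varepsilon$-move that rewrites the store by $\tau\to\tau'$ and passes control to the state named by $\phi$. The \textsl{deep-type-pattern} feature lets a function match and rewrite a bounded prefix of the stack, i.e.\ it is a \textsl{deep-rewrite}, while the defaults \textsl{linear-patterns}, \textsl{unary-functions} and \textsl{one-type} keep the rewrites genuine (equality-test-free) stack operations, the input a word, and the device deterministic. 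Collecting these, type checking in $\Fluent$ bisimulates the runs of a deterministic, $\varepsilon$-enabled, deep-rewrite pushdown automaton.

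To turn the reading into the two required inclusions I would argue both directions by construction. For $\Fluent\subseteq\A[DPDA]{deep}$ I replay the construction of \cref{theorem:TA:capturing:DCFL} with the store restricted to rank-one symbols: primary definitions become consuming transitions, auxiliary definitions $\phi$ become $\varepsilon$-transitions, their names become states, and deep monadic patterns become deep pushdown rewrites. For the converse $\A[DPDA]{deep}\subseteq\Fluent$ I encode a DPDA $M$ as a program $P=\Delta\,\Xi\,e$ exactly as in the rank-one specialization of the encoding of \cref{theorem:TA:capturing:TM}: the stack is a monadic type, each consuming transition of $M$ is a primary function of $\Delta$, each $\varepsilon$-transition is an auxiliary function $\phi_q\in\Xi$ carrying a rudimentary \kk{typeof} clause (so that resolving \kk{typeof} performs the $\varepsilon$-closure), the control state is tracked by the name $\phi_q$ during that closure and folded into the top stack symbol---read back by a deep pattern---between consumptions, and determinism of $M$ matches \textsl{one-type}.

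The hard part, and the conceptual point of the theorem, is to see why this lattice point does \emph{not} climb to $\text{RE}$ the way $\A[\PP]{deep, rudimentary}$ does in \cref{theorem:TA:capturing:TM}. There, undecidability is produced by storing a \emph{two-way} Turing tape in a rank-three node, the branching being precisely what lets the head move both left and right; the \textsl{monadic} restriction forbids any branching and leaves only a one-dimensional stack, which caps the device at a DPDA and hence keeps membership decidable. Concretely the delicate step is termination: an unbounded run of \kk{typeof}/$\varepsilon$ resolutions would be a non-terminating type check, so I would first normalize $M$ to rule out non-productive infinite $\varepsilon$-sequences (a standard DPDA transformation), which is exactly what aligns halting of type checking with decidability of $\|DCFL|$.
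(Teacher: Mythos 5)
Your overall architecture coincides with the paper's for the hard direction: the paper proves $\Fluent⊆\A[DPDA]{deep}$ by replaying the construction of \cref{theorem:TA:capturing:DCFL} with a rank-one store (an accepting state~$q_0$ plus a state~$q_\varphi$ per auxiliary name, primary definitions as consuming transitions, auxiliary ones as $\varepsilon$-transitions) and then invokes the folklore equality \eq{deep:DPDA}, exactly as you do. For the converse, $\text{DCFL}⊆\Fluent$, the paper does not construct anything but simply cites Yamazaki et al.; your direct DPDA-to-\Fluent encoding is a legitimate self-contained alternative to that citation. Your closing normalization of $\varepsilon$-loops is harmless but not load-bearing: type checking means the existence of a \emph{finite} derivation, so a divergent \kk{typeof} chain simply fails to type check, just as a DPDA spinning on $\varepsilon$-moves fails to accept, and the recognized language is unaffected.

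The genuine gap is in your forward construction, at the step ``deep monadic patterns become deep pushdown rewrites.'' This silently assumes that every \Fluent signature $\tau→\tau'$ is itself a legal stack rewrite, which fails when the argument is a proper pattern $\boldsymbol{\gamma}x$ but the return type is grounded, $\tau'=\boldsymbol{\gamma}'$ --- a combination the \textsl{deep} abstract syntax \eq{deep} permits, since $\tau$ may be a grounded type~$t$. To maintain your invariant that the stack encodes the type of the partial expression, the automaton would then have to erase the unbounded stack suffix matched by~$x$; but a pushdown transition rewrites only a bounded top portion and must carry the tail variable along, so the erasure cannot happen in the single transition encoding the function (and splitting it into a popping loop in a dedicated state sacrifices your one-transition-per-function correspondence). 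The paper's proof in \cref{section:theorem:fluent} flags precisely this as the one non-trivial point of the encoding --- ``$\rho$ is not identical to the type signature'' --- and resolves it with the rewrite $\boldsymbol{\gamma}x→\boldsymbol{\gamma}'\boldsymbol{\epsilon}x$, which pushes a fresh copy of the designated bottom-of-stack symbol above the now-dead content; since no rewrite ever pops that symbol, the garbage beneath is invisible and the invariant survives. Without this trick (or an equivalent repair) your construction breaks on the first grounded-return function, e.g., any \Fluent method that resets the computation to a fixed type.
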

\vspace{-1.5ex}
\begin{proof}
  Yamazaki et al\@.~\citeyear{Yamazaki:2019} showed that~$\text{DCFL}⊆\Fluent$,
  i.e., that any LR language, alternatively, any DCFL, can be encoded in a
  \Fluent program. It remains to show the converse,~$\Fluent⊆\text{DCFL}$.
  We prove~$\Fluent⊆\A[DPDA]{deep}$, noting the folk-lore equality
  \begin{equation}\label{eq:deep:DPDA}
    \A[DPDA]{deep}=\|DPDA|.
  \end{equation}

  The encoding of a \Fluent program in a \A[DPDA]{deep} is reminiscent
  of the encoding of a program in \A[\PP]{rudimentary} type system in a vanilla
  tree automaton in the proof of \cref{theorem:TA:capturing:DCFL} just above.
  The full proof of the current theorem is in \cref{section:theorem:fluent}.
\end{proof}

Having seen that \Fluent is not more expressive than it was intended to be,
it is interesting to check whether its expressive power would increase if it
allowed unrestricted \kk{typeof} clauses.

\begin{theorem}\label{theorem:Fluent:full} 
$\A[\Fluent]{full-typeof}⊋\|DCFL|
$\end{theorem}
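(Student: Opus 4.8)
The plan is to prove the two halves of the proper containment in turn. The inclusion $\|DCFL|\subseteq\A[\Fluent]{full-typeof}$ is immediate from \cref{theorem:fluent}: because \textsl{full-typeof} is strictly more potent than the \textsl{rudimentary} capturing of \Fluent, every \Fluent program is verbatim a program of $\A[\Fluent]{full-typeof}$, so $\|DCFL|=\Fluent\subseteq\A[\Fluent]{full-typeof}$. All the content is therefore in showing the inclusion is \emph{strict}, which I would do, in the style of \cref{theorem:PP:deep} and \cref{theorem:PP:nonlinear}, by exhibiting a language $L$ recognised by some $\A[\Fluent]{full-typeof}$ program with $L\notin\|CFL|\supseteq\|DCFL|$.

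First I would pin down where the extra power lives. By \cref{theorem:fluent} the \textsl{monadic}, \textsl{deep}, \textsl{rudimentary} system \Fluent already equals $\|DCFL|$, so the $a^{n}b^{n}c^{n}$ construction available to the \textsl{polyadic} system of \cref{theorem:PP:deep} is genuinely unavailable here, and the separation must come from \textsl{full-typeof} alone. The operative phenomenon is the one that separates \cref{theorem:TA:capturing:DCFL} from \cref{theorem:TA:capturing:TM}: a \textsl{rudimentary} clause $\tau\to\kk{typeof}~\tau'.\varphi$ is tail-recursive and collapses into a single $\varepsilon$-transition, keeping us inside $\|DCFL|$, whereas a \textsl{full} clause $\tau\to\kk{typeof}~\tau'.\varphi.\psi$ does not: resolving it descends recursively through the argument type and rebuilds it on the way back out. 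Since a \textsl{monadic} type is exactly a stack, this lets an auxiliary function realise a transformation that no single DPDA step can --- deleting the top symbol while simultaneously planting a fresh symbol at the very bottom of the stack.

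Concretely I would use unary constructors $A$ and $C$ over the unit $𝛜$, so a type is a stack over $\{A,C\}$, together with an auxiliary family that plants a $C$ at the bottom: $\mathrm{appC}:𝛜\to C(𝛜)$ and, for each $D\in\{A,C\}$, $\mathrm{appC}:D(x)\to\kk{typeof}~x.\mathrm{appC}.\mathrm{push}_D$ with $\mathrm{push}_D:y\to D(y)$; the two-symbol chain $x.\mathrm{appC}.\mathrm{push}_D$ --- peel $D$, recurse, re-wrap --- is precisely the step that requires \textsl{full-typeof}. The consuming functions are then $a:x\to A(x)$, $b:A(x)\to\kk{typeof}~x.\mathrm{appC}$ (pop the top $A$ and plant a $C$ at the bottom, i.e.\ $A^{p}C^{q}\mapsto A^{p-1}C^{q+1}$), $c:C(x)\to x$, and an end-marker $\mathrm{end}:𝛜\to𝛜$ that type-checks only on the empty stack. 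Writing $L=\{w\in\{a,b,c\}^{*}:𝜀.w.\mathrm{end}\text{ type-checks}\}$, an inspection of $a^{i}b^{j}c^{k}$ shows that $b$ needs an $A$ on top (forcing $j\le i$), that the first $c$ needs the $A$'s exhausted (forcing $i=j$ and leaving $C^{i}$), and that emptiness at $\mathrm{end}$ forces $k=i$; hence $L\cap a^{*}b^{*}c^{*}=\{a^{n}b^{n}c^{n}\}$. As $\|CFL|$ is closed under intersection with the regular set $a^{*}b^{*}c^{*}$ while $a^{n}b^{n}c^{n}\notin\|CFL|$, we conclude $L\notin\|CFL|\supseteq\|DCFL|$, whereas $L\in\A[\Fluent]{full-typeof}$ by construction --- the desired strict separation.

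The hard part will be the faithfulness of $\mathrm{appC}$: I must verify that its recursive resolution terminates on every grounded stack and yields exactly $A^{p}C^{q+1}$ from $A^{p}C^{q}$, and that the \textsl{one-type} discipline is honoured, i.e.\ the patterns $𝛜$, $A(x)$ and $C(x)$ are pairwise exclusive so that no (pseudo-)expression is ever assigned two types and the emulation stays deterministic, as $\A[\Fluent]{full-typeof}$ demands. I would also note that, after replacing the variable patterns by finitely many shallow cases, the whole encoding fits inside \textsl{shallow} \textsl{monadic} types, so \textsl{deep} is not even used --- isolating \textsl{full-typeof} as the single feature responsible for crossing the $\|DCFL|$ boundary. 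Once $\mathrm{appC}$ and the end-marker are checked, the intersection argument closes the proof.
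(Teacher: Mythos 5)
Your proposal is correct, and it reaches the separation by a genuinely different witness than the paper. The paper's proof (\cref{section:Fluent:full}, \cref{listing:ww}) encodes the copy language $w＃w$: the input is first accumulated into a monadic type, and a terminal function \cc{\$} then uses chained calls inside \kk{typeof} (e.g.\ \cc{match\_a(\$(T()))}, plus a recursive \cc{reverse} built from \cc{append2end\_a}/\cc{append2start\_a}-style auxiliaries) to reverse the second half and match it letter by letter against the first. You instead exploit the same underlying phenomenon---a \textsl{full} \kk{typeof} clause permits a recursive descent through the monadic type (i.e., the stack) with re-wrapping on the way back out, which no single DPDA step and no \textsl{rudimentary} tail-call can emulate---to implement ``pop the top $A$ and plant a $C$ at the bottom,'' yielding a program whose language $L$ satisfies $L\cap a^*b^*c^*=\{a^nb^nc^n\}$, whence $L\notin\|CFL|\supseteq\|DCFL|$ by closure of $\|CFL|$ under intersection with regular sets. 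Your route buys two things the paper's does not make explicit: the non-context-freeness is discharged by a clean closure argument rather than an appeal to $w＃w$ being ``known to be context sensitive,'' and your encoding uses only patterns of depth at most one, isolating \textsl{full-typeof} as the sole feature crossing the $\|DCFL|$ boundary even though \Fluent nominally carries \textsl{deep}; the paper's construction buys, in exchange, a concrete compilable {C++} artifact. Two details you rightly flagged are real but routine, and you should keep them in the final write-up: the intersection step is genuinely needed (your raw $L$ accepts, e.g., $abcabc$, so $L$ itself is not $\{a^nb^nc^n\}$), and the faithfulness of \cc{appC} follows from structural descent on grounded monadic types together with the pairwise disjointness of the three overload patterns $𝛜$, $A(x)$, $C(x)$, which the base system's overloading-on-argument-type permits while preserving \textsl{one-type}. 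The easy inclusion $\|DCFL|=\Fluent⊆\A[\Fluent]{full-typeof}$ via \cref{theorem:fluent} is handled identically in both arguments.
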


The proof is by showing that type system \A[\Fluent]{full-typeof} is expressive
enough to encode the language~$w＃w$, known to be context sensitive.
The full proof is in \cref{section:Fluent:full}.

\section{Overloading Resolution and Deterministic Computation}
\label{section:overloading}
Most previous work concentrated in recognition of deterministic
languages~\cite{Gil:Levy:2016,Grigore:2017,Nakamaru:17,Gil:2019}. We show here
that type system with Ada-like overloading can encode non-deterministic context
free languages as well. Its proof relies on creating a direct correspondence of
the type system and \emph{\textbf context \textbf free \textbf grammars}
(CFGs).

\nomenclature[A]{CFG}{context free grammar}

\begin{theorem}\label{theorem:ada}$
\|UCFL|⊆\A{monadic, eventually-one-type}$
\end{theorem}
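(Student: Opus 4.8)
The plan is to turn the section's promised \emph{direct correspondence with context-free grammars} into a \emph{push-on-read} simulation of a leftmost derivation, using return-type overloading for the grammar's nondeterminism and the \textsl{eventually-one-type} discipline to single out the accepting parse. First I would take the given UCFL (an unambiguous context-free language), fix an \emph{unambiguous} grammar for it, and put that grammar into Greibach normal form while preserving unambiguity; every production then has the shape $A → a\,B_1\cdots B_k$, reading one terminal and pushing a (possibly empty) block of nonterminals. This is exactly the shape of a monadic function definition with shallow argument and deep return, $a:A(x)→B_1(B_2(\cdots B_k(x)\cdots))$, which \PP already permits (the \textsl{deep} feature concerns argument patterns, not return types). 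Several productions $A→a\,α$ sharing the terminal $a$ become several definitions of the one method $a$ that differ in their return type --- precisely the return-type overloading that \textsl{eventually-one-type} enables by discarding the \textsc{One Type Only} rule.

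Concretely I would encode each nonterminal, together with a fresh bottom-of-stack marker $Z$, as a rank-one generic, represent a stack $X_1\cdots X_m$ by the nested type $X_1(\cdots X_m(𝛜)\cdots)$, and compile a word $w=a_1\cdots a_n$ into the chain $e=𝜀.a_1.\cdots.a_n.\mathtt{end}$. A family of \emph{bootstrap} definitions $a:𝛜→B_1(\cdots B_k(Z(𝛜))\cdots)$, one per start production $S→a\,α$, installs the start symbol on the first step; since $Z$ then permanently guards the bottom, the empty type $𝛜$ never recurs during the run, so these fire only once. The general definitions above implement the remaining moves, and a single definition $\mathtt{end}:Z(x)→x$ succeeds exactly when $Z$ has surfaced, i.e.\ when the nonterminals have been fully derived just as the input is exhausted (acceptance by empty stack).

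The equivalence is then a routine induction: successful typings of a prefix $𝜀.a_1.\cdots.a_i$ are in bijection with leftmost derivations of the grammar that consume $a_1\cdots a_i$, so a prefix may legitimately carry several types (several partial parses) --- which \textsl{eventually-one-type} explicitly tolerates for sub-expressions. Appending $\mathtt{end}$ retains only those typings whose stack has been reduced to the marker; because the grammar is unambiguous, a word $w∈L$ has a single complete leftmost derivation, so the top expression $e$ receives a unique type and type-checks, whereas for $w∉L$ the marker never surfaces, $\mathtt{end}$ fails on every branch, and type-checking fails. Hence $e$ type-checks in \A{monadic, eventually-one-type} if and only if $w∈L$.

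The step I expect to be the crux is the interaction between real-timeness and unambiguity. Because the type system has no \kk{typeof}, the simulation must be strictly real-time, which forces the push-on-read form and hence an \emph{unambiguous Greibach} grammar rather than merely an unambiguous one; I would therefore have to check that the standard normalization pipeline --- elimination of $𝜀$- and unit-productions, removal of left recursion, and Greibach substitution --- preserves unambiguity, so that every UCFL really does admit such a grammar (the empty word, if present, being a trivial separate case). I would also pin down that \textsl{eventually-one-type} matches unambiguity on the nose: two distinct complete parses would induce two typings of the \emph{whole} expression and be rejected as overload-ambiguous, which is exactly why an unambiguous grammar is indispensable and not merely convenient.
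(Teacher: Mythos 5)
Your proposal is correct and follows essentially the same route as the paper: an unambiguity-preserving conversion to Greibach normal form, the push-on-read encoding $σ:γ(x)→𝛄(x)$ of productions with the stack as a nested monadic type and a bottom marker (your $Z$ is the paper's $\boldsymbol{\$}$), an induction putting types of prefixes in bijection with leftmost derivations, and unambiguity guaranteeing the unique top-level type that \textsl{eventually-one-type} demands. The one step you flagged as needing verification --- that GNF conversion preserves unambiguity --- is exactly what the paper discharges by citing Nijholt's algorithm (also presented by Salomaa and Soittola).
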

\nomenclature[A]{UCFL}{unambiguous context free language}

\begin{proof}
  Given an unambiguous context free grammar~$G$, we encode it as~$Δ$, a set of
  function definitions in \A{monadic, eventually-one-type} such that~$G$ derives
  word~$σ₁⋯σₙ$ if, and only if, expression~$𝜀.σ₁.⋯.σₙ.\$$ ($\$$ being a
  dedicated function symbol) type checks against~$Δ$.

  We redefine CFGs using a notation more consistent with this
  manuscript: Context free grammar~$G$ is a specification of a formal language
  over alphabet~$Σ$ in the form of a quadruple~$⟨Σ,Γ,𝛜,R⟩$ where~$Σ$ is the set
  of~$G$'s terminals,~$Γ$ is the set of grammar variables,~$𝛜∉Γ$ is the start
  symbol, and~$R$ is a set of derivation rules. Each derivation rule~$ρ∈R$ is
  either in the form~$𝛜→ω$, or in the form~$γ→ω$, where~$γ∈Γ$ and where~$ω$ is
  a possibly empty sequence of terminals and grammar variables,
  i.e.,~$ω∈❨Σ∪Γ❩^*$.

  \nomenclature[B]{$R$}{set of derivation rules of CFG}
  \nomenclature[G]{$Γ$}{set of variables of CFG}
  \nomenclature[G]{$Σ$}{set of terminals of CFG}
  \nomenclature[H]{$γ$}{variable (non-terminal) of CFG}
  \nomenclature[H]{$𝛜$}{start symbol of CFG}
  \nomenclature[H]{$ρ$}{derivation rule of CFG}
  \nomenclature[H]{$σ$}{terminal of CFG}
  \nomenclature[H]{$ω$}{sentential form, i.e., a sequence of terminals and variables of a CFG,~$ω∈❨Σ∪Γ❩^*$}

  Recall that a grammar is in \emph{\textbf Greibach \textbf Normal \textbf
  Form} (GNF) if every rule~$ρ∈R$ is in one of three forms \1~the \emph{usual}
  form,~$ρ=γ→σ𝛄$, where~$σ∈Σ$ is a terminal and~$𝛄∈Γ^*$ is a sequence of
  variables, \2~the \emph{initialization} form,~$ρ=𝛜→σ𝛄$, or,
  \3~the~\emph{$𝜀$-form},~$ρ=𝛜→𝜀$, present only if the grammar derives the
  empty word~$𝜀∈Σ^*$.
  \nomenclature[H]{$𝛄$}{sequence of CFG variables,~$𝛄∈Γ^*$}
  \nomenclature[A]{GNF}{Greibach normal form (of CFG)}


  For the encoding, first convert unambiguous grammar~$G$ into an equivalent
  unambiguous grammar in GNF. This is done using the algorithm of Nijholt
  \citeyear{Nijholt:79} (also presented in more accessible form by Salomaa and
  Soittola~\citeyear{Salomaa:78}).

  The type encoding of GNF grammar~$G$ uses a monadic generic type~$γ$ for
  every symbol~$γ∈Γ$, an additional monadic generic type~$\boldsymbol{\$}$,
  and, one non-generic type~$𝛜$, also known as the unit type.

  For each derivation rule~$ρ∈R$ introduces a function~$δ∈Δ$ that uses these
  types:
  \begin{itemize}
    \item Suppose~$R$ includes the~$𝜀$-form rule~$𝛜→𝜀\$$, introduce (one
      overloaded) definition of function~$\$:𝛜→𝛜$. Then,~$𝜀.\$$, the expression
      corresponding to the empty word, type-checks to type~$𝛜$. (Recall
      that~$𝜀$ is the single type of the unit type~$𝛜$.)
    \item If~$ρ$ is in the initialization form~$𝛜→σ𝛄$
      then~$δ=σ:𝛜→𝛄\boldsymbol{\$}$. For such a rule
      introduce also function~$\$:→\boldsymbol{\$}𝛜→𝛜$.
    \item If~$ρ$ is in the usual form~$γ→σ𝛄$, then~$δ=σ:γx→𝛄x$.
 \end{itemize}

We show by induction on~$i=1,…,n$ the following claim on the partial
expression~$eᵢ=𝜀.σ₁.⋯.σᵢ$: The set of types assigned by the type checker
to~$eᵢ$ includes a type~$𝛄\boldsymbol{\$}$,~$𝛄∈Γ⁺$, if and only if, there
exists a \emph{\textbf left \textbf most \textbf derivation} (LMD) that yields
the sentential form~$σ₁⋯σᵢ𝛄$.

\nomenclature[A]{LMD}{left-most derivation}

For the inductive base observe that~$e₀=𝜀$ and that the set of types of~$𝜀$
includes only the unit type~$𝛜$; indeed there is a (trivial) LMD of the
degenerate sentential form~$𝜀𝛜=𝛜$.

Consider an LMD of~$σ₁⋯σᵢσ_{i+1}𝛄'\boldsymbol{\$}$, where~$i<n$,~$𝛄'∈Γ⁺$
and~$σ_{i+1}$ is the terminal~$ς∈Σ$,~$ς≠\$$. We show that~$𝛄'$ is a type
of~$e_{i+1}=ς(eᵢ)$. The said LMD can only be obtained by applying a
rule~$ρ=γ→ς𝛄"$ to the sentential form~$σ₁⋯σᵢ𝛄\boldsymbol{\$}$, where~$γ$ is the
first symbol of~$𝛄$.

\nomenclature[H]{$ς$}{A terminal of a CFG, or the special symbol \$}


By examining the kind of functions in~$Δ$, one can similarly show that every
type~$𝛄'$ of~$e_{i+1}$ is an evidence of an LMD
of a sentential form~$σ₁⋯σᵢσ_{i+1}𝛄'$.

The proof is completed by manually checking that a full expression, ending with
the~$.\$$ invocation can only type check to a single type,~$𝛜$, and
this can happen only if the type of~$𝜀.σ₁.⋯.σₙ$ is~$𝛄$, where~$𝛄$ occurs
in an initialization rule~$𝛜→σₙ𝛄$. \qedhere
\end{proof}

\Cref{section:palindrome} demonstrates the proof by presenting
a fluent API of the non-deterministic context free language of even
length palindromes.

If final expressions are also allowed to be multi-typed, then we can construct
fluent API for all context free languages.

\begin{theorem}\label{theorem:multi}$
\A{monadic, multiple-type}=\text{CFL}$
\end{theorem}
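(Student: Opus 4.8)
The plan is to prove the two containments $\|CFL|\subseteq\A{monadic, multiple-type}$ and $\A{monadic, multiple-type}\subseteq\|CFL|$ separately, reusing the construction of \cref{theorem:ada} for the first and the automata--type correspondence for the second. For the easy direction $\A{monadic, multiple-type}\subseteq\|CFL|$ I would appeal to the bisimulation underlying \cref{observation:correspondence}. The \textsl{monadic} feature confines every type to a monadic tree, i.e.\ a stack, so the auxiliary storage of the corresponding automaton is a \textsl{pushdown}; disposing of the \textsc{One Type Only} rule to obtain \textsl{multiple-type} is exactly \textsl{non-determinism}, since a subexpression may now carry several types simultaneously, just as a non-deterministic automaton may occupy several configurations at once. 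Concretely, from a program $\Delta$ I build a stateless, real-time, non-deterministic pushdown automaton whose stack holds the monadic type of the prefix read so far and which, on reading $\sigma$, non-deterministically selects a definition named $\sigma$ and performs the corresponding shallow linear stack rewrite; a run accepts precisely when the method-chain expression type-checks. As this is a pushdown automaton, its language is context-free by \cref{table:common:automata}, where $\|PDA|=\|CFL|$.

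For the reverse containment $\|CFL|\subseteq\A{monadic, multiple-type}$ I would generalize the encoding in the proof of \cref{theorem:ada}. Given \emph{any} grammar $G$ for a context-free language --- no longer assumed unambiguous --- I convert $G$ to Greibach Normal Form and encode it verbatim as there: a monadic generic type per variable, the extra monadic type $\boldsymbol{\$}$, the unit type $𝛜$, and one overloaded function per production. The induction established in \cref{theorem:ada} is a statement about the \emph{existence} of leftmost derivations and is entirely insensitive to ambiguity: the types assignable to $e_i=𝜀.\sigma_1.\cdots.\sigma_i$ are exactly the suffixes $𝛄\boldsymbol{\$}$, $𝛄\in\Gamma^+$, of sentential forms $\sigma_1\cdots\sigma_i\,𝛄$ reachable by a leftmost derivation. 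Under the full \textsl{multiple-type} discipline an expression type-checks as soon as it admits \emph{one} type, so $e=𝜀.\sigma_1.\cdots.\sigma_n.\$$ type-checks iff some leftmost derivation yields $\sigma_1\cdots\sigma_n$, that is, iff $w\in L(G)$; since every context-free language has such a grammar, all of $\|CFL|$ is captured.

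The main obstacle --- and the very reason this strengthens \cref{theorem:ada} from $\|UCFL|$ to all of $\|CFL|$ --- is the passage from \textsl{eventually-one-type} to \textsl{multiple-type}. Under the Ada-style \textsl{eventually-one-type} discipline the top-level expression must resolve to a \emph{single} interpretation, so a word possessing several distinct leftmost derivations, as happens for inherently ambiguous languages, is rejected even though every derivation collapses to the same final type $𝛜$; this is exactly what forces the unambiguity hypothesis of \cref{theorem:ada}. Dropping the single-interpretation requirement is what lets the identical encoding handle inherently ambiguous languages, and I would spend the most care verifying that doing so introduces no spurious acceptances --- in particular, that the trailing $\boldsymbol{\$}$ can be stripped by the $\$$ function only once the variable stack has been fully consumed. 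I would also note that the feasibility of a \emph{real-time} encoding (one typing-rule application per method call, each a bounded monadic rewrite) rests on the classical fact that Greibach Normal Form turns every context-free grammar into an $𝜀$-move-free pushdown recognizer, which is why non-determinism alone lifts the real-time monadic model all the way to $\|CFL|$, in contrast with the determinism-bound $\A{monadic}⊊\|DCFL|$ of \eq{monadic:polyadic}.
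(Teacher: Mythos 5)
Your proposal is correct and takes essentially the same route as the paper, whose entire proof is that the construction of \cref{theorem:ada} works verbatim because the conversion of a plain CFG to Greibach Normal Form need not preserve unambiguity once the \textsl{multiple-type} discipline drops the single-interpretation requirement. The only difference is that you spell out the converse containment $\A{monadic, multiple-type}⊆\|CFL|$ via the stateless real-time non-deterministic pushdown simulation, which the paper leaves implicit in the bisimulation of \cref{observation:correspondence}.
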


\begin{proof}
  The construction in the proof of \cref{theorem:ada} works here as well.
  Note that here the transition from a plain CFG to GNF does not have to
  preserve unambiguity.
\end{proof}

\section{Conclusions}
\label{section:zz}
\paragraph{Perspective.}
Revisiting \cref{table:lattice}, we see that in total it
has~{\small$|C₁|·|C₂|·|C₃|·|C₄|·|C₅|·|C₆|=4·3·2·2·3·3=432$} lattice points.
Accounting for the fact that in a \textsl{nyladic} type system, the values
of~{\small$C₂$} (\textsl{type pattern depth}), and~{\small$C₃$} (\textsl{type
pattern multiplicity}) are meaningless, we see that lattice~$𝔗$
spans~{\small$|C₄|·|C₅|·|C₆|=2·3·3=18$} monomorphic type systems ($𝔗_⊥$ among
them), and~{\small$(|C₁|-1)·|C₂|·|C₃|·|C₄|·|C₅|·|C₆|=3·3·2·2·3·3=324$}
potential polymorphic type systems (\PP and~\Fluent among them). To make the
count more exact, account for~{\small$C₃$} being irrelevant in a
\textsl{monadic} type system,
obtaining~{\small$|C₂|·|C₄|·|C₅|·|C₆|=3·2·3·3=36$} \textsl{monadic}, yet
polymorphic type systems,
and~{\small$(|C₁|-2)·|C₂|·|C₃|·|C₄|·|C₅|·|C₆|=2·3·2·2·3·3=216$}
non-\textsl{monadic} polymorphic type systems.

Beyond the implicit mention that the type-automata correspondence applies to
\emph{monomorphic type systems}, these were not considered here. Our study also
invariably assumed \textsl{unary-function}, ignoring in characteristic~$C₄$
\emph{\textsl{n-ary-functions} type systems}%
†{the ignored \textsl{n-ary-functions} correspond to the \textsl{forest-recognizer} brand
  of automata; however \textsl{forest-recognizer} automata were used in the
construction, e.g., in \cref{lemma:deep}.}
which comprise half of the type systems of~$𝔗$.

Even though most of this work was in characterizing the complexity classes of
type systems, it could not have covered even the~$(36+216)/2=126$ type systems
remaining in scope. The study rather focused on these systems which we thought
are more interesting: We gave an exact characterization of the complexity
classes of two central type systems,~\PP (\cref{theorem:pp:dcfl})
and~\Fluent (\cref{theorem:fluent}), and investigated how this complexity
changes if the type systems are made more or less potent along~$𝔗$'s
characteristics (with the exception of~$C₄$, the function arity characteristic).
Comparing \eq{fluent} with \cref{table:lattice} we see that \Fluent can be made
more potent along~$C₁$,~$C₅$, or~$C₆$, and, as follows from our results, its
complexity class increases in all three cases:
\begin{enumerate}
  \item In~$C₁$,~$\Fluent⊊\A[\Fluent]{dyadic}=\|RE|$, by combining \cref{theorem:dyadic}
    and \cref{theorem:TA:capturing:TM}.
  \item In~$C₅$,~$\Fluent⊊\A[\Fluent]{eventually-one-type}$ (\cref{theorem:ada}).
  \item In~$C₆$,~$\Fluent⊊\A[\Fluent]{full-typeof}$ (\cref{theorem:Fluent:full}).
\end{enumerate}
Conversely, \Fluent can be made less potent along characteristics~$C₁$,~$C₂$
and~$C₅$:
\begin{enumerate}
  \item In~$C₁$ complexity decreases,~$\Fluent-\textsl{monadic}=FSA⊊\Fluent$
    (\cref{observation:correspondence}).
  \item In~$C₂$, \eq{deep:DPDA} makes us 
    believe that complexity does not change,
    $\Fluent-\textsl{deep}+\textsl{shallow}=\Fluent$.
  \item In~$C₅$, then, by \cref{observation:correspondence,eq:deep:DPDA}),~$\Fluent-\textsl{rudimentary}=\A[RDPDA]{deep}$.
    We believe complexity decreases but are unsure.
\end{enumerate}
Type system \PP can be made more potent along characteristics~$C₂$,~$C₃$,~$C₅$ and~$C₆$:
\begin{enumerate}
  \item In~$C₂$ complexity increases,~$\PP⊊\A[\PP]{deep}$ (\cref{theorem:PP:deep}).
  \item In~$C₃$ complexity increases,~$\PP⊊\A[\PP]{non-linear}$ (\cref{theorem:PP:nonlinear}).
  \item In~$C₅$ complexity does not change,~$\PP=\A[\PP]{rudimentary-typeof}$
    (\cref{theorem:TA:capturing:DCFL}).
  \item In~$C₆$ complexity increases,~$\PP⊊\A[\PP]{eventually-one-type}$ (\cref{theorem:ada}).
\end{enumerate}
Type system \PP can be made less potent only along characteristic~$C₁$.
From \cref{observation:correspondence} and \cref{theorem:pp:dcfl},
\begin{equation}
\begin{aligned}
  FSA=\A{nylaldic}⊊SRDPDA=\A{monadic} &⊆\A{dyadic}⊆\A{polyadic} ⏎
                                      &⊊\A{polyadic}=\|DCFL|,
\end{aligned}
\end{equation}
i.e., it is not known whether decreasing~$\PP$ along~$C₁$ to \textsl{dyadic}
reduces its complexity, but decreasing it further to \textsl{monadic} certainely
does.

This work should also be viewed as a study of the type-automata correspondence:
\1~The results in \cref{section:polymorphism} revolve around the correspondence
between \textsl{tree-store} automata employing tree rewrites, and
type system in which the signature of functions employs type pattern to match
its argument. \2~\Cref{section:capture} explored the correspondence between
\kk{typeof} clause in the signature of functions, and~$𝜀$-transitions of automata.
\3~The correspondence between non-deterministic runs and allowing
multiple types of expressions, or at least as a partial step during
resolution of overloading was the subject of \cref{section:overloading}. 
Overall, our study confirmed that
the type-automata correspondence is a significant aid in the characterization
of complexity classes, either by a direct bisimulation between the two, or by
employing and adapting (sometimes ancient) contributions in the decades old
research of automata.

\ifjournal
In \Cref{theorem:fluent} we showed that~$\Fluent=\text{DCFL}$, i.e., the
computational complexity of the intermediate language is in exact match with
the purpose it was designed to serve---parsing LR languages. We are in
position now to examine the change to the \Fluent's computational complexity by
making more potent along any of the dimensions of \cref{table:lattice}.
\fi

\paragraph{Open Problems.}
Technically, we leave open the problem of characterizing the complexity class
of each of the 126 type systems that were not considered at all, or, considered,
but not fully characterized. However, many of these can be trivially solved, e.g.,
since~$T₁=⟨deep,rudimentary,polyadic⟩=\text{RE}$,
(\cref{theorem:TA:capturing:TM}),~$T₂=\|RE|$ for all~$T₂∈𝔗$,~$T₂>T₁$.
We draw attention to four type systems for which we are able to set a lower and
an upper bound, but still miss precise characterization, e.g., in terms of
familiar computational complexity classes.
\begin{enumerate}
  \item \PP[deep], for which we have~$\|DCFL|⊊\PP[deep]⊆\|CSL|$ by
    \cref{theorem:PP:deep}.
  \item \PP[non-linear], for which we also have~$\|DCFL|⊊\PP[non-linear]⊆\|CSL|$ by
    \cref{theorem:PP:nonlinear}.
  \item \PP[deep,non-linear], for which we have again~$\|DCFL|⊊\PP[deep, non-linear]⊆\|CSL|$ by
      \cref{theorem:PP:deep,theorem:PP:nonlinear}.
  \item \A[\Fluent]{full-typeof}, for which we have~$\|DCFL|⊊\A[\Fluent]{full-typeof}⊆\|RE|$ by
      \cref{theorem:Fluent:full}.
\end{enumerate}
Also, we do not know yet how these relate to each other in terms of
computational complexity, beyond what can be trivially inferred by~$𝔗$'s
partial order. \Cref{section:deep} may offer some insights.

\paragraph{Expression Trees vs.\ Expression Words}
Language recognizers, i.e., automata which take trees as inputs 
were defined and used in the proofs. Still, this study does not offer much on
the study of \textsl{n-ary-functions}---the type counterpart of language
recognizers.  There is potential in exploring the theory of polymorphic types
of tree shaped expressions.
In particular, it is interesting to study type
systems~$S₁=⟨\textsl{n-ary},\textsl{deep}⟩$
and~$S₂=⟨\textsl{n-ary},\textsl{deep},\textsl{non-linear}⟩$, both modeling
\kk{static} generic multi-argument functions of C＃ and Java, except that~$S₂$
adds the power, and predicament (see \cref{listing:s2}), of non-linear type
patterns. In the type-automata perspective~$S₁$ and~$S₂$ correspond
to \textsf{forest-recognizer} \textsf{real-time} \textsf{tree-store} brand
of automata, which received little attention in the literature.
We see two number of potential applications of type theory, for which (say) \PP
is insufficient, and could serve as motivation for resolving the open problems
above and for the study of~$S₁$ and~$S₂$.

\nomenclature[B]{$S₁$}{$⟨\textsl{n-ary},\textsl{deep}⟩$ (type system in~$𝔗$)}
\nomenclature[B]{$S₂$}{$⟨\textsl{n-ary},\textsl{deep},\textsl{non-linear}⟩$ (type system in~$𝔗$) \cref{listing:s2}}

\begin{description}
  \nomenclature[B]{$A$}{a two-dimensional matrix}
  \nomenclature[B]{$B$}{a two-dimensional matrix}
  \nomenclature[C]{$m$}{a dimension of a matrix}
  \item[Types for linear algebra]
    The matrix product~$A⨉B$ is defined if~matrix~$A$ is~$m₁⨉m₂$ and
    matrix~$B$ is~$m₂×m₃$, in which case the result is an~$m₁⨉m₃$ matrix.
    The matrix addition~$A+B$ is defined only if both~$A$ and~$B$ are~$m₁⨉m₂$,
    in which case the result is also~$m₁⨉m₂$. The unary encoding of integers
    and their comparison in one step in the proof of
    \cref{theorem:PP:nonlinear} seem to be sufficient for developing a
    \emph{decidable} type system that enforces such constraints.

    However, unlike type systems for checking fluent API, types for linear
    algebra implemented this way are impractical: matrices whose dimensions are
    in the range of thousands are common, e.g., in image processing. But,
    programmers cannot be expected to encode integers this large in unary, not
    mentioning the fact that such types tend to challenge compilers' stability.
    The problem is ameliorated in~$S₂$ in which a decimal (say) representation
    of integers is feasible. A more precise design is left for future research.

    A more difficult challenge is the type system support and checking of
    operations which involve integer arithmetic. A prime example is
    \textsc{numpy}†{\url{https://numpy.org/}}'s \texttt{reshape} operation
    which converts, e.g., an~$m₁⨉m₂$ matrix to an~$m₃⨉m₄$~matrix,
    where correctness is contingent on the equality if~$m₁·m₂=m₃·m₄$. Indeed,
    we are not aware of any decidable type system that can do integer multiplication.

  \item[Dimensional types] A similar challenge is supporting of
    \textit{physical dimensions}, i.e., a design of a type system allowing,
    e.g., the division of distance quantity by time quantity obtaining speed
    quantity, and addition and comparison distance quantities, but 
    forbidding, e.g., addition and comparison of time and
    distance quantities. To do so, the type system should probably
    encode~$∏_{i=1}ʳxᵢ^{mᵢ}$,~$mᵢ∈ℤ$, the general form of a physical dimension
    (in say MKS), as a tuple of~$r$ of signed integers.

    To enforce the rules of addition and comparison of physical dimensions,
    the type system should be able compare (typically very small) integers,
    as done in \cref{theorem:PP:nonlinear}, although the implementation should
    be tweaked to support negative integers.
    For multiplying and dividing physical quantities, the type system
    should be able to add (small) integers.
    We do not know whether this is possible in~$S₁$ or~$S₂$.

    \nomenclature[A]{MKS}{meter-kilogram-second (system of physical units}
    \nomenclature[C]{$xᵢ$}{a physical unit such as centimeter, second, gram, and coulomb}
    \nomenclature[C]{$r$}{number of physical units in a system of physical units such as MKS}
    \nomenclature[C]{$m$}{exponent of certain physical unit in a physical dimension such as kilogram/meter-squared}
    \nomenclature[B]{$ℤ$}{set of signed integers,~$❴⋯,-2,-1,0,1,2,…❵$}
\end{description}

\paragraph{Modeling type erasure}
Finally, we draw attention to the fact that Java's type erasure is not
accurately modeled by our system. In particular Java forbids function
overloading if the type of the overloaded functions becomes identical
after type erasure. We propose this type inference rule for type erasure
\begin{equation}
\hspace{-40ex}
\typing{Type ⏎ Erasure}
{\infer{σ:⊥}{σ:γ(𝛕)→τ & σ:γ(𝛕')→τ'}}
\end{equation}
and leave the problem of studying type systems with type erasure to future
research.

\bibliographystyle{ACM-Reference-Format}
\bibliography{00,author-names,publishers,big}

\appendix

\eject
\section{Abbreviations, Acronyms, and Notation}
\label{section:symbols}
\printnomenclature

\eject
\section{Fluent API: From Practice to Theory}
\label{section:fluent}
An \emph{application programming interface} (API) provides the means to
interact with an application via a computer program. For example, using
a file system API we can open, read, and close files from within {C}
code:
\begin{JAVA}[language=c]
open(); // Open file
read(); // Read line
read(); // Read another line
close(); // Close file
\end{JAVA}
Accompanied to an API is a \emph{protocol of use}, defining rules for
good API practice. A protocol is usually brought in internal and
external documentation, delegating its imposition to the programmer.
For instance, a typical file system API protocol disallows \cc{read()}
to be called before \cc{open()}, and \cc{close()} to be called
twice in a row. Although breaking the protocol may result in malicious
run time behaviors, it nonetheless yields coherent, runnable programs.

With \emph{object oriented programming} (OOP)†{Strictly speaking, we
need only ‟object based” programming, which admits classes and objects, but
no class inheritance.}, functions (methods)
are defined within classes. To invoke a method, it must be sent as
a message to an object of the defining class.
Methods of an OO \emph{fluent API} yield objects that
accept other API methods:
\begin{codeC}[caption={\mbox{Fluent file system API implemented in {Java}}},label={listing:fluent:example}]
class ClosedFile {¢¢
  OpenedFile open() {¢…¢}
}
¢\lstbreak¢
class OpenedFile {¢¢
  OpenedFile read() {¢…¢}
  ClosedFile close() {¢…¢}
}
\end{codeC}

In this OO file system API there are two classes, \cc{ClosedFile} and \cc{OpenedFile}.
Every API call returns either an object of class \cc{ClosedFile} or an object of
class \cc{OpenedFile}, and thus may immediately be followed by a successive API
call:
\begin{code}[caption={Chain of fluent API method calls},label={listing:chain:example}]
closedFile.open().read().read().close();
\end{code}

This expression conducts multiple API calls: Invoking \cc{open}
on a \cc{ClosedFile} object yields an \cc{OpenedFile} object.
Calling \cc{read} on the \cc{OpenedFile} yields itself,
but a \cc{close} invocation returns a \cc{ClosedFile}.

The main advantage of fluent APIs is their ability to enforce a protocol \emph{at
compile time}: The object returned from API call \cc{$σᵢ$()} is missing method
\cc{$f$()}, if calling~$f$ at that location ($σ_{i+1}← f$) breaks the protocol.
Consider, for instance, finishing the methods chain of \cref{listing:chain:example}
with a second \cc{close()} call, therefore breaking the file system protocol
which forbids double closing: This call fails at compile time, raising a
compilation error, as the first \cc{close} call returns a \cc{ClosedFile}
object, defined in \cref{listing:fluent:example}, which lacks a \cc{close} method.

Fluent APIs grew in fame due to their
application for \emph{domain specific languages} (DSLs).
In contrast to general purpose programming languages, as Java and C++, DSLs
employ syntax and semantics designed for a specific component.
Standard query language (SQL), for example, is a DSL for
writing database queries.
To make use of an application in a general software library,
its DSL has to be substituted for an API.
Making the API fluent is then ideal: it makes it possible
to \emph{embed} DSL programs in code as chains of method calls,
that preserve and enforce the original syntax of the DSL.
Additional details on DSLs and fluent APIs may be found in~\cite{Gil:2019}.

A protocol or a DSL may be described by a formal language~$ℓ$: Then, the \emph{fluent
API problem} is to compile~$ℓ$ into a fluent API that enforces the protocol.
The fluent API problem is parameterized by the complexity of the input
language, and the capabilities of the host type system. The file system
protocol, for instance, is described by a regular expression,
\begin{equation*}
❨\text{\cc{open}}·\text{\cc{read}}^*·\text{\cc{close}}❩^*,
\end{equation*}
and therefore defines a regular language.
Given a class of formal languages~$𝕃$, we seek a minimal
set of type system features required to embed~$𝕃$ languages.

As many programming languages and DSLs are not regular,
practical interest lies with stronger language classes.
A popular approach is to use \emph{parametric polymorphism},
yet another common OOP feature†{Java generics, C++templates, etc.}. A fixed number of polymorphic classes
define an infinite number of types (\cc{A}, \cc{A<A>},
\cc{A<A<A>{}>},…): Intuitively, these types can be used to simulate an
unbounded storage, required to accept non-regular languages.

Consider, for example, the following Java definitions:
\begin{codeC}[caption={\mbox{Fluent stack API implemented in {Java} using (monadic) polymorphism}},label={listing:fluent:stack}]
class Empty {¢¢
  Stack<Empty> push() {¢…¢}
  Empty empty() {¢…¢}
}¢\lstbreak[-6.25pt]¢
class Stack<T> {¢¢
  Stack<Stack<T>> push() {¢…¢}
  T pop() {¢…¢}
}
\end{codeC}
With these definitions, an expression of the form
\begin{equation}
  \label{eq:stack}
  e=\cc{\kk{new} Empty().}σ₁\cc{().}σ₂\cc{().}…\cc{.}σₙ\cc{().empty()},
\end{equation}
where~$
  σᵢ∈❴\cc{push}, \cc{pop} ❵
$ type checks if and only if,~$σ₁σ₂⋯σₙ$ belongs in the Dyck language of
balanced parentheses with the homomorphism
\begin{align*}
  h(σ)=\begin{cases}
    \cc{push} & σ=\text{`\cc{(}'} ⏎
    \cc{pop} & σ=\text{`\cc{)}'}
  \end{cases}
\end{align*}
A pop from empty stack (conversely, unbalanced parenthesis) is signaled
by a type error generated at compile time, e.g., in
\begin{JAVA}
new Empty().push().pop().pop().empty();
\end{JAVA}
the second call to \cc{pop()} triggers a compile time error, to say that type
\cc{Empty} does not feature this method.

With the fluent API problem trivial for regular languages†{A finite state
machine can be encoded using simple OO classes. A Java fluent API generator for
regular languages is available at \url{https://github.com/verhas/fluflu}.}, recent
studies~\cite{Xu:2010,Nakamaru:17,Gil:Levy:2016,Gil:2019,Yamazaki:2019,Nakamaru:2020}
introduced various methods for composing fluent APIs of more complex languages.
Two promising results are those of \GR and Yamazaki
et al.~\citeyear{Yamazaki:2019}: Released roughly at the same time, both papers
showed any deterministic context free languages (including the Dyck language) can
be composed into a fluent API.


\eject
\section{Proofs}
\label{section:proofs}
\subsection{Proof of \cref{theorem:PP:deep}} \label{section:PP:deep}

Recall that~$aⁿbⁿcⁿ∈\|CSL|$, and that~$\|DCFL|⊂\|CSL|$. We show
that~$aⁿbⁿcⁿ∈\A[\PP]{deep}$. The details are in \cref{listing:sd:anbncn}, that
employs Java syntax to show a set of definitions that recognizes the
language~$aⁿbⁿcⁿ$.

\begin{code}[
    caption={Definitions in type system \protect{\PP[deep]} (using Java syntax) for
    the language~$aⁿbⁿcⁿ$},
  label={listing:sd:anbncn}
]
interface γ1<x1, x2> {} // Type after reading \b$aᵏ$ is \b$\cc{$γ$1<$uₖ$,$uₖ$>}$
interface γ2<x1, x2> {} // Type after reading \b$aⁿbᵏ$ is \b$\cc{$γ$2<$u_{n-k}$,$uₙ$>}$
interface γ3<x2> {} // Type after reading \b$aⁿbⁿcᵏ$ is \b$\cc{$γ$3<$u_{n-k}$>}$
static γ1<Zero,Zero> begin() {¢¢ return null; } // chain start
static <x1, x2> γ1<Succ<x1>, Succ<x2>> a(γ1<x1, x2> e) {¢¢ return null; } // Increment both arguments
static <x1, x2> γ2<x1, x2> b(γ1<Succ<x1>, x2> e) {¢¢ return null; } // \b$b$ after \b$aⁿ$; decrement first argument
static <x1, x2> γ2<x1, x2> b(γ2<Succ<x1>, x2> e) {¢¢ return null; } // \b$b$ after \b$aⁿbᵏ$, \b$k>0$; decrement first argument
static <x> γ3<x> c(γ2<Zero, Succ<x>> e) {¢¢ return null; } // \b$c$ after \b$aⁿbⁿ$; decrement second argument
static <x> γ3<x> c(γ3<Succ<x>> e) {¢¢ return null; } // \b$c$ after \b$aⁿbⁿcᵏ$, \b$k>0$; decrement argument
static void end(γ3<Zero> e) {} // Accept after \b$aⁿbⁿcᵏ$, \b$k=n$
static {¢¢ // Test definitions in static initializer
  end(c(c(c(b(b(b(a(a(a(begin())))))))))); // Expression \b$e=e(a³b³c³)$ type-checks
  end(c(c(c(b(b(a(a(a(begin()))))))))); // Expression \b$e=e(a³b²c³)$ does not type-check
}
\end{code}

The three generic types~$γ1$,~$γ2$ and~$γ3$ rely on the unary encoding and
increment and decrement operations for maintaining counts of letters~$a$,~$b$,
and~$c$ (calls to functions \cc{a()}, \cc{b()} and \cc{c()}) in the input string:
\begin{enumerate}
  \item The type of expression
\begin{JAVA}
a($⋯$a(begin())$⋯$)
\end{JAVA} ($k$ occurrences of~\cc{a}) is {$γ$1<$uₖ$,$uₖ$>},
      where~$uₖ$ is the type encoding of~$k$;
  \item the type of
\begin{JAVA}
b($⋯$b(a($⋯$a(begin())$⋯$))$⋯$)
\end{JAVA} ($n$ occurrences of~\cc{a},~$k$ of~\cc{b}) is
    \cc{$γ$2<$u_{n-k}$,$uₙ$>}; \emph{and}
  \item the type of expression
\begin{JAVA}
c($⋯$c(b($⋯$b(a($⋯$a(begin())$⋯$))$⋯$))$⋯$)
\end{JAVA} ($n$ occurrences of~\cc{a} and~\cc{b};~$k$ occurrences of
  of~\cc{c}) is \cc{$γ$3<$u_{n-k}$>}.
\end{enumerate}

For example, observe the (overloaded) definition of function \cc{b($·$)} in the listing,
\begin{JAVA}
static <x1, x2> γ2<x1, x2> b(γ1<Succ<x1>, x2> e) {¢¢ return null; }
\end{JAVA}
This version of \cc{b($·$)}, intended for expressions of the form
\begin{JAVA}
b(a($⋯$a(begin())$⋯$))
\end{JAVA}
converts \cc{$γ1$<$uₙ$,~$uₙ$>}, the type of its argument
to \cc{$γ2$<$u_{n-1}$,~$uₙ$>}.

Consider the general case expression
\begin{JAVA}
end(c($⋯$c(b($⋯$b(a($⋯$a(begin())$⋯$))$⋯$))$⋯$))
\end{JAVA}
and, starting at the inner most invocation, \cc{begin()}, whose type is
\cc{$γ1$<$u₀$,$u₀$>}, and tracing, bottom up, types of the successive nested
expressions, we see that:
\begin{itemize}
  \item First, a count of the~$a$'s is recorded in both arguments of
    generic~$γ₁$. This count is incremented with each call to \cc{a()}.
  \item Once the first~$b$ is seen, these arguments are passed to generic~$γ2$.
    The first argument of~$γ2$ is decremented with each~$b$ encountered. The
    second argument remains however unchanged during these encounters.
  \item This second argument is then passed to generic~$γ3$ when the first~$c$
    is encountered. It is then decremented for each~$c$ encountered.
  \item Function \cc{end} type-checks only if this argument is~$u₀$.
\end{itemize}

\subsection{Proof of \cref{theorem:PP:nonlinear}} \label{section:PP:nonlinear}

The Java definitions in \cref{listing:non-linear} realize the
language~$aⁿbⁿcⁿ∈\|CSL|$.

\begin{code}[
  label={listing:non-linear},
  caption={Definitions in type system \protect{\PP[non-linear]} (using Java syntax) for
  the language~$aⁿbⁿcⁿ$}]
interface γ1<x1, x2, x3> {¢¢ // Type after reading \b$aᵏ$ is \b$γ1\cc<uₖ,u₀,u₀\cc>$
  γ1<Succ<x1>, x2, x3> a(); // No phase change: increment the first type argument
  γ2<x1, Succ<x2>, x3> b(); // First \b$b$ seen: change phase, and increment second argument
}
interface γ2<x1, x2, x3> {¢¢ // Type after reading \b$aⁿbᵏ$ is \b$γ2\cc<uₙ,uₖ,u₀\cc>$
  γ2<x1, Succ<x2>, x3> b(); // No phase change: increment the second type argument
  γ3<x1, x2, Succ<x3>> c(); // First \b$c$ seen: change phase, and increment third argument
}
interface γ3<x1, x2, x3> {¢¢ // Type after reading \b$aⁿbᵐcᵏ$ is \b$γ3\cc<uₙ,uₘ,uₖ\cc>$
  γ3<x1, x2, Succ<x3>> c(); // No phase change: increment the third type argument
}
static γ1<Zero,Zero,Zero> begin() {¢¢ return null; } // Start with type \b$γ1\cc<u₀,u₀,u₀\cc>$
static <x> void end(γ3<x, x, x> e) {} // Accept only on type \b$γ3\cc<uₙ,uₙ,uₙ\cc>$ for some \b$n≥0$
static {// Test definitions in static initializer
  end(begin().a().a().a().b().b().b().c().c().c()); // Expression \b$e=e(a³b³c³)$ type-checks
  end(begin().a().a().a().b().b().c().c().c()); } // Expression \b$e=e(a³b²c³)$ does not type-check
\end{code}

The fluent API records the number of~$a$'s,~$b$'s and~$c$'s in three unary integer
encodings. The recording is in generic types~$γ1$,~$γ2$ and~$γ3$ (each taking three
type parameters). As before, type~$γ1$ is dedicated to the first phase in
which the~$a$'s are encountered, type~$γ2$ is to the second phase in which the~$b$'s
occur, and type~$γ3$ to the final phase in which the~$c$'s show.

When the entire input is read, the three counters are compared by function
\cc{end()}. This function relies on non-linearity, to check that they are
indeed equal.

\subsection{Proof of \cref{theorem:fluent}} \label{section:theorem:fluent}

Given is a fluent program~$P=ΔΞe$. We construct from the definitions~$Δ$
and~$Ξ$ \A[DPDA]{deep} automaton~$A$. Let~$e=𝜀.σ₁.⋯.σₙ$. Then,~$A$
accepts~$w=σ₁⋯σₙ$ if and only if~$P$ is type-correct.

The construction maintains the \emph{invariant} that after~$A$ consumes~$σᵢ$
and conducting all (if any) subsequent~$𝜀$-transitions, its stack contents
encodes~$tᵢ$, the type of the partial expression~$e=𝜀.σ₁.⋯.σᵢ$. Concretely,
since \Fluent is a \textsl{monadic} type system,~$tᵢ$ must be in the (full)
form~$γ₁(γ₂(⋯γₖ(𝛜)⋯))$. The stack encoding of~$tᵢ$ is~$γ₁γ₂⋯γₖ𝛜$, i.e., the
monadic abbreviation of the full form augmented with a designated symbol~$𝛜$ for
denoting the stack's bottom. For this reason, the set of stack symbols of~$A$
includes a symbol~$γ$ for every type name used in~$Δ∪Ξ$, and the extra
symbol~$𝛜$.
\nomenclature[H]{$𝛜$}{designated stack symbol denoting the bottom of the stack}

The set of internal states of~$A$ includes an initial and \emph{accepting}
state~$q₀$. The automaton will be in state~$q₀$ initially, and then whenever
it exhausted all possible~$𝜀$-transitions after consuming a letter, and is ready
to consume the next input symbol. Also,~$A$ has an internal (not-accepting)
state~$qᵩ$ for every auxiliary function name~$φ$ used in~$Ξ$. These states are
used while executing~$𝜀$-transitions, which emulate the resolution of the
rudimentary \kk{typeof} clauses allowed in \Fluent.

As in the proof of~\cref{theorem:TA:capturing:DCFL}, the \textsl{rudimentary-typeof} property of the type
systems makes it possible to classify any function definition in~$Δ∪Ξ$ as
either \emph{direct}, if its type signature is~$τ→τ'$, or as \emph{forwarding},
in case it is~$τ→\kk{typeof}~τ'.φ$.

Every \Fluent function is encoded in one (consuming- or~$𝜀$-) transition item
of~$A$. In this encoding, the function type signature uniquely determines the
stack rewrite rule~$ρ$, but unlike in the proof of
\cref{theorem:TA:capturing:DCFL},~$ρ$ is not identical to the type signature.

To see why, recall first that since \Fluent is \textsl{monadic}, we can write
any term~$τ$ as~$𝛄x$ where~$𝛄∈Γ^*$ (in the case~$τ$ is a proper term) or as~$𝛄$
(in the case it is a grounded). If a function's type is~$𝛄x→𝛄'$, then to
maintain the invariant,~$A$ needs to push the string~$𝛄'𝛜$ to stack after
emptying it, by popping first the~$𝛄$ fixed portion, and then the~$x$
variable portion which may of \emph{unbounded length}. Alas, this~$x$ portion
cannot be cleared with the single stack rewrite allowed in the single
transition encoding a \Fluent function.
\nomenclature[H]{$𝛄$}{a string of symbols drawn from alphabet~$Γ$}

For this reason, we use instead a stack rewrite~$ρ=𝛄x→𝛄'𝛜x$ in this case, i.e.,
emulating stack emptying by pushing another copy of~$𝛜$, the bottom of the
stack symbol. Automaton~$A$ is oblivious to the trick, since none of the
rewrites in its transitions of removes a~$𝛜$ symbol off the stack.

With the definition of~$ρ(τ→τ')$ by
\begin{equation}
  ρ(τ→τ')=
  \begin{cases}
    𝛄x→𝛄'x & \text{if~$τ=𝛄x$ and~$τ'=𝛄'x$} ⏎
    𝛄𝛜→𝛄'𝛜& \text{if~$τ=𝛄$ and~$τ'=𝛄'$}⏎
    𝛄x→𝛄'𝛜x & \text{if~$τ=𝛄x$ and~$τ'=𝛄'$}
  \end{cases}
\end{equation}
we can describe the transition encoding of each of the four kinds of functions
that may occur in~$P$.
\begin{enumerate}
 \item \emph{Primary function definitions}, found in~$Δ$, are encoded as
    consuming transitions of~$A$:
    \begin{enumerate}
      \item \emph{Direct definition}~$σ:τ→τ'$ as~$⟨σ,q₀,ρ(τ→τ'),q₀⟩$,
      \item \emph{Forwarding definition}~$σ:τ→\kk{typeof}~τ'.φ$ as~$⟨σ,q₀,ρ(τ→τ'),qᵩ⟩$.
    \end{enumerate}
   \item \emph{Auxiliary function definitions}, found in~$Ξ$, are encoded
    as~$𝜀$ transitions of~$A$:
    \begin{enumerate}
      \item \emph{Direct defintion}~$φ:τ→τ'$ as~$⟨qᵩ,ρ(τ→τ'),q₀⟩$.
      \item \emph{Forwarding definition}~$φ:τ→\kk{typeof}~τ'.φ'$ as~$⟨qᵩ,ρ(τ→τ'),q_{φ'}⟩$.
    \end{enumerate}
\end{enumerate}

We can now verify that automaton~$A$ iteratively computes the type of the
word-encoded input expression: Consuming transitions correspond to type
checking of primary function invocation, while~$𝜀$-transitions make the detour
required to compute the type of functions defined by a \kk{typeof} clause. If
the input expression fails type checking, then automaton~$A$ hangs (whereby
rejecting the input), due to failure to find an appropriate transition for the
current stack contents, internal state (and the current input symbol, when
appropriate).

\subsection{Proof of \protect\cref{theorem:Fluent:full}}
\label{section:Fluent:full}
  We present a set of \A[\Fluent]{full-typeof} definitions that encodes the
  language~$w＃w∈\|CSL|$.

\begin{code}[
     language=c++,
     caption={C++, \protect{\A[\Fluent]{full-typeof}} program recognizing the CSL~$w＃w$},
     label={listing:ww},morekeywords={typeof}
]
struct E {}; // Bottom type
template<typename T> struct A {}; // Generic type, stands for~$a$
template<typename T> struct B {}; // Generic type, stands for~$b$
template<typename T> struct S {}; // Generic type, stands for~$＃$
A<E> a() {} // Begin expression with~$a$
B<E> b() {} // Begin expression with~$b$
template<typename T> A<T> a(T) {} // Accumulate~$a$ to the expression
template<typename T> B<T> b(T) {} // Accumulate~$b$ to the expression
template<typename T> S<T> s(T) {} // Accumulate~$＃$ to the expression
template<typename T> auto \$(A<T>) {¢¢ return match_¢¢a(\$(T())); } // Expression has ended; eventually match~$a$
template<typename T> auto \$(B<T>) {¢¢ return match_¢¢b(\$(T())); } // Eventually match~$b$
template<typename T> auto \$(S<T>) {¢¢ return reverse(T()); } //~$＃$ encountered, reverse the second~$w$
template<typename T> auto reverse(A<T>) {¢¢ return append2end_¢¢a(reverse(T())); } // Append~$a$ to the end, reverse the rest
template<typename T> auto reverse(B<T>) {¢¢ return append2end_¢¢b(reverse(T())); } // Append~$b$ to the end, reverse the rest
E reverse(E) {} // Done reversing
template<typename T> auto append2end_¢¢a(A<T>) {¢¢ return append2start_¢¢a(append2end_¢¢a(T())); } // Reattach~$a$
template<typename T> auto append2end_¢¢a(B<T>) {¢¢ return append2start_¢¢b(append2end_¢¢a(T())); } // Reattach~$b$
A<E> append2end_¢¢a(E) {} // Append~$a$ to the end
template<typename T> auto append2end_¢¢b(A<T>) {¢¢ return append2start_¢¢a(append2end_¢¢b(T())); } // Reattach~$a$
template<typename T> auto append2end_¢¢b(B<T>) {¢¢ return append2start_¢¢b(append2end_¢¢b(T())); } // Reattach~$b$
B<E> append2end_¢¢b (E) {} // Append~$b$ to the end
template<typename T> A<T> append2start_¢¢a(T) {} // Append~$a$ to the word
template<typename T> B<T> append2start_¢¢b(T) {} // Append~$b$ to the word
template<typename T> T match_¢¢a(A<T>) {} // Match~$a$
template<typename T> T match_¢¢b(B<T>) {} // Match~$b$
int main() {¢¢
  E w1=\$(a(b(a(a(s(a(b(a(a()))))))))); // Expression encoding~$w=abaa＃abaa$ type-checks
  E w2=\$(a(b(a(a(s(a(b(b(a()))))))))); // Expression encoding~$w=abaa＃abba$ does not type-check
  E w3=\$(b(a(a(s(a(b(a(a())))))))); } // Expression encoding~$w=baa＃abaa$ does not type-check
\end{code}

The definitions first accumulate the input to a monadic type, e.g.,
where expression \cc{a(b(s(a(b()))))} is typed as \cc{A<B<S<A<B<E>{}>{}>{}>{}>},
type~\cc{E} is the bottom type
({C++} indentifiers~$s$ and~$S$ stand for~$＃$).
Actual computation is done only when reaching method \cc{\$}, which terminate all expressions.

Function \cc{\$} first traverses the first~$w$ of~$w＃w$, while replacing types~\cc{A}
and~\cc{B} with calls to \cc{match\_a} and \cc{match\_b} respectively.
Upon reaching type~\cc{S}, encoding~$＃$, function \cc{\$} encodes the second~$w$
as a type, and reverses it; then functions \cc{match\_a} and \cc{match\_b} proceed to match the
words in the correct order.
For example, expression \cc{\$(a(b(s(a(b()$⋯$)} changes first into \cc{match\_a(match\_b(\$(s(a(b()$⋯$)},
and then into \cc{match\_a(match\_b(B<A<E>{}>))}; next the \cc{match} functions
match~$b$ and then~$a$, and return the bottom type~\cc{E}, successfully terminating the
typing process. If the word before the~$＃$ differs from the word after it, this
matching is ought to fail (if typing has not yet failed).

Rest of the code deals with implementing function \cc{reverse}.
Function \cc{reverse} appends the current type~\cc{A} (resp.~\cc{B}) to
the end of the type, recursively, using function \cc{append2end\_a} (\cc{append2end\_b}).
Function \cc{append2end\_a} examines its argument \cc{A<T>} (\cc{B<T>}), replaces it with a
call to \cc{append2start\_a} (\cc{append2start\_b}) and continue recursively into~\cc{T};
type~\cc{A} (\cc{B}) is reattached after the process has ended.
Function \cc{append2end\_b} is implemented in a similar way.

\eject
\section{Supplementary Material}
\subsection{Full Encoding of a Turing Machine in~$\A[\PP]{deep,rudimentary}$}

\Cref{theorem:TA:capturing:TM} above showed that any Turing machine can be
encoded by a program in type system \[
  T=\A[\PP]{deep,rudimentary}
\] The proof of theorem used \cref{figure:Turing} depicting an example of
such a machine. For the sake of completeness, \cref{figure:Turing:cc}
here presents the full encoding in~$T$ of the Turing machine of \cref{figure:Turing}.

\begin{code}[language=c++,morekeywords={decltype},
caption={C++program encoding the Turing machine of \cref{figure:Turing}},
label={figure:Turing:cc},morekeywords={typeof}]
#define typeof decltype
template<typename x> struct B {};
struct E {};
template<typename x> struct a {};
template<typename x> struct b {};
template<typename xL, typename x, typename xR> struct O {};
template<typename xL, typename xR> E q4(O<xL, B<E>, xR>) {}
template<typename xL, typename xR> E q4(O<xL, a<E>, xR>) {}
template<typename xL, typename xR> E q4(O<xL, b<E>, xR>) {}
template<typename xL, typename xR> typeof(q4(O<xL, B<E>, B<xR>>())) q0(O<B<xL>, B<E>, xR>) {}
template<typename xR> typeof(q4(O<E, B<E>, B<xR>>())) q0(O<E, B<E>, xR>) {}
template<typename xL, typename xR> typeof(q4(O<xL, a<E>, B<xR>>())) q0(O<a<xL>, B<E>, xR>) {}
template<typename xL, typename xR> typeof(q4(O<xL, b<E>, B<xR>>())) q0(O<b<xL>, B<E>, xR>) {}
template<typename xL, typename xR> typeof(q1(O<B<xL>, B<E>, xR>())) q0(O<xL, a<E>, B<xR>>) {}
template<typename xL> typeof(q1(O<B<xL>, B<E>, E>())) q0(O<xL, a<E>, E>) {}
template<typename xL, typename xR> typeof(q1(O<B<xL>, a<E>, xR>())) q0(O<xL, a<E>, a<xR>>) {}
template<typename xL, typename xR> typeof(q1(O<B<xL>, b<E>, xR>())) q0(O<xL, a<E>, b<xR>>) {}
template<typename xL, typename xR> typeof(q2(O<xL, B<E>, B<xR>>())) q1(O<B<xL>, B<E>, xR>) {}
template<typename xR> typeof(q2(O<E, B<E>, B<xR>>())) q1(O<E, B<E>, xR>) {}
template<typename xL, typename xR> typeof(q2(O<xL, a<E>, B<xR>>())) q1(O<a<xL>, B<E>, xR>) {}
template<typename xL, typename xR> typeof(q2(O<xL, b<E>, B<xR>>())) q1(O<b<xL>, B<E>, xR>) {}
template<typename xL, typename xR> typeof(q1(O<a<xL>, B<E>, xR>())) q1(O<xL, a<E>, B<xR>>) {}
template<typename xL> typeof(q1(O<a<xL>, B<E>, E>())) q1(O<xL, a<E>, E>) {}
template<typename xL, typename xR> typeof(q1(O<a<xL>, a<E>, xR>())) q1(O<xL, a<E>, a<xR>>) {}
template<typename xL, typename xR> typeof(q1(O<a<xL>, b<E>, xR>())) q1(O<xL, a<E>, b<xR>>) {}
template<typename xL, typename xR> typeof(q1(O<b<xL>, B<E>, xR>())) q1(O<xL, b<E>, B<xR>>) {}
template<typename xL> typeof(q1(O<b<xL>, B<E>, E>())) q1(O<xL, b<E>, E>) {}
template<typename xL, typename xR> typeof(q1(O<b<xL>, a<E>, xR>())) q1(O<xL, b<E>, a<xR>>) {}
template<typename xL, typename xR> typeof(q1(O<b<xL>, b<E>, xR>())) q1(O<xL, b<E>, b<xR>>) {}
template<typename xL, typename xR> typeof(q3(O<xL, B<E>, B<xR>>())) q2(O<B<xL>, b<E>, xR>) {}
template<typename xR> typeof(q3(O<E, B<E>, B<xR>>())) q2(O<E, b<E>, xR>) {}
template<typename xL, typename xR> typeof(q3(O<xL, a<E>, B<xR>>())) q2(O<a<xL>, b<E>, xR>) {}
template<typename xL, typename xR> typeof(q3(O<xL, b<E>, B<xR>>())) q2(O<b<xL>, b<E>, xR>) {}
template<typename xL, typename xR> typeof(q0(O<B<xL>, B<E>, xR>())) q3(O<xL, B<E>, B<xR>>) {}
template<typename xL> typeof(q0(O<B<xL>, B<E>, E>())) q3(O<xL, B<E>, E>) {}
template<typename xL, typename xR> typeof(q0(O<B<xL>, a<E>, xR>())) q3(O<xL, B<E>, a<xR>>) {}
template<typename xL, typename xR> typeof(q0(O<B<xL>, b<E>, xR>())) q3(O<xL, B<E>, b<xR>>) {}
template<typename xL, typename xR> typeof(q3(O<xL, B<E>, a<xR>>())) q3(O<B<xL>, a<E>, xR>) {}
template<typename xR> typeof(q3(O<E, B<E>, a<xR>>())) q3(O<E, a<E>, xR>) {}
template<typename xL, typename xR> typeof(q3(O<xL, a<E>, a<xR>>())) q3(O<a<xL>, a<E>, xR>) {}
template<typename xL, typename xR> typeof(q3(O<xL, b<E>, a<xR>>())) q3(O<b<xL>, a<E>, xR>) {}
template<typename xL, typename xR> typeof(q3(O<xL, B<E>, b<xR>>())) q3(O<B<xL>, b<E>, xR>) {}
template<typename xR> typeof(q3(O<E, B<E>, b<xR>>())) q3(O<E, b<E>, xR>) {}
template<typename xL, typename xR> typeof(q3(O<xL, a<E>, b<xR>>())) q3(O<a<xL>, b<E>, xR>) {}
template<typename xL, typename xR> typeof(q3(O<xL, b<E>, b<xR>>())) q3(O<b<xL>, b<E>, xR>) {}
int main() {¢¢
  E w1=q0(O<E, a<E>, a<a<a<b<b<b<b<E>>>>>>>>()); // compiles, \b$w₁=a⁴b⁴∈aⁿbⁿ$
  E w2=q0(O<E, a<E>, a<a<a<b<a<b<b<E>>>>>>>>()); // does not compile, \b$w₂=a⁴bab²∉aⁿbⁿ$
  E w3=q0(O<E, a<E>, a<a<b<b<b<b<E>>>>>>>()); // does not compile, \b$w₃=a³b⁴∉aⁿbⁿ$
}
\end{code}

\subsection{Fluent API for the Language of Palindromes}
\label{section:palindrome}

Here we demonstrate \cref{theorem:ada} and its proof, by constructing a fluent
API library for palindromes in an Ada like type system, i.e., a type system
with \textsl{eventually-one-type} style of overloading resolutions.

Consider the formal language of even length palindromes over
alphabet~$❴a,b❵$, as defined by the following context free grammar
\begin{equation}\label{eq:palindrome}
  \begin{aligned}
      𝛜&→a𝛜a ⏎
      &→b𝛜b ⏎
      &→𝜀.
  \end{aligned}
\end{equation}
It is well known that the language \eq{palindrome} is not-deterministic yet
unambiguous. Rewriting its grammar in Greibach normal form gives
\begin{equation}\label{eq:Greibach}
\begin{aligned}
  𝛜 & →aγ₁ ⏎
   & →bγ₂ ⏎
  γ₁ & →aγ₁γ₃ ⏎
   & →bγ₂γ₃ ⏎
   & →a ⏎
  γ₂&→aγ₁γ₄⏎
   & →bγ₂γ₄⏎
   & →b ⏎
  γ₃&→a ⏎
  γ₄&→b.
\end{aligned}
\end{equation}

Applying the construction in the proof of \cref{theorem:ada} to the
grammar~\eq{Greibach} gives the program in \cref{listing:palindrome}, that
realizes a fluent API for \eq{palindrome}.

\captionsetup[lstlisting]{format=pruled}
\begin{code}[style=java11,style=pseudo,
caption={Definitions in type system \protect{\A{monadic, eventually-one-type}} (using Java-like syntax)
encoding the language of even lengthed palindromes},label={listing:palindrome}
]
interface¢¢ ε {¢¢
  γ1<\$> a();
  γ2<\$> b();
}
interface¢¢ γ1<T> {¢¢
  γ1<γ3<T>> a();
  γ2<γ3<T>> b();
  T a();// Java error, overloaded functions cannot differ only by return type
}
interface¢¢ γ2<T> {¢¢
  γ1<γ4<T>> a();
  γ2<γ4<T>> b();
  T b(); // Java error, overloaded functions cannot differ only by return type
}
interface¢¢ γ3<T> {¢¢
  T a();
}
interface¢¢ γ4<T> {¢¢
  T b();
}
interface \$ {¢¢
  void \$();
}

new¢¢ ε().a().a().b().b().a().a().\$();
\end{code}

Note that even though the program in the listing uses Java syntax, it would
{not} provide the desired result if compiled by a Java compiler. The reason
is that Java does not permit multiple types for sub-expressions.


Expression \cc{\kk{new}~$\varepsilon$().a().a().b().b().a().a()} in \cref{listing:palindrome} is
phrased as~$aabbaa$---with this prefix, the center of the word (denoted by `$·$'),
separating~$w$ from~$w^R$, can be in three places:~$aab·$$baa$, in case~$w=aab$,
$aabba·$$a$, in case~$w=aabba$, or~$aabbaa·$, in case~$w=aabbaaσ^*$.
These three possibilities correspond to three types deduced for the expression.
Yet, when reaching method \cc{\$()}, the type checker settles the ambiguity
to the favor of the first option, as only after reading~$ww^R$ type \cc{\$}
with method \cc{\$()} is returned.
As there is exactly one way to type the entire expression, type checking is
successful.

\subsection{On the Complexity of Deep Polyadic Parametric Polymorphism}
\label{section:deep}
We take particular interest in type system~$\PP[deep]$, since it models
generic non-method functions. Also, this type system might be applicable
for the software engineering applications mentioned in \cref{section:zz}.

We don't know the exact complexity class of~$\PP[deep]$, but here are few
comments and observations that might be useful towards characterizing it.

\begin{enumerate}
  \item A tree automaton with~$𝜀$-transitions is even more potent
    than a two-pushdown automaton, which is equivalent to a Turing machine.
    This equivalence does not hold for the tree automaton in point, which
    is real-time.
  \item A direct comparison of our real-time (and hence linear time) tree
    automata to real-time (or linear time) Turing machines is not
    possible, since an elementary operation of tree automata may involve
    transformations of trees whose size may be exponential.

  \item We can still describe an emulation of the computation of \emph{real-time
    Turing machine} (RTM, see \cref{table:automata} above) by a deep tree
    automaton, by breaking the machine's tape into two stacks, and store these
    stacks as branches of the same tree,
    $\|RTM|⊆\A[TA]{deep}$.
    Let RTM$ₙ$ be an RTM equipped with~$n≥0$ linear bounded tapes. A classical
    result of Rabin~\citeyear{Rabin:63} separates the class~$\|RTM|=\|RTM|₁$
    from~$\|RTM|₂$, showing~$|RTM|₁⊊\|RTM|₂$.
    Subsequently, Bruda and Akl \citeyear{Bruda:Akl} generalized Rabin's
    result for any number of tapes, showing that~$\|RTM|ₙ⊊\|RTM|_{n+1}$, for
    all~$n≥1$. Extending the tree automaton emulation of RTMs, to run concurrently on any
    (fixed) number of tapes, we obtain that the entire non-collapsing hierarchy
    of RTMs is contained in \A[TA]{deep}, i.e, that~$\|RTM|ₙ⊆\A[TA]{deep}$ for
    all~$n≥1$.

  \item It does not seem likely that a linear time tree automata can recognize an
    arbitrary context sensitive language, a problem which is known to be PSPACE
    complete~\cite{Karp:1972}. We conjecture that~$\A[TA]{deep}\subsetneq\|CSL|$.

  \item A hint to the complexity of class~$\PP[deep]$ may be found in the fact
    that it is closed under finite intersection and finite union. (The proof
    is by merging the respective tree automata by running their rewrites in tandem on two
    distinct branches of the same tree. The merged automata recognizes
    intersection if there is an accept both branches; it recognizes the
    union, if there is an accept in one of the branches.)

  \item On the other hand, we claim that \A[TA]{deep} is not closed under
    complement (equivalently, set difference):
    Consider (yet again) the language~$aⁿbⁿcⁿ∈\A[TA]{deep}$. If there was an
    automaton that recognizes the complement of the language, it should accept
    the word~$abca$, but reject its prefix~$abc$. Alas, a stateless automaton
    such as ours, can only reject by reaching a configuration where there are
    no further legal transitions, and hence cannot recover from the rejection
    of this prefix.

    We are however able to show that \A[TA]{stateful} is closed under
    complement.

  \item Coquidé et al\@.~\citeyear{coquide1994bottom} discuss tree automata
    models similar to ours, and show that some restrictions on pattern depths and
    signature ranks are interchangeable.
\end{enumerate}

\end{document}